\newcommand{\rej}{\beta_{\rmr,j}}
\newcommand{\mLD}{\widetilde{\mathrm{LD}}}
\newcommand{\openone}{\leavevmode\hbox{\small1\normalsize\kern-.33em1}}
\def\UrlSpecials{\do\~{\kern -.15em\lower .7ex\hbox{~}\kern .04em}} \catcode`~=13 
\newcommand{\nn}{\nonumber}
\newcommand{\calJ}{\mathcal{J}}
\newcommand{\calL}{\mathcal{L}}
\newcommand{\calP}{\mathcal{P}}
\newcommand{\calQ}{\mathcal{Q}}
\newcommand{\calS}{\mathcal{S}}
\newcommand{\calT}{\mathcal{T}}
\newcommand{\calV}{\mathcal{V}}
\newcommand{\calW}{\mathcal{W}}
\newcommand{\calX}{\mathcal{X}}
\newcommand{\calY}{\mathcal{Y}}
\newcommand{\calZ}{\mathcal{Z}}
\newcommand{\ba}{\mathbf{a}}
\newcommand{\bb}{\mathbf{b}}
\newcommand{\be}{\mathbf{e}}
\newcommand{\bP}{\mathbf{P}}
\newcommand{\bQ}{\mathbf{Q}}
\newcommand{\bT}{\mathbf{T}}
\newcommand{\bv}{\mathbf{v}}
\newcommand{\by}{\mathbf{y}}
\newcommand{\bY}{\mathbf{Y}}
\newcommand{\bz}{\mathbf{z}}
\newcommand{\bZ}{\mathbf{Z}}
\newcommand{\rmc}{\mathrm{c}}
\newcommand{\rme}{\mathrm{e}}
\newcommand{\rmG}{\mathrm{G}}
\newcommand{\rmH}{\mathrm{H}}
\newcommand{\rmI}{\mathrm{I}}
\newcommand{\rmP}{\mathrm{P}}
\newcommand{\rmr}{\mathrm{r}}
\newcommand{\rmT}{\mathrm{T}}
\newcommand{\bbN}{\mathbb{N}}
\newcommand{\bbP}{\mathbb{P}}
\newcommand{\bbR}{\mathbb{R}}
\DeclareMathAlphabet{\mathbsf}{OT1}{cmss}{bx}{n}
\DeclareMathAlphabet{\mathssf}{OT1}{cmss}{m}{sl}
\DeclareSymbolFont{bsfletters}{OT1}{cmss}{bx}{n}  
\DeclareSymbolFont{ssfletters}{OT1}{cmss}{m}{n}
\DeclareMathSymbol{\bsfGamma}{0}{bsfletters}{'000}
\DeclareMathSymbol{\ssfGamma}{0}{ssfletters}{'000}
\DeclareMathSymbol{\bsfDelta}{0}{bsfletters}{'001}
\DeclareMathSymbol{\ssfDelta}{0}{ssfletters}{'001}
\DeclareMathSymbol{\bsfTheta}{0}{bsfletters}{'002}
\DeclareMathSymbol{\ssfTheta}{0}{ssfletters}{'002}
\DeclareMathSymbol{\bsfLambda}{0}{bsfletters}{'003}
\DeclareMathSymbol{\ssfLambda}{0}{ssfletters}{'003}
\DeclareMathSymbol{\bsfXi}{0}{bsfletters}{'004}
\DeclareMathSymbol{\ssfXi}{0}{ssfletters}{'004}
\DeclareMathSymbol{\bsfPi}{0}{bsfletters}{'005}
\DeclareMathSymbol{\ssfPi}{0}{ssfletters}{'005}
\DeclareMathSymbol{\bsfSigma}{0}{bsfletters}{'006}
\DeclareMathSymbol{\ssfSigma}{0}{ssfletters}{'006}
\DeclareMathSymbol{\bsfUpsilon}{0}{bsfletters}{'007}
\DeclareMathSymbol{\ssfUpsilon}{0}{ssfletters}{'007}
\DeclareMathSymbol{\bsfPhi}{0}{bsfletters}{'010}
\DeclareMathSymbol{\ssfPhi}{0}{ssfletters}{'010}
\DeclareMathSymbol{\bsfPsi}{0}{bsfletters}{'011}
\DeclareMathSymbol{\ssfPsi}{0}{ssfletters}{'011}
\DeclareMathSymbol{\bsfOmega}{0}{bsfletters}{'012}
\DeclareMathSymbol{\ssfOmega}{0}{ssfletters}{'012}
\newcommand{\tilP}{\tilde{P}}
\newcommand{\tilQ}{\tilde{Q}}
\newcommand{\tilT}{\tilde{T}}
\newcommand{\tilx}{\tilde{x}}
\newcommand{\tily}{\tilde{y}}
\newcommand{\tilY}{\tilde{Y}}
\newcommand{\barP}{\bar{P}}
\newcommand{\dotleq}{\stackrel{.}{\leq}}
\newcommand{\dotgeq}{\stackrel{.}{\geq}}
\DeclareMathOperator*{\argmax}{arg\,max}
\DeclareMathOperator*{\argmin}{arg\,min}
\DeclareMathOperator{\supp}{supp}
\newtheorem{theorem}{Theorem} 
\newtheorem{lemma}[theorem]{Lemma}
\newtheorem{corollary}[theorem]{Corollary}
\newtheorem{conjecture}[theorem]{Conjecture}
\def\BibTeX{{\rm B\kern-.05em{\sc i\kern-.025em b}\kern-.08em
		T\kern-.1667em\lower.7ex\hbox{E}\kern-.125emX}}
\newcommand\figcaption{\def\@captype{figure}\caption} 
\newcommand\tabcaption{\def\@captype{table}\caption} 
\title{Distributed Detection with \\ Empirically Observed Statistics 
}
\author{\IEEEauthorblockN{Haiyun He, {\em Student Member, IEEE} $\quad$ Lin Zhou, {\em Member, IEEE}$\quad$   Vincent Y.~F.~Tan, {\em Senior Member, IEEE} }   
\thanks{This work is   funded by a Singapore National Research Foundation Fellowship (R-263-000-D02-281) and the Research Scholar Budget (RSB) from NUS (C-261-000-207-532 and C-261-000-005-001).} 
\thanks{This paper was   presented in part at the IEEE Information Theory Workshop in Visby, Gotland, Sweden, 2019.}
\thanks{H.~He and V.~Y.~F.~Tan are with the Department of Electrical and Computer Engineering, National University of Singapore (NUS) (Emails:  haiyun.he@u.nus.edu and  vtan@nus.edu.sg).  V.~Y.~F.~Tan is also with the Department of Mathematics, NUS. L.~Zhou is with the  Department of Electrical Engineering and Computer Science, University of Michigan, Ann Arbor (Email:  linzhou@umich.edu).}
\thanks{Copyright (c) 2017 IEEE. Personal use of this material is permitted.  However, permission to use this material for any other purposes must be obtained from the IEEE by sending a request to pubs-permissions@ieee.org.}}
\begin{document}
\maketitle

\begin{abstract}
Consider a distributed detection problem in which the underlying distributions of the observations are unknown;  instead of these distributions, noisy versions of empirically observed statistics are available to the fusion center. These empirically observed statistics, together with source (test) sequences, are transmitted through different channels to the fusion center. The fusion center decides which distribution the source sequence is sampled from based on these data. For the binary case, we derive the optimal type-II error exponent given that the type-I error decays exponentially fast. The type-II error exponent is maximized over the proportions of channels for both source and training sequences. We conclude that as the ratio of the lengths of training to test sequences $\alpha$ tends to infinity, using only one channel is optimal.  By calculating the derived exponents numerically, we conjecture that the same is true when $\alpha$ is finite under certain conditions.  We relate our results to the classical distributed detection problem studied by Tsitsiklis, in which the underlying distributions are known. Finally, our results are extended to the case of $m$-ary distributed detection with a rejection option.
\end{abstract}
 \begin{IEEEkeywords} 
 Distributed detection, Error exponents, Training samples, Hypothesis testing
 \end{IEEEkeywords} 

\section{Introduction}

 The problem of distributed detection~\cite{tenney1981detection,tsitsiklis1988decentralized} has a plethora of applications, such as in distributed radar and sensor networks; see~\cite{chamberland2007wireless}  and references therein for an overview. In these examples, the observed information at local sensors (processors) needs to be quantized before being sent to a fusion center. The fusion center then performs a specific inference task  such as hypothesis testing. 
 
 In the traditional distributed detection problem as studied in~\cite{tenney1981detection,tsitsiklis1988decentralized,chamberland2007wireless,tay2009bayesian}, the underlying   generating distributions  are available at the fusion center and one is tasked to design a test based on observations as well as the known distributions.
 However, in \emph{practical} applications, the fusion center has no knowledge of the underlying distributions and may only be given quantized   or noisy observations and   labelled training sequences (in place of the generating distributions). This leads to   new challenges in designing optimal   tests.


Motivated by these practical issues and inspired by \cite{gutman1989asymptotically,tsitsiklis1988decentralized}, in this paper, we adopt a contemporary statistical learning approach and consider the distributed detection problem as shown, for the binary case, in Figure~\ref{fig:model} in which the distributions of sensor observations are \emph{unknown}. We term this problem as {\em distributed detection with empirically observed statistics}. 
We assume that the sensor observations are transmitted to the fusion center via different channels, which can also be regarded as compressors. Labelled training sequences generated from the different underlying distributions are pre-processed then provided to the fusion center. Our aim is to derive   fundamental performance limits of the classification problem as well as to potentially come up with the same conclusions as Tsitsiklis did in \cite{tsitsiklis1988decentralized}, i.e., to conclude that a small number of distinct channels or local decision rules suffices  to attain the optimal error exponent. 
\subsection{Main Contributions}
In this paper, our main contributions are as follows.

Firstly, for the binary distributed detection problem, we derive the asymptotically optimal type-II error exponent when the type-I error exponent is lower bounded by a positive constant. In the achievability proof, we introduce a generalized version of Gutman's test in \cite{gutman1989asymptotically} and prove that the so-designed test is asymptotically optimal.

Secondly, again restricting ourselves to the binary case, we discuss the optimal proportions of different channels that serve as  pre-processors of the training and source sequences. Let $\alpha$, a constant,  denote the ratio between the length of the training sequence and the length of the source sequence. When  $\alpha\to\infty$, we provide a closed-form expression for the type-II error exponent and prove that using only {\em one identical} channel for both training and source sequences is asymptotically optimal. This mirrors Tsitsiklis' result~\cite{tsitsiklis1988decentralized}. On the other hand, if $\alpha$ is sufficiently small, the type-II error exponent is identically equal to zero. When $\alpha$ does not take extreme values,   by calculating the derived exponent numerically, we conjecture that using one channel for the training sequence and another (possibly the same one) for the  source sequence  is optimal under certain conditions.

Thirdly, we relate our results to the classical distributed detection problem in Tsitsiklis' paper~\cite{tsitsiklis1988decentralized}. When $\alpha\to\infty$, the true distributions can be estimated to arbitrary accuracy and we naturally recover the results in \cite{tsitsiklis1988decentralized} for both the Neyman-Pearson and Bayesian settings.

Finally, we extend our analyses to consider an $m$-ary distributed detection problem with   rejection. We derive the asymptotically optimal type-$j$ rejection exponent for each $j\in[m]$ under the condition that all (undetected) error exponents are lower bounded by a positive constant $\lambda$. In the achievability proof, we introduce a generalized version of Unnikrishnan's test~\cite{unnikrishnan2015asymptotically} by identifying  an appropriate test statistic. 

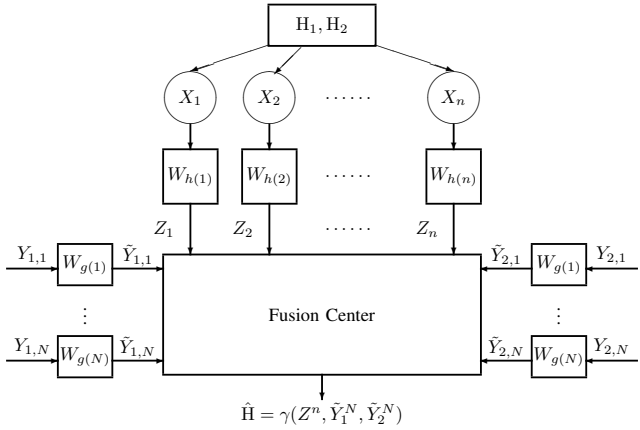
\begin{figure}[tb]
\centering
\setlength{\unitlength}{0.5cm}
\scalebox{0.7}{
\begin{picture}(10,16)(0,-12)
\linethickness{1pt}
\put(3,3){\framebox(4,1.5){$\rmH_1,\rmH_2$}}
\put(3,3){\vector(-3,-1){3}}
\put(4.2,2.9){\vector(-1,-1){1}}
\put(7,3){\vector(3,-1){3}}
\put(0,1){\circle{2}\makebox(0,0){$X_1$}}
\put(0,0){\vector(0,-1){1}}
\put(-1,-3){\framebox(2,2){$W_{h(1)}$}}
\put(0,-3){\vector(0,-1){2}}
\put(-1,-4){\makebox(0,0){$Z_1$}}

\put(3,1){\circle{2}\makebox(0,0){$X_2$}}
\put(3,0){\vector(0,-1){1}}
\put(2,-3){\framebox(2,2){$W_{h(2)}$}}
\put(3,-3){\vector(0,-1){2}}
\put(2,-4){\makebox(0,0){$Z_2$}}

\put(6,1){\makebox(0,0){$\ldots\ldots$}}

\put(10,1){\circle{2}\makebox(0,0){$X_n$}}
\put(10,0){\vector(0,-1){1}}
\put(6,-2){\makebox(0,0){$\ldots\ldots$}}
\put(9,-3){\framebox(2,2){$W_{h(n)}$}}
\put(10,-3){\vector(0,-1){2}}
\put(6,-4){\makebox(0,0){$\ldots\ldots$}}
\put(9,-4){\makebox(0,0){$Z_n$}}

\put(-1,-9.5){\framebox(12,4.5){Fusion Center}}
\put(-7,-5.5){\vector(1,0){2}}
\put(-6,-5){\makebox(0,0){$Y_{1,1}$}}
\put(-5,-6.1){\framebox(2,1.5){$W_{g(1)}$}}
\put(-3,-5.5){\vector(1,0){2}}
\put(-2,-5){\makebox(0,0){$\tilY_{1,1}$}}

\put(-4,-7.1){\makebox(0,0){$\vdots$}}

\put(-7,-9){\vector(1,0){2}}
\put(-6,-8.5){\makebox(0,0){$Y_{1,N}$}}
\put(-5,-9.6){\framebox(2,1.5){$W_{g(N)}$}}
\put(-3,-9){\vector(1,0){2}}
\put(-2,-8.5){\makebox(0,0){$\tilY_{1,N}$}}

\put(13,-5.5){\vector(-1,0){2}}
\put(12,-5){\makebox(0,0){$\tilY_{2,1}$}}
\put(13,-6.1){\framebox(2,1.5){$W_{g(1)}$}}
\put(17,-5.5){\vector(-1,0){2}}
\put(16,-5){\makebox(0,0){$Y_{2,1}$}}

\put(14,-7.1){\makebox(0,0){$\vdots$}}

\put(13,-9){\vector(-1,0){2}}
\put(12,-8.5){\makebox(0,0){$\tilY_{2,N}$}}
\put(13,-9.6){\framebox(2,1.5){$W_{g(N)}$}}
\put(17,-9){\vector(-1,0){2}}
\put(16,-8.5){\makebox(0,0){$Y_{2,N}$}}

\put(5,-9.5){\vector(0,-1){1}}
\put(5,-11){\makebox(0,0){$\hat{\rmH}=\gamma(Z^n,\tilY_1^N,\tilY_2^N)$}}
\end{picture}
}
\caption{System model for distributed detection with empirically observed statistics. Functions $h$ and $g$ represent {\em index mapping functions}. See Fig.~\ref{Fig:ba} for an illustration of $h(\cdot)$. The primary question in this paper is as follows: Given a set of channels  $\{W_j\}_{j\in[K]}$, what are the relative proportions of $W_j$'s that optimize the error exponent?  For binary classification, when is using {\em one} channel optimal (cf.~\cite{tsitsiklis1988decentralized})? See Sec.~\ref{sec:further} for   partial solutions.}
\label{fig:model}
\end{figure}

\subsection{Related Works}
The distributed detection literature is vast and so it would be futile to review all existing works. This paper, however, is mainly inspired by \cite{gutman1989asymptotically} and~\cite{tsitsiklis1988decentralized}. In \cite{gutman1989asymptotically}, Gutman proposed an asymptotically optimal type-based test for the binary classification problem. In \cite{tsitsiklis1988decentralized}, Tsitsiklis showed that using $\frac{1}{2}{m(m-1)} $ distinct local decision rules is optimal for $m$-ary hypotheses testing in standard Bayesian and Neyman-Pearson  distributed detection settings. Ziv~\cite{ziv1988classification} proposed a discriminant function related to universal data compression in the binary classification problem with empirically observed statistics. Chamberland and Veeravalli~\cite{chamberland2003decentralized} considered the classical distributed detection in a sensor network with a multiple access channel, capacity constraint and additive noise. Liu and Sayeed~\cite{liu2007type} extended the type-based distributed detection to wireless networks. Chen and Wang~\cite{chen2019anonymous} studied the anonymous heterogeneous distributed detection problem and quantified the price of anonymity.  Tay, Tsitsiklis and Win studied tree-based variations of the distributed detection problem in the Bayesian~\cite{tay2009bayesian} and Neyman-Pearson settings~\cite{tay2008data}. The authors also studied Bayesian distributed detection in a tandem sensor network\cite{tay2008subexponential}. The aforementioned works assume that the distributions are known. 

Nguyen, Wainwright and Jordan\cite{nguyen2005nonparametric} proposed a kernel-based algorithm for the nonparametric distributed detection problem with communication constraints. Similarly, Sun and Tay\cite{sun2016privacy} also studied   nonparametric distributed detection networks using kernel methods and in the presence of privacy constraints. While the problem settings in~\cite{nguyen2005nonparametric} and~\cite{sun2016privacy} involve training samples, the questions posed there are algorithmic in nature and hence, different. In particular, they  do not involve fundamental limits in the spirit of this paper. 

\subsection{Paper Outline}
The rest of this paper is organized as follows. In Section~\ref{Sec:main results}, we formulate the distributed detection problem with empirically observed statistics. We also present the optimal type-II error exponent and  analyze the optimal proportion of channels and recover analogues of the results in \cite{tsitsiklis1988decentralized} both for Neyman-Pearson and Bayesian settings. In Section~\ref{sec:mary}, we extend our results to the case in which there are $m\ge 2$ hypotheses and the rejection option is present.  We conclude our discussion and present avenues for future work in Section~\ref{sec:concl}. The proofs of our results are provided in the appendices.

\subsection{Notation}
Random variables and their realizations are in upper (e.g.,  $X$) and lower case (e.g.,  $x$) respectively. All sets are denoted in calligraphic font (e.g.,  $\mathcal{X}$). We use $\calX^{\mathrm{c}}$ to denote the complement of $\calX$. Let $X^n:=(X_1,\ldots,X_n)$ be a random vector of length $n$. All logarithms are base $e$.  Given any two integers $(a,b)\in\bbN^2$, we use $[a:b]$ to denote the set of integers $\{a,a+1,\ldots,b\}$ and use $[a]$ to denote $[1:a]$. The set of all probability distributions on a finite set $\calX$ is denoted as $\calP(\calX)$ and the set of all conditional probability distributions from $\calX$ to $\calY$ is denoted as $\calP(\calY|\calX)$. Given $P\in\calP(\calX)$ and $V\in\calP(\calY|\calX)$, we use $PV$ to denote the marginal distribution on $\calY$ induced by $P$ and $V$. We denote the support of $P$ as $\supp(P)$. Given a vector $x^n = (x_1,x_2,\ldots,x_n) \in\calX^n$, the {\em type} or {\em empirical distribution}~\cite{Csi97} is denoted as $T_{x^n}(a)=\frac{1}{n}\sum_{i=1}^n\mathbbm{1}\{x_i=a\}$ where $a\in\calX$. We interchangeably use $\calT_{T_{x^n}}^n$ and $\calT_{x^n}:=\{\tilx^n\in\calX^n:T_{\tilx^n}(a)=T_{x^n}(a),~\forall\, a\in\calX \}$ to denote the type class of $T_{x^n}$. Let $\calP_n(\calX)$ denote the set of types with denominator $n$.  For two positive sequences $\{a_n\}$ and $\{b_n\}$, we write $a_n\dotleq b_n$ if $\limsup_{n\to\infty}\frac{1}{n}\log \frac{a_n}{b_n}\le 0$. The notations $\dotgeq$ and $\doteq$ are defined similarly. For a given vector $\ba \in\bbR^d$, we let $\mathrm{supp}(\ba):=\{i \in [d] : a_i \ne 0\}$ denote the support   of $\ba$.

\section{Binary Distributed Detection with Training Samples}\label{Sec:main results}

In this section, we formulate the problem in which there are two hypotheses and instead of distributions, only training samples are available.
\subsection{Problem Formulation}
We assume that there are $K$ fixed  {\em compressors} or {\em channels} (these are called {\em local decision rules} in \cite{tsitsiklis1988decentralized}), where for each $j\in[K]$, the $j$-th channel is  $W_j\in\calP(\calZ|\calX)$. This channel has  input alphabet $\calX=[M]$  and output alphabet $\calZ=[L]$.  For  notational simplicity, we assume that $|\calX|=M<\infty$ but our results go through for uncountably infinite $\calX$ as well.  We let $\calW:=\{W_j\}_{j\in[K]}$ be a fixed set of channels. Furthermore, let $h:[n]\mapsto[K]$ and $g:[N]\mapsto[K]$ to be functions that map the index of the test/training sample to the channel index.

The system model is as follows (see Figure~\ref{fig:model}). There are $n$ sensors and a source/test sequence $X^n$ generated i.i.d.\ according to some unknown distribution defined on $\calX$. For each $i\in[n]$, the $i$-th sensor observes $X_i\in\calX$ and maps it to $Z_i$ using the channel $W_{h(i)}$. The $Z_i$'s from all local sensors are transmitted to a fusion center. In addition to $Z_i$'s, the fusion center observes two noisy versions of training sequences $(Y_1^N,Y_2^N)\in\calX^{2N}$ which are  generated i.i.d.\ according to some \emph{unknown} but fixed distributions $(P_1,P_2)\in\calP(\calX)^2$. The fusion center observes noisy sequences $(\tilY_1^N,\tilY_2^N)$, where $\tilY_{1,i}\sim W_{g(i)}(\cdot|Y_{1,i})$ and $\tilY_{2,i}\sim W_{g(i)}(\cdot|Y_{2,i})$  for all $ i\in[N]$.
 With $(\tilY_1^N,\tilY_2^N)$ and $Z^n$, the fusion center uses a decision rule $\gamma:[L]^{2N+n}\mapsto\rm\{H_1,H_2\}$ to discriminate between the following two hypotheses:  
\begin{itemize}
\item $\rmH_1$: the source sequence $X^n$ and the training sequence $Y_1^N$ are generated according to the same distribution;
\item $\rmH_2$: the source sequence $X^n$ and the training sequence $Y_2^N$ are generated according to the same distribution.
\end{itemize}

%
\begin{figure}[t]
\centering \hspace{-1in}
\begin{overpic}[scale=0.71,unit=1mm]{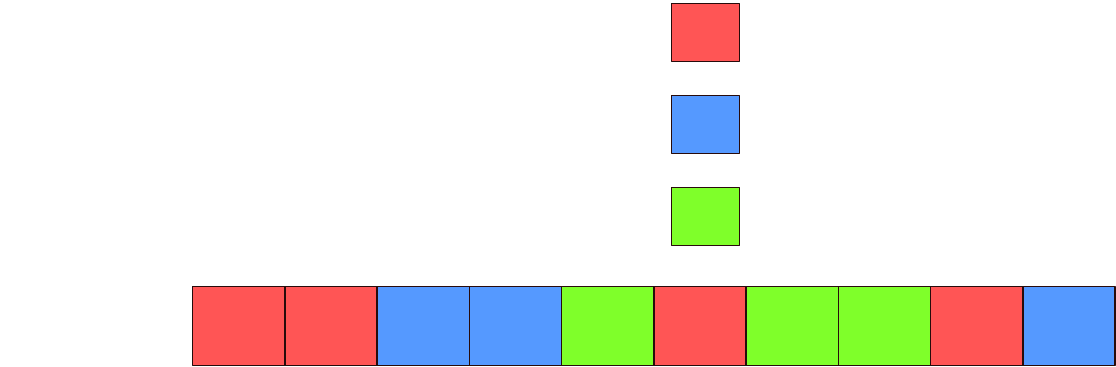}
\put(10,03){\small $z^{10}$}
\put(68,29){\small $h(1) = h(2) = h(6)=h(9)=1$}
\put(68,21){\small $h(3) = h(4) = h(10)=2$}
\put(68,13){\small $h(5) = h(7) = h(8)=3$}
\end{overpic}
\caption{An illustration for $\ba$ and $h$ when $K=3$ and $n=10$. Here we have $a_1=\frac{4}{10}$, $a_2=\frac{3}{10}$, $a_3=\frac{3}{10}$. }
	\label{Fig:ba}
\end{figure}

We assume $N=\lceil \alpha n\rceil$ for some $\alpha\in\bbR_+$.\footnote{We ignore the integer constraints of $(n,N)$ and write $N=\alpha n$.} For each $j\in[K]$, we use $a_j^{(n)}$ and $b_j^{(n)}$ to denote the proportions of $[n]$ and $[N]$ in which the channel $W_j$ is used to process the source and training sequences respectively, i.e.,
\begin{align}
\!\!a_j^{(n)}\! :=\! \frac{\sum_{i\in[n]}\!\mathbbm{1}\{h(i)\! =\! j\}}{n},\quad b_j^{(n)}\! :=\! \frac{\sum_{i\in[\alpha n]}\!\mathbbm{1}\{g(i)\! =\! j\} }{\alpha n}.
\end{align}
An example is given in Figure \ref{Fig:ba}. Furthermore, we let $\ba^{(n)}=(a_1^{(n)},\ldots,a_K^{(n)})$ and $\bb^{(n)}=(b_1^{(n)},\ldots,b_K^{(n)})$. We assume that the following limits  exist:
\begin{align}\label{Eq:a b limits}
	a_j:=\lim_{n\to\infty}a_j^{(n)},\quad b_j:=\lim_{n\to\infty}b_j^{(n)}, \quad\forall j\in[K].
\end{align}
To avoid clutter in subsequent mathematical expressions, we abuse notation subsequently and drop the superscript $(n)$ in $\ba^{(n)}$ and $\bb^{(n)}$ in all non-asymptotic expressions, with the understanding that $\ba$ (resp.\ $a_j$) appearing in a non-asymptotic expression should be interpreted as $\ba^{(n)}$ (resp.\ $a_j^{(n)}$).

Given any decision rule $\gamma$ at the fusion center and any pair of distributions $(P_1,P_2)$ according to which the training sequences $(Y_1^N,Y_2^N)$ are generated, the performance metrics we consider are the type-I and type-II error probabilities 
\begin{align}
 \beta_\nu(\gamma,P_1,P_2|\ba,\bb,\calW)= \bbP_\nu\{\gamma(Z^n,\tilY_1^N,\tilY_2^N)\! \neq\! H_\nu\},
\end{align}
where for $\nu\in [2]$, we use $\mathbb{P}_\nu:=\Pr\{\cdot|\rmH_\nu\}$ to denote the joint distribution of $Z^n$ and $(\tilY_1^N,\tilY_2^n)$ under hypothesis $\rmH_\nu$. In the remainder of this paper, we use $\beta_\nu(\gamma,P_1,P_2)$ to denote $\beta_\nu(\gamma,P_1,P_2|\ba,\bb,\calW)$  if there is no risk of confusion.

Inspired by \cite{gutman1989asymptotically}, in this paper, we are interested in the maximal type-II error exponent with respect to a pair of target distributions for any decision rule at the fusion center whose type-I error probability decays exponentially fast with a certain fixed exponential  rate for all pairs of distributions, i.e., given any $\lambda\in\bbR_+$, the {\em optimal non-asymptotic type-II error exponent} is
\begin{align}
\nn& E^*(n,\alpha,P_1,P_2,\lambda|\ba,\bb,\calW)\\*
&:=\sup\{E\in\bbR_+:\exists~\gamma~\mathrm{s.t.}~\beta_2(\gamma,P_1,P_2)\leq \exp(-nE)  \mbox{~and~} \nn\\
&\qquad\beta_1(\gamma,\tilP_1,\tilP_2)\leq \exp(-n\lambda),~\forall\, (\tilP_1,\tilP_2)\in\calP(\calX)^2\}. \label{eqn:E_star}
\end{align}

\subsection{Definitions}
\label{sec:moti}

 To state our results succinctly, we begin by stating some somewhat non-standard definitions. Given any pair of distributions $(Q,\tilQ)\in\calP([L])^2$ and any $\alpha\in\mathbb{R}_+$, the \emph{generalized Jensen-Shannon divergence}~\cite[Eqn.~(3)]{zhou2018second} is defined as 
\begin{equation}
	\mathrm{GJS}(\tilQ,Q,\alpha):=D\Big(Q\Big\|\frac{Q+\alpha\tilQ}{1+\alpha}\Big)+\alpha D\Big(\tilQ\Big\|\frac{Q+\alpha\tilQ}{1+\alpha}\Big).
\end{equation}

  Let $\mathbf{Q}=(Q_1,\ldots,Q_K)\in\calP([L])^{K}$ and $\tilde{\mathbf{Q}}_i=(\tilQ_{i,1},\ldots,\tilQ_{i,K})\in\calP([L])^{K}$ where $i\in[2]$ be three collections of distributions. Given any $(\bQ,\tilde{\bQ}_1,\tilde{\bQ}_2)\in\calP([L])^{3K}$, any $(P,\tilP_1,\tilP_2)\in\calP(\calX)^3$, any $\alpha\in\mathbb{R}_+$, any pair $(\ba,\bb)\in\calP([K])^2$, define the following {\em linear combination of divergences}
 \begin{align}
 &\mathrm{LD}(\bQ,\tilde{\bQ}_1,\tilde{\bQ}_2,P,\tilP_1,\tilP_2|\alpha,\ba,\bb,\calW) \nn\\*
 &:=\sum_{k\in[K]}\big(a_kD(Q_k\|P W_k)+\sum_{i\in[2]}\alpha b_k D(\tilQ_{i,k}\|\tilP_i W_k)\big)\label{def:LD},
 \end{align}
and furthermore, given any $\lambda\in\bbR_+$, define the following set of collections of distributions
 \begin{align}
&\calQ_{\lambda}(\alpha,\ba,\bb,\calW):=\bigg\{(\bQ,\tilde{\bQ}_1,\tilde{\bQ}_2)\in \calP([L])^{3K}: \nn\\*
& \min_{(\tilP,P)\in\calP(\calX)^2}\mathrm{LD}(\bQ,\tilde{\bQ}_1,\tilde{\bQ}_2,\tilP,\tilP,P|\alpha,\ba,\bb,\calW)\!\leq\!\lambda\bigg\}\label{def:calQlambda}.
 \end{align}
Finally, define the following minimum linear combination of divergences over the collections of distributions in $\calQ_{\lambda}(\alpha,\ba,\bb,\calW)$ as 
 \begin{align}
&f_\alpha(P_1,P_2|\ba,\bb,\calW)\nn\\*
&:=\min_{\substack{(\bQ,\tilde{\bQ}_1,\tilde{\bQ}_2)
		 \\ \in \calQ_{\lambda}(\alpha,\ba,\bb,\calW)}}\mathrm{LD}(\bQ,\tilde{\bQ}_1,\tilde{\bQ}_2,P_2,P_1,P_2|\alpha,\ba,\bb,\calW). \label{eqn:f_alpha}
\end{align}

\subsection{Main Results}\label{Subsec:binary result}
The following theorem is our main result and presents a single-letter expression for the optimal type-II exponent.
\begin{theorem}\label{Thm:exponent}
Given any $(\lambda,\alpha)\in\mathbb{R}_+^2$, any pair of target distributions $(P_1,P_2)\in\calP(\calX)^2$,
\begin{align}
 \lim_{n\to\infty}E^*(n,\alpha,P_1,P_2,\lambda|\ba,\bb,\calW) =f_\alpha(P_1,P_2|\ba,\bb,\calW).\label{Eq:thm1 exponent}
\end{align}
\end{theorem}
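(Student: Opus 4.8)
The plan is to prove Theorem~\ref{Thm:exponent} by showing the right-hand side is simultaneously achievable and a converse bound as $n\to\infty$. For $\mu\in\bbR_+$ write $f^{(\mu)}_\alpha$ for the quantity obtained from \eqref{def:calQlambda}--\eqref{eqn:f_alpha} with $\lambda$ replaced by $\mu$, so $f_\alpha=f^{(\lambda)}_\alpha$; a short lemma (discussed at the end) gives that $\mu\mapsto f^{(\mu)}_\alpha$ is continuous at $\mu=\lambda$. Both directions revolve around one statistic: given $(Z^n,\tilY_1^N,\tilY_2^N)$, let $\hat Q_k$ be the type of $\{Z_i:h(i)=k\}$ and $\hat{\tilQ}_{1,k},\hat{\tilQ}_{2,k}$ the types of $\{\tilY_{1,i}:g(i)=k\}$ and $\{\tilY_{2,i}:g(i)=k\}$, for $k\in[K]$, collected into $\hat\bQ,\hat{\tilbQ}_1,\hat{\tilbQ}_2$. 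The \emph{generalized Gutman test} declares $\rmH_1$ iff $\min_{(\tilP,P)\in\calP(\calX)^2}\mathrm{LD}(\hat\bQ,\hat{\tilbQ}_1,\hat{\tilbQ}_2,\tilP,\tilP,P\,|\,\alpha,\ba,\bb,\calW)\le\lambda_n$, where $\lambda_n:=\lambda+c(\log n)/n$ with $c=c(K,L)$ a constant; the vanishing slack $\lambda_n-\lambda$ is what makes the type-I constraint hold with the exact prefactor $1$ after absorbing method-of-types polynomial factors.

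\textbf{Achievability.} Because $X^n,Y_1^N,Y_2^N$ are independent and the $K$ channel groups within each block act on disjoint coordinates, under any product law with source/first-training/second-training marginals $(R_0,R_1,R_2)$ the vector $(\hat\bQ,\hat{\tilbQ}_1,\hat{\tilbQ}_2)$ attains each admissible value $(\bQ,\tilbQ_1,\tilbQ_2)$ with probability $\exp(-n(\mathrm{LD}(\bQ,\tilbQ_1,\tilbQ_2,R_0,R_1,R_2)+o(1)))$, uniformly (Sanov / type counting). For the type-I error under $\rmH_1$ with any $(\tilP_1,\tilP_2)$: on the error event the realized sub-types satisfy $\min_{(\tilP,P)}\mathrm{LD}(\,\cdot\,,\tilP,\tilP,P)>\lambda_n$, hence in particular $\mathrm{LD}(\,\cdot\,,\tilP_1,\tilP_1,\tilP_2)>\lambda_n$ since $(\tilP_1,\tilP_1,\tilP_2)$ is feasible in that minimization; summing $\poly(n)$ such terms and using the choice of $c$ gives $\beta_1(\gamma,\tilP_1,\tilP_2)\le\exp(-n\lambda)$ for all large $n$, uniformly in $(\tilP_1,\tilP_2)$. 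For the type-II error under $\rmH_2$ with target $(P_1,P_2)$ (so $(R_0,R_1,R_2)=(P_2,P_1,P_2)$), the test errs exactly when $(\hat\bQ,\hat{\tilbQ}_1,\hat{\tilbQ}_2)\in\calQ_{\lambda_n}(\alpha,\ba,\bb,\calW)$, so up to subexponential factors $\beta_2(\gamma,P_1,P_2)$ equals $\exp(-n\min_{(\bQ,\tilbQ_1,\tilbQ_2)\in\calQ_{\lambda_n}}\mathrm{LD}(\bQ,\tilbQ_1,\tilbQ_2,P_2,P_1,P_2))=\exp(-n f^{(\lambda_n)}_\alpha(P_1,P_2|\ba,\bb,\calW))$. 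Letting $n\to\infty$ and using continuity of $f^{(\cdot)}_\alpha$ at $\lambda$ gives $\liminf_n E^*\ge f_\alpha$.

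\textbf{Converse.} First I would symmetrize: for any generating laws, the joint law of $(Z^n,\tilY_1^N,\tilY_2^N)$ under each hypothesis is invariant under the group $\calG$ of permutations of the coordinate index set that fix each of the three blocks and, within a block, permute indices sharing a common channel label ($i$ with $h(i)=k$ among themselves, and likewise for $g$); replacing $\gamma$ by its $\calG$-orbit majority vote multiplies each error probability by at most $2$, so with no loss of exponent $\gamma$ may be assumed a function of $(\hat\bQ,\hat{\tilbQ}_1,\hat{\tilbQ}_2)$ alone. Let $\calB_n$ be the finite set of sub-type vectors on which it declares $\rmH_2$. The type-counting lower bound applied to $\beta_1(\gamma,\tilP_1,\tilP_2)\le\exp(-n\lambda)$ for all $(\tilP_1,\tilP_2)$, combined with $\inf_{(\tilP_1,\tilP_2)}\mathrm{LD}(\bQ,\tilbQ_1,\tilbQ_2,\tilP_1,\tilP_1,\tilP_2)=\min_{(\tilP,P)}\mathrm{LD}(\bQ,\tilbQ_1,\tilbQ_2,\tilP,\tilP,P)$, forces every element of $\calB_n$ to obey $\min_{(\tilP,P)}\mathrm{LD}(\,\cdot\,,\tilP,\tilP,P)\ge\lambda-O((\log n)/n)$; hence, for any fixed $\delta>0$ and $n$ large, $\calB_n\cap\calQ_{\lambda-\delta}=\emptyset$, i.e.\ the $\rmH_1$-acceptance region $\calB_n^{\mathrm c}$ contains $\calQ_{\lambda-\delta}$. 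Taking a minimizer $(\bQ^\star,\tilbQ_1^\star,\tilbQ_2^\star)$ of $\mathrm{LD}(\,\cdot\,,P_2,P_1,P_2)$ over $\calQ_{\lambda-\delta}$ — attained because $\calP([L])^{3K}$ is compact, $\mathrm{LD}$ is lower semicontinuous, and the constraint map is continuous — perturbing it to a nearby sub-type vector realizable at blocklength $n$, and using $\beta_2(\gamma,P_1,P_2)\ge\poly(n)^{-1}\exp(-n\,\mathrm{LD}(\bQ^\star,\tilbQ_1^\star,\tilbQ_2^\star,P_2,P_1,P_2))$ since this vector lies in $\calB_n^{\mathrm c}$, yields $E^*(n,\ldots)\le f^{(\lambda-\delta)}_\alpha+o(1)$; letting $n\to\infty$ then $\delta\downarrow0$ gives $\limsup_n E^*\le f_\alpha$.

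\textbf{The main obstacle.} The architecture is the standard Gutman/Sanov template, and I expect the only real difficulty to be the boundary hidden in the ``$\le\lambda$'' of \eqref{def:calQlambda}: since \eqref{eqn:E_star} imposes the type-I bound with exact prefactor $1$, the achievable test must run at threshold $\lambda+o(1)$, while the converse only pins $\calB_n$ against $\calQ_{\lambda-\delta}$, so both directions produce $f^{(\mu)}_\alpha$ at some $\mu\neq\lambda$ and must be reconciled through the lemma that $\mu\mapsto f^{(\mu)}_\alpha$ is continuous at $\lambda$ (it is monotone nonincreasing, so only the two one-sided limits need matching). That lemma reduces to joint continuity of $(\bQ,\tilbQ_1,\tilbQ_2)\mapsto\min_{(\tilP,P)}\mathrm{LD}(\,\cdot\,,\tilP,\tilP,P)$ (continuity of $\mathrm{LD}$ plus compactness of $\calP(\calX)^2$) and to the fact that a minimizer on $\partial\calQ_\lambda$ can be pushed strictly inside by moving it toward $(\tilP^\star W_k,\tilP^\star W_k,P^\star W_k)_{k}$ for $(\tilP^\star,P^\star)$ attaining the inner minimum, which changes $\mathrm{LD}(\,\cdot\,,P_2,P_1,P_2)$ continuously. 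The rest is bookkeeping: channels with $a_k=0$ (their terms drop out), reference laws $P_\nu W_k$ that are not full support (the offending deviation events have probability zero), and the degenerate case $f_\alpha=0$ (both bounds immediate).
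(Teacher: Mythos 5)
Your proposal is correct and follows essentially the same route as the paper: the same generalized Gutman test at threshold $\lambda+O(\frac{\log n}{n})$ with method-of-types bounds for achievability, and a converse that reduces to type-based tests with a factor-$2$ loss (your permutation-symmetrization is the paper's Lemma~\ref{Lem:extended from Lin} at $\eta=\tfrac12$), shows the uniform type-I constraint forces acceptance of every type vector with statistic below $\lambda-o(1)$ (the paper's Lemma~\ref{Lem:type-based test optimal}, proved by the same contradiction), and closes via denseness of types and continuity of the exponent in the threshold. Your explicit continuity-in-$\lambda$ lemma is the only point the paper leaves implicit, and your treatment of it is sound.
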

The proof of Theorem \ref{Thm:exponent} is given in Appendix \ref{Sec:proof of thm1}. Several remarks are in order.

 Firstly, in the achievability proof of Theorem \ref{Thm:exponent}, we make use of the following test at the fusion center
\begin{align}
\label{Eq:test}
&\gamma(Z^n,\tilY_1^N,\tilY_2^N ) \nn\\*
&= \left\{
\begin{array}{cc}
\rmH_1 & \min_{\tilP,P }\mathrm{LD}\big(\bT_{\bZ^{n\ba}},\bT_{\tilde{\bY}_1^{N\bb}},\bT_{\tilde{\bY}_2^{N\bb}},\tilP,\tilP,P \big)\leq \lambda,\\
\rmH_2 & \text{otherwise},
\end{array}
\right. 
\end{align}
where we suppressed the dependence of $\mathrm{LD}$ on $(\alpha,\ba,\bb,\calW)$ and for each $k\in[K]$, we use $Z^{na_k}$ to denote the collection of $Z_i$ where $i\in[n]$ satisfies $h(i)=k$ and similarly for  $\tilY_j^{Nb_k}$ for $j\in[2]$.  Furthermore, we use $\bT_{\bz^{n\ba}}$ to denote the vector of types $(T_{z^{na_1}},\ldots,T_{z^{na_K}})$ and use $\bT_{\tilde{\by}_j^{N\bb}}$   for $j\in[2]$ similarly. Theorem \ref{Thm:exponent} indicates that the test in \eqref{Eq:test} is asymptotically optimal. The test in \eqref{Eq:test} basically compares a certain distance between $\bT_{\bz^{n\ba}}$ and $\bT_{\tilde{\by}_1^{N\bb}}$ plus a bias term related to $\bT_{\tilde{\by}_2^{N\bb}}$ to a threshold $\lambda$. When the distance is small enough, we declare that $\bT_{\bz^{n\ba}}$ and $\bT_{\tilde{\by}_1^{N\bb}}$ are generated according to the same distribution; otherwise, we declare that they are not.


 Secondly, the test in \eqref{Eq:test} is a generalization of Gutman's test in \cite{gutman1989asymptotically}. To see this, we note that if we let $K=1$, $M=L$ and consider the deterministic channel denoted as $W=I_{L}$, the test in \eqref{Eq:test} reduces to Gutman's test using $(Z^n,\tilY_1^N,\tilY_2^N)$ since
 \begin{align}
	&\min_{\tilP,P }\mathrm{LD}\big(\bT_{\bZ^{n\ba}},\bT_{\tilde{\bY}_1^{N\bb}},\bT_{\tilde{\bY}_2^{N\bb}},\tilP,\tilP,P \, \big|\,\alpha,\ba,\bb,\{I_L\} \big)\nn\\
	&=\min_{\tilP,P }D(T_{Z^{n}}\|\tilP )+\alpha D(T_{Y_1^{N}}\|\tilP )+\alpha D(T_{Y_2^{N}}\|P )\\
	&=\mathrm{GJS}(T_{\tilY_1^N},T_{Z^n},\alpha),
\end{align}
 and the exponent in Theorem \ref{Thm:exponent} reduces to the type-II exponent for binary classification~\cite[Thm.~3]{gutman1989asymptotically}, i.e., 
\begin{equation}
\gamma(Z^n,\tilY_1^N,\tilY_2^N)=\begin{cases}
\rmH_1 & \mathrm{GJS}(T_{\tilY_1^N},T_{Z^n},\alpha)\leq \lambda,\\
\rmH_2 & \text{otherwise},
\end{cases}\label{Eq:Gutman test}
\end{equation}
and
\begin{align}
& \lim_{n\to\infty}E^*(n,\alpha,P_1,P_2,\lambda|\ba,\bb,\calW) \nn\\*
& =\min_{\begin{subarray}{c}
(Q,\tilQ)\in \mathcal{P}\mathcal{(Z)}^{2}:\\
\mathrm{GJS}(\tilQ,Q,\alpha)\leq\lambda	
\end{subarray}} D(Q\|P_2)+\alpha D(\tilQ\|P_1).
\label{Eq:Gutman expo}
\end{align}

 Finally, to better understand the effect of not knowing the true distributions, we numerically plot the optimal type-II exponent $f_\alpha(P_1,P_2|\ba,\bb,\calW)$ (defined in \eqref{eqn:f_alpha}) in Figure \ref{fig:f_alpha_infty}. As shown in Figure \ref{fig:f_alpha_infty}, the optimal type-II exponent $f_\alpha(P_1,P_2|\ba,\bb,\calW)$ increases as $\alpha=\frac{N}{n}$ increases and converges to a threshold as $\alpha\to\infty$. This threshold is the optimal error exponent when the true distributions are known~\cite[Theorem 2]{tsitsiklis1988decentralized}. The gap $f_\infty-f_\alpha$ thus quantifies the loss due to the fact that the generating distributions are unknown and only training samples are available to the learner.
\begin{figure}[t]
	\centering
	\includegraphics[scale=0.45]{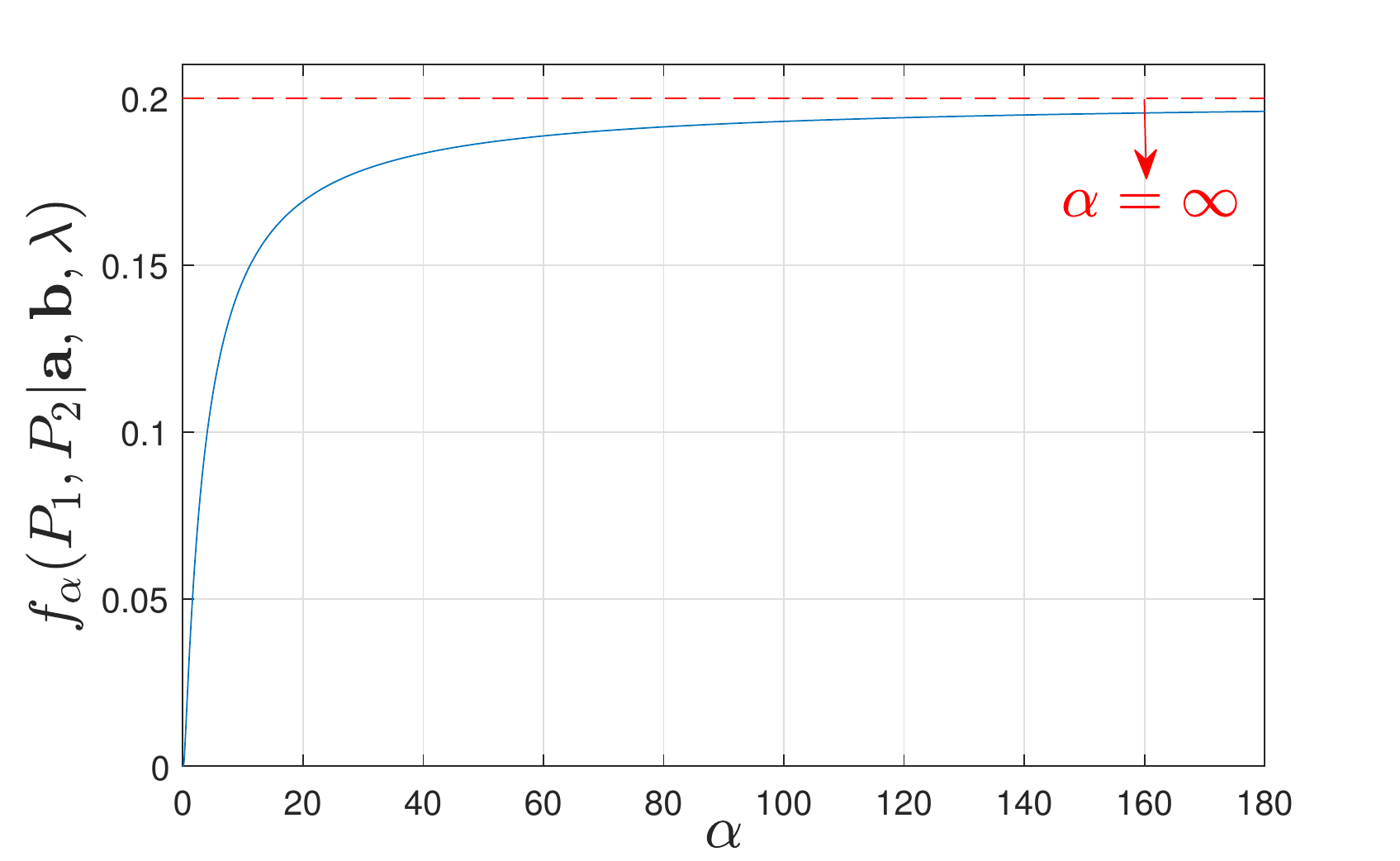}
	\caption{Plot of $f_{\alpha}(P_1,P_2|\ba,\bb,\lambda)$ for various values of $\alpha$, fixed $(\ba,\bb,\lambda)$ and $K=2$}
	\label{fig:f_alpha_infty}
\end{figure}

\subsection{Further Discussions on the Impact of the Proportions of Local Decision Rules $(\ba,\bb)$ on the Exponent}\label{sec:further}
 In this subsection, we discuss the choices of the proportion of local decision rules, denoted by $(\ba,\bb)$, to achieve the optimal type-II exponent $f_\alpha(P_1,P_2|\ba,\bb,\calW)$. Throughout the section, we fix a pair of target distributions $(P_1,P_2)$.  For brevity, we define
\begin{align}
f_{\alpha}(\ba,\bb,\lambda):=f_\alpha(P_1,P_2|\ba,\bb,\calW)\label{def:f}.
\end{align}
Since the type-II error exponent depends on $(\ba,\bb)$, inspired by the result in \cite{tsitsiklis1988decentralized} which states that one local decision rule is optimal for binary hypotheses testing (in the Neyman-Pearson and Bayesian settings), we can further optimize the type-II error exponent with respect to the design of  the proportion of channels (encoded in $\ba$ and $\bb$) and thus study 
\begin{align}
f^*_{\alpha}(\lambda):=\max_{(\ba,\bb) \in \calP([K])^2}f_{\alpha}(\ba,\bb,\lambda)
\end{align}
and the corresponding optimizers $\ba^*$ and $\bb^*$
for different values of $\alpha$. For this purpose, given any vector $\bv\in\calP([K])$ and any distribution $\tilP\in\calP(\calX)$, define
\begin{align}
\calP(\tilP|\bv,\calW)&:=\big\{
P\in\calP(\calX):  \nn\\* 
&\qquad\forall~k\in[K],v_k\|PW_k-\tilP W_k\|_{\infty}=0\big\}\label{def:calPP1}.
\end{align}
Note that $\tilP\in\calP(\tilP|\bv,\calW)$ and if $\mathrm{supp}(\bv)=[K]$, then $PW_k=\tilP W_k$ for all $k\in[K]$.

Furthermore, given any $\bQ\in\calP([L])^K$, any $P_1\in\calP(\calX)$ and any pair $(\ba,\bb)\in\calP([K])^2$, let
\begin{align}
\!\! \kappa(\bQ,P_1|\ba,\bb,\calW)\!:=\!\min_{\tilP\in\calP(P_1|\bb,\calW)}\!\sum_{k\in[K]}\!a_kD(Q_k\|\tilP W_k)\label{def:kappa}.
\end{align}

\begin{lemma}
	\label{extreme:f}
	The function $f_{\alpha}(\ba,\bb,\lambda)$ satisfies
	\begin{align} \label{eqn:limit_alpha_large}
	&\lim_{\alpha\to\infty}f_{\alpha}(\ba,\bb,\lambda) =f_{\infty}(\ba,\bb,\lambda) \nn\\* 
	&:=\min_{\bQ\in\calP([L])^K:\kappa(\bQ,P_1|\ba,\bb,\calW)\leq \lambda}\sum_{k\in[K]}a_kD(Q_k\|P_2W_k).
	\end{align}
\end{lemma}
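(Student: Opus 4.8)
The plan is a sandwich argument. The upper direction is to show the (stronger) pointwise bound $f_\alpha(\ba,\bb,\lambda)\le f_\infty(\ba,\bb,\lambda)$ for every $\alpha\in\bbR_+$; the lower direction is to show $\liminf_{\alpha\to\infty}f_\alpha(\ba,\bb,\lambda)\ge f_\infty(\ba,\bb,\lambda)$. Since the right-hand side does not depend on $\alpha$, the two bounds force the limit to exist and equal $f_\infty(\ba,\bb,\lambda)$; in particular no separate monotonicity argument is needed. The only analytic facts I would invoke are that $\calP([L])^{3K}$ and $\calP(\calX)^2$ are compact, that relative entropy $(Q,R)\mapsto D(Q\|R)$ is nonnegative and jointly lower semicontinuous, and that each map $\tilP\mapsto\tilP W_k$ is continuous; these guarantee that all the minima appearing in \eqref{def:LD}, \eqref{def:calQlambda}, \eqref{eqn:f_alpha}, \eqref{def:kappa} and \eqref{eqn:limit_alpha_large} are attained. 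I would assume $f_\infty(\ba,\bb,\lambda)<\infty$, the case $f_\infty(\ba,\bb,\lambda)=+\infty$ being handled by running the compactness argument below by contradiction.

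For the upper bound, let $\bQ^\star$ attain \eqref{eqn:limit_alpha_large} and let $\tilP^\star\in\calP(P_1|\bb,\calW)$ attain $\kappa(\bQ^\star,P_1|\ba,\bb,\calW)$. I would feed into $f_\alpha$ the candidate $(\bQ^\star,\tilde{\bQ}_1,\tilde{\bQ}_2)$ defined by $\tilQ_{1,k}:=P_1W_k$ and $\tilQ_{2,k}:=P_2W_k$ for all $k\in[K]$. Evaluating the inner minimum in \eqref{def:calQlambda} at $(\tilP,P)=(\tilP^\star,P_2)$ makes every $\tilQ_{2,k}$-term equal to $\alpha b_kD(P_2W_k\|P_2W_k)=0$ and every $\tilQ_{1,k}$-term equal to $\alpha b_kD(P_1W_k\|\tilP^\star W_k)=0$ (using $\tilP^\star W_k=P_1W_k$ on $\supp(\bb)$ and $b_k=0$ off it), leaving exactly $\sum_k a_kD(Q^\star_k\|\tilP^\star W_k)=\kappa(\bQ^\star,P_1|\ba,\bb,\calW)\le\lambda$; hence the candidate is feasible. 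Its objective value in \eqref{eqn:f_alpha} is $\sum_k a_kD(Q^\star_k\|P_2W_k)+\sum_k\alpha b_k\big(D(P_1W_k\|P_1W_k)+D(P_2W_k\|P_2W_k)\big)=\sum_k a_kD(Q^\star_k\|P_2W_k)=f_\infty(\ba,\bb,\lambda)$, so $f_\alpha(\ba,\bb,\lambda)\le f_\infty(\ba,\bb,\lambda)$, and in particular $\limsup_{\alpha\to\infty}f_\alpha(\ba,\bb,\lambda)\le f_\infty(\ba,\bb,\lambda)$.

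For the lower bound, pick $\alpha_m\to\infty$ with $f_{\alpha_m}(\ba,\bb,\lambda)\to L:=\liminf_{\alpha\to\infty}f_\alpha(\ba,\bb,\lambda)$, let $(\bQ^{(m)},\tilde{\bQ}_1^{(m)},\tilde{\bQ}_2^{(m)})$ attain $f_{\alpha_m}(\ba,\bb,\lambda)$, and let $(\tilP^{(m)},P^{(m)})$ attain the inner minimum in \eqref{def:calQlambda} at these arguments. Extracting a common subsequence, all of these converge, say to $(\bQ^\infty,\tilde{\bQ}_1^\infty,\tilde{\bQ}_2^\infty)$ and $(\tilP^\infty,P^\infty)$. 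Because $f_{\alpha_m}(\ba,\bb,\lambda)\le f_\infty(\ba,\bb,\lambda)<\infty$ and the objective \eqref{eqn:f_alpha} dominates each of the nonnegative terms $\alpha_m b_kD(\tilQ^{(m)}_{1,k}\|P_1W_k)$ and $\alpha_m b_kD(\tilQ^{(m)}_{2,k}\|P_2W_k)$, those divergences are $O(1/\alpha_m)\to0$, so lower semicontinuity gives $\tilQ^\infty_{1,k}=P_1W_k$ and $\tilQ^\infty_{2,k}=P_2W_k$ for $k\in\supp(\bb)$. Similarly the feasibility inequality for $\calQ_\lambda(\alpha_m,\ba,\bb,\calW)$ forces $D(\tilQ^{(m)}_{1,k}\|\tilP^{(m)}W_k)\le\lambda/(\alpha_m b_k)\to0$ for $k\in\supp(\bb)$, and then joint lower semicontinuity together with $\tilQ^{(m)}_{1,k}\to P_1W_k$ and $\tilP^{(m)}W_k\to\tilP^\infty W_k$ yields $\tilP^\infty W_k=P_1W_k$ on $\supp(\bb)$, i.e.\ $\tilP^\infty\in\calP(P_1|\bb,\calW)$. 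Discarding the nonnegative $\alpha_m b_k$-terms from that same inequality and passing to the $\liminf$ gives $\sum_k a_kD(Q^\infty_k\|\tilP^\infty W_k)\le\lambda$, hence $\kappa(\bQ^\infty,P_1|\ba,\bb,\calW)\le\lambda$, so $\bQ^\infty$ is feasible for \eqref{eqn:limit_alpha_large}. Finally $f_{\alpha_m}(\ba,\bb,\lambda)\ge\sum_k a_kD(Q^{(m)}_k\|P_2W_k)$, and taking $\liminf_m$ with lower semicontinuity yields $L\ge\sum_k a_kD(Q^\infty_k\|P_2W_k)\ge f_\infty(\ba,\bb,\lambda)$. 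Together with the upper bound this proves the claim.

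I expect the delicate step to be the lower bound, and within it the implication from $D(\tilQ^{(m)}_{1,k}\|\tilP^{(m)}W_k)\to0$ to $\tilP^\infty\in\calP(P_1|\bb,\calW)$: this requires tracking the outer variables $(\bQ^{(m)},\tilde{\bQ}_i^{(m)})$ and the inner constraint-minimizers $(\tilP^{(m)},P^{(m)})$ along one common subsequence and exploiting the \emph{joint} lower semicontinuity of relative entropy in both slots, the $\alpha_m\to\infty$ blow-up of the training terms being exactly what pins $\tilQ^{(m)}_{1,k}$ and $\tilP^{(m)}W_k$ to $P_1W_k$. The only other point needing a little care is that relative entropies may equal $+\infty$, but this is harmless: every sequence tracked above is bounded by $f_\infty(\ba,\bb,\lambda)<\infty$, and the degenerate regime $f_\infty(\ba,\bb,\lambda)=+\infty$ is dealt with by the contradiction remark in the first paragraph.
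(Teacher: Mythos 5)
Your proof is correct, and it reaches the result by a somewhat different route than the paper. The paper works pointwise in the outer variables: it observes that the objective in \eqref{eqn:f_alpha} blows up as $\alpha\to\infty$ unless $\tilQ_{1,k}=P_1W_k$ and $\tilQ_{2,k}=P_2W_k$ on $\supp(\bb)$, proves in Lemma~\ref{Lem:g(infty)} that for such triples the inner constraint function $\min_{\tilP,P}\mathrm{LD}(\cdot\,|\,\alpha,\ba,\bb,\calW)$ converges to $\kappa(\bQ,P_1|\ba,\bb,\calW)$ (via continuity of the parametric minimum in $\alpha$ and a contradiction argument about where the limiting minimizer lies), and then concludes by an implicit exchange of the limit in $\alpha$ with the outer minimization. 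You instead run a two-sided sandwich: the explicit candidate $(\bQ^\star,\{P_1W_k\}_k,\{P_2W_k\}_k)$ evaluated at $(\tilP^\star,P_2)$ yields the pointwise bound $f_\alpha(\ba,\bb,\lambda)\le f_\infty(\ba,\bb,\lambda)$ for every finite $\alpha$ (slightly stronger than the $\limsup$ statement and consistent with Fig.~\ref{fig:f_alpha_infty}), while for the $\liminf$ you extract a common convergent subsequence of the outer minimizers and the inner constraint minimizers and use joint lower semicontinuity of relative entropy, with the $\alpha_m$ blow-up pinning $\tilQ^{(m)}_{i,k}$ and $\tilP^{(m)}W_k$ to $P_iW_k$ on $\supp(\bb)$, so that the limit point $\bQ^\infty$ is feasible for $f_\infty$. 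Your organization makes rigorous precisely the step the paper compresses into ``combining above results'' (the exchange of limit and minimization), at the cost of a slightly heavier compactness bookkeeping; the paper's version isolates a reusable statement about the limiting constraint set that it also invokes for Corollary~\ref{Lem:mary alpha infty}. The only point worth spelling out in your write-up is that attainment of the minimum defining $f_{\alpha_m}$ relies on lower semicontinuity of $(\bQ,\tilde{\bQ}_1,\tilde{\bQ}_2)\mapsto\min_{\tilP,P}\mathrm{LD}$, i.e.\ on the standard fact that partial minimization of a jointly lower semicontinuous function over the compact set $\calP(\calX)^2$ preserves lower semicontinuity; this is covered by the facts you list but deserves an explicit sentence.
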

The proof of Lemma \ref{extreme:f} is provided in Appendix \ref{proof:extremef}.

We say that $\ba \in \calP([K])$ is {\em deterministic}  if there exists a $j\in [K]$ such that $a_j=1$. Let $\be_j$ be the $j$-th standard basis vector in $\bbR^K$, i.e., the vector $\be_j$ equals $1$ in the $j$-th location and $0$ in other locations.

\begin{corollary}\label{Coro:alpha extreme}
Given any $\lambda\in\bbR_+$, we have 
\begin{align}
\sup_{(\ba,\bb)\in\calP([K])^2}f_{\infty}(\ba,\bb,\lambda)
=\max_{k\in[K]}f_{\infty}(\be_k,\be_k,\lambda),
\end{align}
and thus the maximizers $(\ba^*,\bb^*)$ for $f_{\infty}(\ba,\bb,\lambda)$ satisfy that $(\ba^*,\bb^*)$ are both deterministic and $\ba^*=\bb^*$.
\end{corollary}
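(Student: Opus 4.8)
The plan is to establish the displayed identity as a two-sided inequality and then read off an optimizer. One direction is free: each pair $(\be_k,\be_k)$ lies in $\calP([K])^2$, so $\sup_{(\ba,\bb)\in\calP([K])^2}f_\infty(\ba,\bb,\lambda)\ge f_\infty(\be_k,\be_k,\lambda)$ for every $k$, and maximizing over $k$ yields ``$\ge$''. All the work is in the reverse inequality, for which I would fix an arbitrary $(\ba,\bb)\in\calP([K])^2$ and exhibit a feasible point of the program defining $f_\infty(\ba,\bb,\lambda)$ whose objective is at most $\max_{k\in[K]}f_\infty(\be_k,\be_k,\lambda)$.

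First I would simplify the corner values. Substituting $\ba=\bb=\be_k$ into \eqref{def:calPP1} and \eqref{def:kappa}, the constraint $b_j\|PW_j-P_1W_j\|_\infty=0$ is vacuous for $j\ne k$ and forces $PW_k=P_1W_k$ for $j=k$, so $\kappa(\bQ,P_1|\be_k,\be_k,\calW)=D(Q_k\|P_1W_k)$ and hence
\begin{align*}
f_\infty(\be_k,\be_k,\lambda)=\min_{Q\in\calP([L]):\,D(Q\|P_1W_k)\le\lambda}D(Q\|P_2W_k).
\end{align*}
The feasible set here is a nonempty (it contains $P_1W_k$) closed subset of the compact simplex $\calP([L])$, so the minimum is attained by some $Q_k^\star$ with $D(Q_k^\star\|P_1W_k)\le\lambda$ and $D(Q_k^\star\|P_2W_k)=f_\infty(\be_k,\be_k,\lambda)$.

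The crux of the reverse inequality is the elementary observation that $P_1\in\calP(P_1|\bb,\calW)$ for \emph{every} $\bb$, since $\|P_1W_k-P_1W_k\|_\infty=0$ for all $k$. Consequently, for the ``product'' collection $\bQ^\star:=(Q_1^\star,\dots,Q_K^\star)$, choosing $\tilP=P_1$ in \eqref{def:kappa} gives
\begin{align*}
\kappa(\bQ^\star,P_1|\ba,\bb,\calW)&\le\sum_{k\in[K]}a_kD(Q_k^\star\|P_1W_k)\\
&\le\lambda\sum_{k\in[K]}a_k=\lambda,
\end{align*}
so $\bQ^\star$ is feasible for $f_\infty(\ba,\bb,\lambda)$, whence
\begin{align*}
f_\infty(\ba,\bb,\lambda)&\le\sum_{k\in[K]}a_kD(Q_k^\star\|P_2W_k)=\sum_{k\in[K]}a_kf_\infty(\be_k,\be_k,\lambda)\\
&\le\max_{k\in[K]}f_\infty(\be_k,\be_k,\lambda).
\end{align*}
Taking the supremum over $(\ba,\bb)$ yields ``$\le$'', hence the claimed equality; and the common value is attained at $(\be_{k^\star},\be_{k^\star})$ for any $k^\star\in\argmax_{k\in[K]}f_\infty(\be_k,\be_k,\lambda)$, which is deterministic with $\ba^\star=\bb^\star$.

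I do not expect a genuine obstacle: the argument is a short decoupling trick rather than a convexity computation, the essential point being that $P_1$ is always an admissible reference distribution in the inner minimization of $\kappa$, which lets us optimize each channel separately. The only care needed is (i) checking that the program defining $f_\infty$ is over a nonempty compact set so that the minimizers $Q_k^\star$ exist (otherwise take $Q_k$ within $\veps$ of the infimum and let $\veps\downarrow 0$), and (ii) computing the corner constraint sets correctly from the $\ell_\infty$-type conditions in \eqref{def:calPP1}. I would also flag that the proof produces \emph{an} optimizer that is deterministic with $\ba^\star=\bb^\star$ but not necessarily a unique one (e.g.\ if several $W_k$ coincide, every $(\ba,\bb)$ is optimal), so ``the maximizers'' in the statement is to be read as asserting existence of an optimizer of this form.
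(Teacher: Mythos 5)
Your proof is correct and uses the same essential argument as the paper: evaluate the corner values $f_{\infty}(\be_k,\be_k,\lambda)$ via per-channel minimizers $Q_k^\star$, observe that $P_1\in\calP(P_1|\bb,\calW)$ makes the collection $(Q_1^\star,\ldots,Q_K^\star)$ feasible so that $f_\infty(\ba,\bb,\lambda)\le\sum_k a_k f_\infty(\be_k,\be_k,\lambda)\le\max_k f_\infty(\be_k,\be_k,\lambda)$, and pair this with the trivial reverse inequality. The only difference is cosmetic: you bypass the paper's intermediate case analysis on the set $\calJ(\ba,\bb)=\{k:a_k>0,\,b_k=0\}$ by bounding $\kappa$ directly with the choice $\tilP=P_1$, which is a harmless streamlining.
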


The proof of Corollary \ref{Coro:alpha extreme} is provided in Appendix \ref{proof:coro:alpha extreme}. Corollary \ref{Coro:alpha extreme} says that when the length of the training sequence is much longer than the test sequence, it is optimal to use a {\em single} local decision rule or channel to pre-process the training data and source sequence; this is analogous to \cite[Theorem 1]{tsitsiklis1988decentralized}.

\begin{figure}[t]
		\centering
		\includegraphics[scale=0.45]{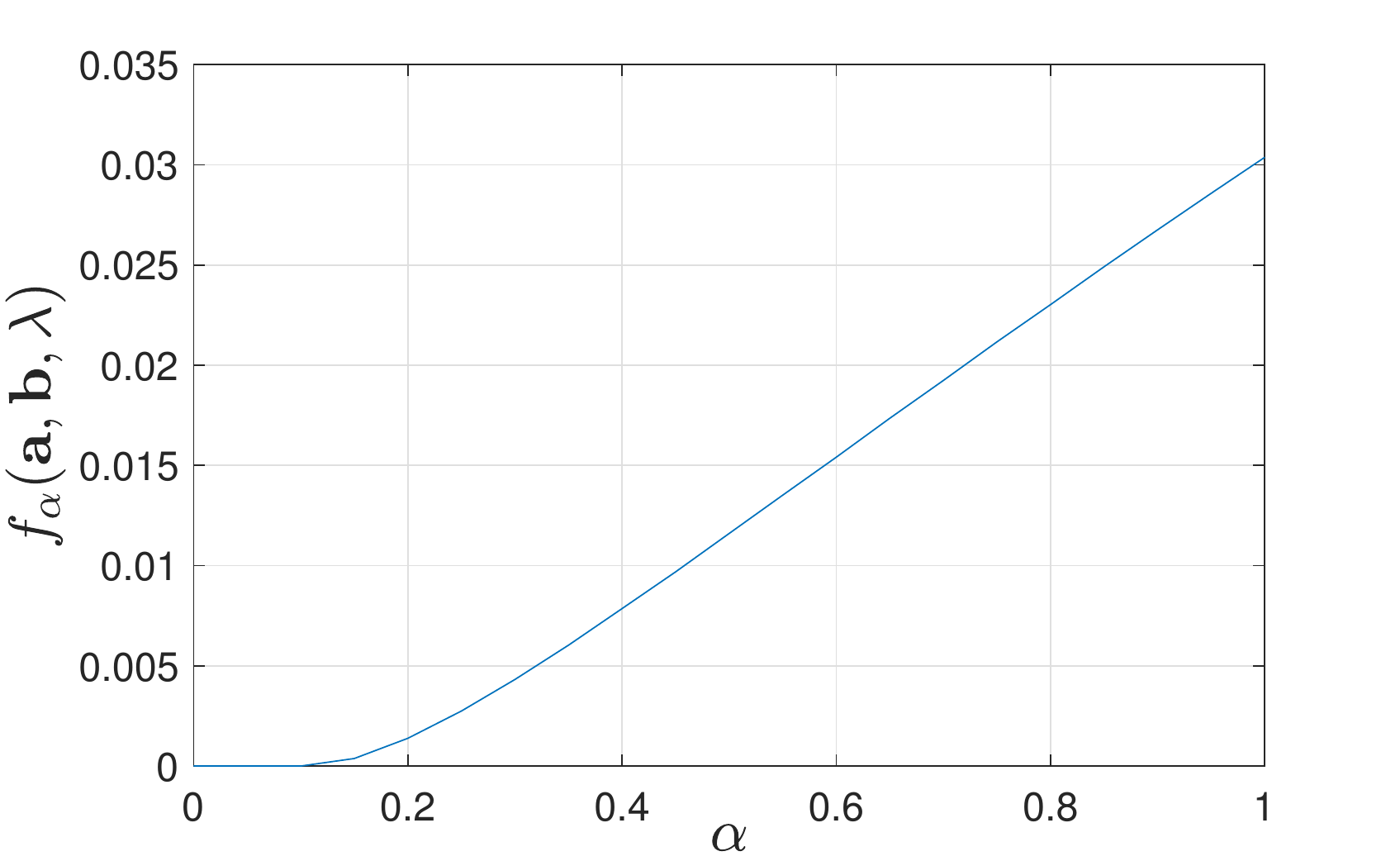}
		\caption{$f_{\alpha}(\ba,\bb,\lambda)$ for small $\alpha$,  fixed $(\ba,\bb,\lambda)$ and $K=2$}
		\label{fig:f_alpha}
\end{figure}

Given any $\alpha\in\bbR_+$ and any $(\ba,\bb)\in\calP([K])^2$, let
 \begin{align}
&\!\!\rmG_{\alpha}(\ba,\bb)\nn\\* 
&\!\!:=\min_{(\tilP,P)\in\calP(\calX)^2}\sum_{k\in[K]}\Big(a_kD(P_2W_k\|\tilP W_k) \nn\\* 
&\!\!\qquad+\alpha b_kD(P_1W_k\|\tilP W_k)+\alpha b_kD(P_2W_k\|P W_k)\Big)\\
&\!\!=\min_{\tilP\in\calP(\calX)}\sum_{k\in[K]}\Big(a_kD(P_2W_k\|\tilP W_k)+\alpha b_kD(P_1W_k\|\tilP W_k)\Big).\label{def:rmG}
\end{align}
Given any $\lambda\in\bbR_+$, let $\alpha_0(\ba,\bb,\lambda)$ be the solution (in $\alpha$) to the following equation
\begin{align}
\lambda=\rmG_{\alpha}(\ba,\bb).
\end{align}
Since $\rmG_{\alpha}(\ba,\bb)$ is an increasing function of $\alpha$ and $\rmG_0(\ba,\bb)=0$, for any $\lambda\in\bbR_+$, we have $\alpha_0(\ba,\bb,\lambda)>0$ unless $\lambda=0$. 

\begin{lemma}
\label{extreme:alpha=0}
Given any $(\ba,\bb)\in\calP([K])^2$ and any $\lambda\in\bbR_+$, if $\alpha \in [0,\alpha_0(\ba,\bb,\lambda)]$, then 
\begin{align}
f_{\alpha}(\ba,\bb,\lambda)=0.
\end{align}
\end{lemma}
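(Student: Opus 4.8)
The plan is to show that when $\alpha\le\alpha_0(\ba,\bb,\lambda)$, the optimizing collection $(\bQ,\tilbQ_1,\tilbQ_2)$ in the definition \eqref{eqn:f_alpha} of $f_\alpha$ can be taken to be $Q_k=P_2W_k$, $\tilQ_{1,k}=P_1W_k$, $\tilQ_{2,k}=P_2W_k$ for all $k$, which makes the objective $\mathrm{LD}(\bQ,\tilbQ_1,\tilbQ_2,P_2,P_1,P_2|\alpha,\ba,\bb,\calW)$ equal to zero. Since $\mathrm{LD}$ is a sum of relative entropies and hence nonnegative, once we exhibit a feasible point achieving value $0$ we are done. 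So the whole content is: this particular point lies in $\calQ_\lambda(\alpha,\ba,\bb,\calW)$.

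First I would write out the feasibility condition for this point. By definition \eqref{def:calQlambda}, membership requires
\[
\min_{(\tilP,P)\in\calP(\calX)^2}\mathrm{LD}(\bQ,\tilbQ_1,\tilbQ_2,\tilP,\tilP,P|\alpha,\ba,\bb,\calW)\le\lambda,
\]
and with our choice of $(\bQ,\tilbQ_1,\tilbQ_2)$ the left-hand side is exactly
\[
\min_{(\tilP,P)}\sum_{k\in[K]}\!\Big(a_kD(P_2W_k\|\tilP W_k)+\alpha b_kD(P_1W_k\|\tilP W_k)+\alpha b_kD(P_2W_k\|PW_k)\Big),
\]
which is precisely $\rmG_\alpha(\ba,\bb)$ as defined in \eqref{def:rmG} (the $P$-minimization kills the last term, giving the simplified form there). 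Thus the point is feasible iff $\rmG_\alpha(\ba,\bb)\le\lambda$.

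Next I would invoke the monotonicity facts already recorded just before the lemma: $\rmG_\alpha(\ba,\bb)$ is increasing in $\alpha$ with $\rmG_0(\ba,\bb)=0$, and $\alpha_0(\ba,\bb,\lambda)$ is defined as the solution of $\lambda=\rmG_\alpha(\ba,\bb)$. Hence for every $\alpha\in[0,\alpha_0(\ba,\bb,\lambda)]$ we have $\rmG_\alpha(\ba,\bb)\le\rmG_{\alpha_0}(\ba,\bb)=\lambda$, so the candidate point is feasible, giving $f_\alpha(\ba,\bb,\lambda)\le 0$; combined with nonnegativity of $\mathrm{LD}$ (hence of $f_\alpha$), we conclude $f_\alpha(\ba,\bb,\lambda)=0$. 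One should also handle the degenerate case $\lambda=0$: then $\alpha_0=0$, the interval is $\{0\}$, and at $\alpha=0$ the term $\sum_k\alpha b_k(\cdots)$ vanishes and $\mathrm{LD}$ reduces to $\sum_k a_kD(P_2W_k\|\tilP W_k)$, minimized to $0$ by $\tilP=P_2$, so the argument still goes through.

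I do not expect a serious obstacle here; the only point needing a little care is matching the two equivalent forms of $\rmG_\alpha$ in \eqref{def:rmG} — i.e. verifying that the inner minimization over $P$ of $\sum_k\alpha b_kD(P_2W_k\|PW_k)$ is $0$, achieved at $P=P_2$ — and making sure the minimizing $(\tilP,P)$ in the feasibility condition is allowed to range over all of $\calP(\calX)^2$ (it is, by \eqref{def:calQlambda}), so that the value of the min is genuinely $\rmG_\alpha(\ba,\bb)$ and not something larger. Everything else is bookkeeping.
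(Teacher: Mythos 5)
Your proposal is correct and follows essentially the same route as the paper: exhibit the feasible point $Q_k^*=P_2W_k$, $\tilQ_{1,k}^*=P_1W_k$, $\tilQ_{2,k}^*=P_2W_k$, note its feasibility is exactly the condition $\rmG_\alpha(\ba,\bb)\le\lambda$ (guaranteed for $\alpha\le\alpha_0$ by monotonicity of $\rmG_\alpha$), and observe the objective vanishes there while $f_\alpha\ge 0$. Your added care in matching the feasibility minimum to $\rmG_\alpha$ and handling $\lambda=0$ only makes explicit what the paper leaves implicit.
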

We verified Lemma~\ref{extreme:alpha=0}  numerically by plotting $f_{\alpha}(\ba,\bb,\lambda)$ as a function of $\alpha$ when $\alpha$ is small  for certain values of $(\ba,\bb,\lambda)$  in Figure \ref{fig:f_alpha}. The proof of Lemma \ref{extreme:alpha=0} is straightforward since when $\alpha\leq \alpha_0(\ba,\bb,\lambda)$, $(\bQ^*,\tilde{\bQ}_1^*,\tilde{\bQ}_2^*)\in\calQ_{\lambda}(\alpha,\ba,\bb,\calW)$ where $Q_k^*=P_2W_k$, $\tilQ_{1,k}^*=P_1W_k$ and $\tilQ_{2,k}^*=P_2W_k$ and thus $\mathrm{LD}(\bQ^*,\tilde{\bQ}_1^*,\tilde{\bQ}_2^*,P_2,P_1,P_2|\alpha,\ba,\bb,\calW)=0$. The intuition is that when $\alpha$ is small enough, for any $\lambda>0$, the decision rule $\gamma$ in~\eqref{Eq:test} always declares $\rmH_1$, which means that $\beta_2(\gamma,P_1,P_2)=1$, so the corresponding exponent is identically $0$.

\begin{figure*}[t]
	\centering
	\subfigure[$\alpha=10$ and $W_1,W_2$ are both \emph{stochastic} matrices]{
		\begin{minipage}[t]{0.48\linewidth}
			\centering
			\includegraphics[width=2.5in]{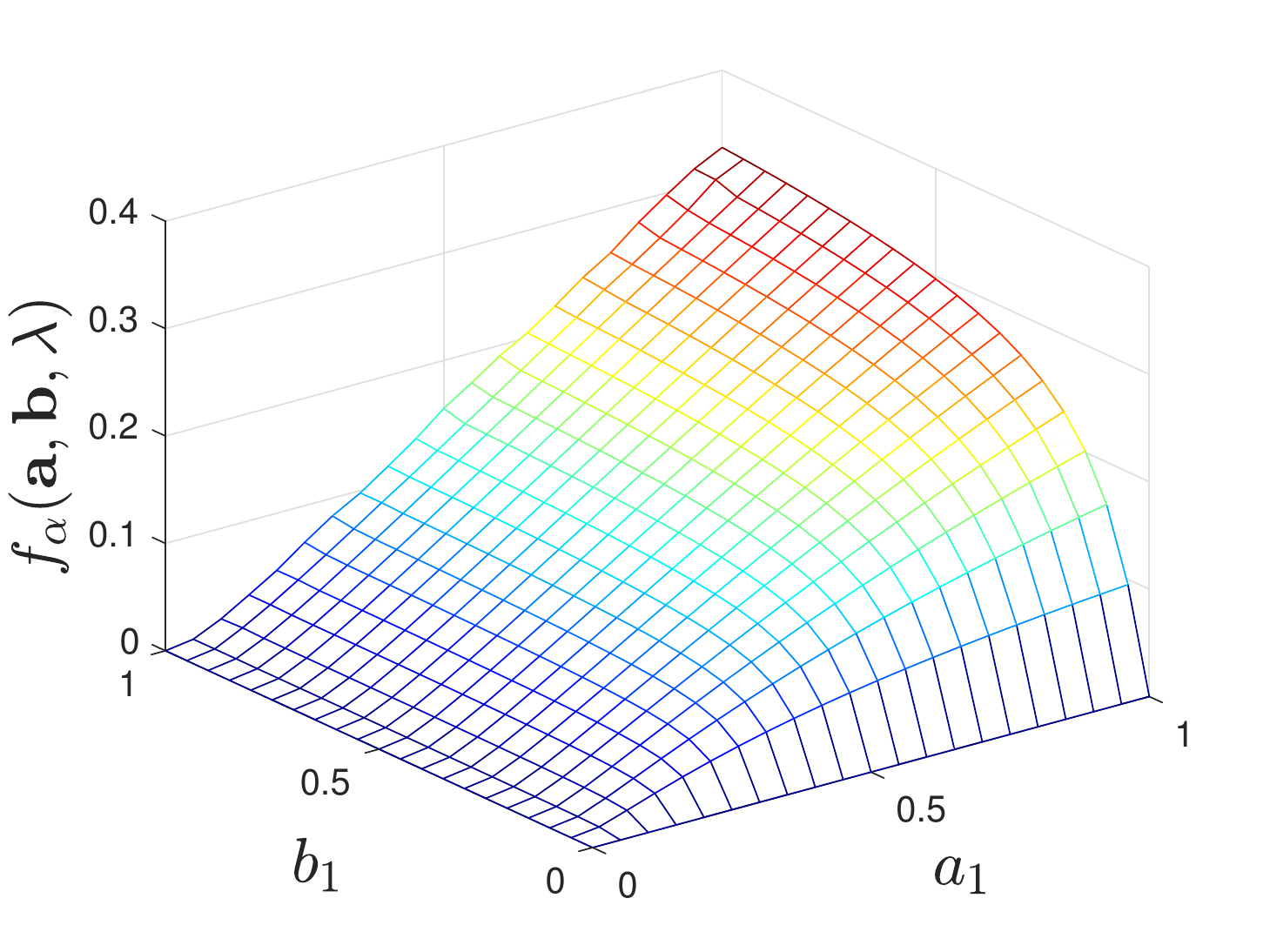}
			\label{fig:result1_withoutV}
	\end{minipage}}
	\subfigure[$\alpha=1$ and $W_1,W_2$ are both \emph{deterministic} matrices]{
		\begin{minipage}[t]{0.48\linewidth}
			\centering
			\includegraphics[width=2.5in]{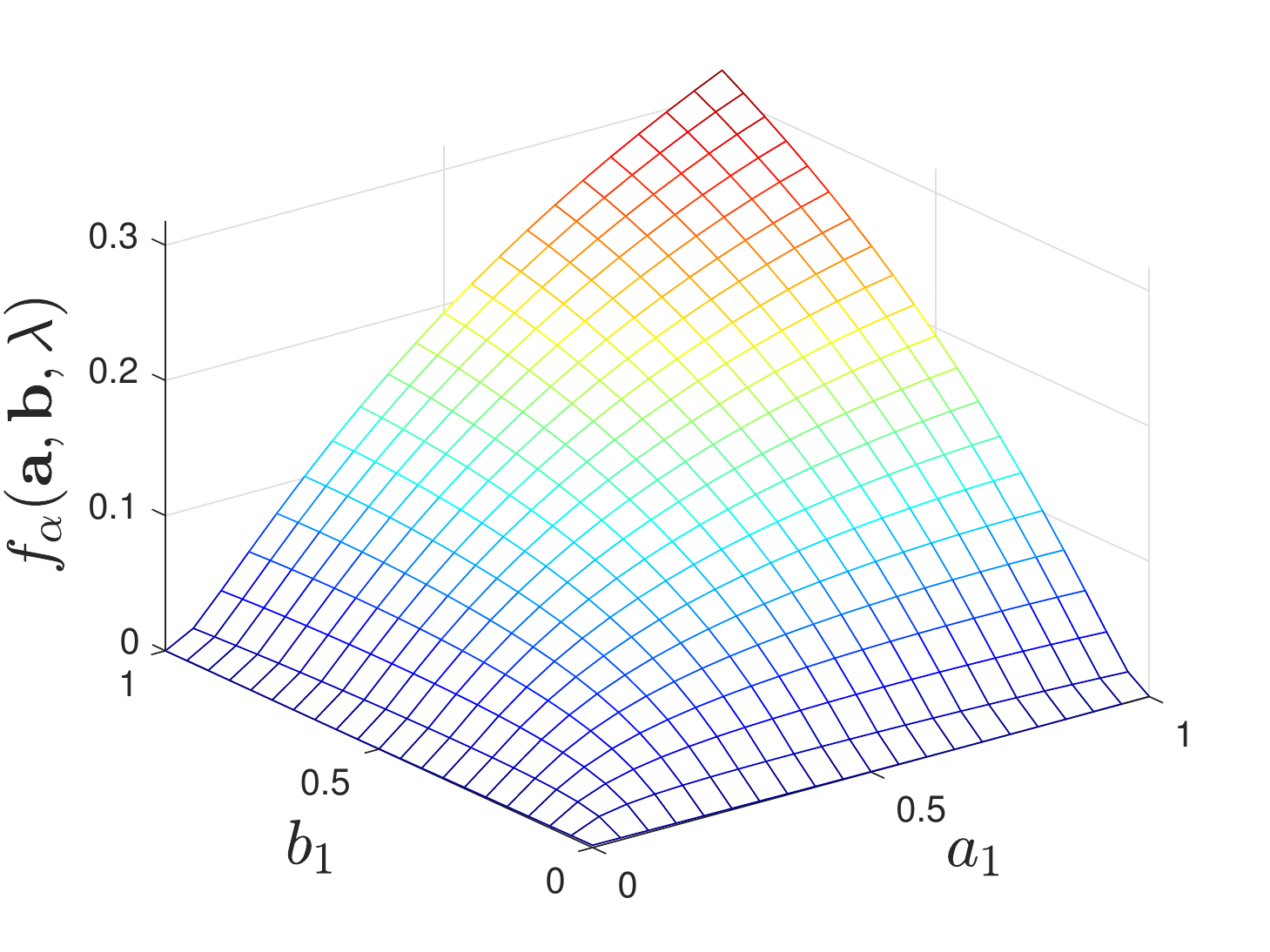}
			\label{fig:result2_withoutV}
	\end{minipage}}
	\caption{Numerical evaluations  of $f_{\alpha}(\ba,\bb,\lambda)$ when $K=2$ and $\lambda=0.01$. Note that the maxima occur at the corner points of $[0,1]^2$.}
	\label{fig:result_withoutV}
\end{figure*}


\subsection{ Numerical Study on Optimal Proportions of Local Decision Rules}
\label{sec:numerical}
In the following, we present numerical results to illustrate the properties of the optimal proportions of   local decision rules $(\ba_\alpha^*,\bb_\alpha^*):=\argmax_{\ba,\bb}f_{\alpha}(\ba,\bb,\lambda)$ when $\alpha\in (\alpha_0(\ba,\bb,\lambda),\infty)$; that is, $\alpha$ is moderate.

{\color{black}
When $K=2$, regardless of the stochasticity of the channels in $\calW$, we find that the maximal value of $f_{\alpha}(\ba,\bb,\lambda)$ always lies at a corner point of the feasible set of $(\ba,\bb)$. See Figure~\ref{fig:result_withoutV} for numerical examples.
When $K\geq 3$,  we find that the results are more involved and it is not necessarily optimal to use only one local decision rule. To clarify our observations, we first describe cases where our numerical calculations of the exponent suggest that it is optimal to use only one local decision rule to achieve the optimal type-II error exponent; this is analogous to~\cite[Theorem 2]{tsitsiklis1988decentralized}. We then consider other cases and briefly discuss why it is not always optimal to use one local decision rule.

\subsubsection{When One Local Decision Rule is Optimal}
In most practical distributed detection systems, the local decision rule at each sensor is a {\em deterministic} compressor or quantizer. However, under certain conditions, randomized local decision rules can be used to provide privacy~\cite{gilani2019distributed,liao2017hypothesis,du2012privacy} or to satisfy power constraints~\cite[Sec.~IV]{tuncel2005error}. We  now describe a class of local decision rules  for which the exponent can be simplified and numerical calculations of the exponents suggest that full diversity of local decision rules is unnecessary.

Let $\calV_{\rmI}$ be the set of stochastic matrices (channels) with $M=|\calX|$ rows and $L=|\calZ|$ columns whose rows contain a permutation of the rows of $I_{L}$, the $L\times L$ identity matrix.  The set $\calV_{\rmI}$ includes all deterministic mappings (e.g., Figure \ref{fig:deterministic}) and a subset of stochastic mappings as long as for each $z\in \cal Z$, there exists an $x_z\in \cal X$ that maps directly to it, as illustrated in Figure~\ref{fig:random1}.  Note that Tsitsiklis~\cite{tsitsiklis1988decentralized}  considers only deterministic local decision rules, which certainly falls into the class $\calV_\rmI$.  The definition is extended in the obvious way if $M=\infty$ (i.e., for all $z\in\calZ$, there exists $x_z\in\calX$ such that $V(z|x_z)=1$).

\begin{figure}[t]
	\centering
	\setlength{\unitlength}{0.6cm}
	\subfigure[Stochastic $V$]{
		\begin{minipage}[t]{0.48\textwidth}
			\centering
			\scalebox{0.55}{
				\begin{picture}(5,9)(1,-5)
				\Large
				\put(-2,4.5){\makebox(0,0){$\calX$}}
				\put(-2,3){\circle{1}\makebox(-2,0){1}}
				\put(-2,1.5){\circle{1}\makebox(-2,0){2}}
				\put(-2,0){\circle{1}\makebox(-2,0){3}}
				\put(-2,-1.5){\circle{1}\makebox(-2,0){4}}
				\put(-2,-3){\circle{1}\makebox(-2,0){5}}
				\put(-2,-4.5){\circle{1}\makebox(-2,0){6}}	
				
				\put(-1.5,3.2){\vector(4,-1){9}}
				\put(-1.5,1.5){\vector(2,-1){9}}
				\put(-1.5,1.5){\vector(4,-1){9}}
				\put(-1.5,0){\vector(3,-1){9}}
				\put(-1.5,-1.3){\vector(1,0){9}}
				\put(-1.5,-3){\vector(1,0){9}}
				\put(-1.7,-2.6){\vector(3,1){9.5}}
				\put(-1.5,-4.5){\vector(3,1){9}}
				\put(-1.7,-4.2){\vector(2,1){9.5}}
				
				\put(8,4.5){\makebox(0,0){$\calZ$}}	
				\put(8,1){\circle{1}\makebox(1.75,0){1}}
				\put(8,-1){\circle{1}\makebox(1.75,0){2}}
				\put(8,-3){\circle{1}\makebox(1.75,0){3}}			
				\end{picture}}
			\label{fig:random1}
	\end{minipage}}
	\subfigure[Deterministic $V$]{
		\begin{minipage}[t]{0.48\textwidth}
			\centering
			\scalebox{0.55}{
				\begin{picture}(5,9)(1,-5)
				\Large
				\put(-2,4.5){\makebox(0,0){$\calX$}}
				\put(-2,3){\circle{1}\makebox(-2,0){1}}
				\put(-2,1.5){\circle{1}\makebox(-2,0){2}}
				\put(-2,0){\circle{1}\makebox(-2,0){3}}
				\put(-2,-1.5){\circle{1}\makebox(-2,0){4}}
				\put(-2,-3){\circle{1}\makebox(-2,0){5}}
				\put(-2,-4.5){\circle{1}\makebox(-2,0){6}}	
				
				\put(-1.5,3.2){\vector(4,-1){9}}
				\put(-1.5,1.5){\vector(2,-1){9}}
				\put(-1.5,0){\vector(3,-1){9}}
				\put(-1.5,-1.3){\vector(1,0){9}}
				\put(-1.5,-3){\vector(1,0){9}}
				\put(-1.7,-4.2){\vector(2,1){9.5}}
				
				\put(8,4.5){\makebox(0,0){$\calZ$}}	
				\put(8,1){\circle{1}\makebox(1.75,0){1}}
				\put(8,-1){\circle{1}\makebox(1.75,0){2}}
				\put(8,-3){\circle{1}\makebox(1.75,0){3}}		
				\end{picture}}
			\label{fig:deterministic}
	\end{minipage}}
	\caption{Examples of random and deterministic $V\in\calV_I$.}
	\label{fig:W example}
\end{figure}
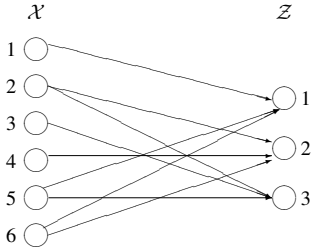
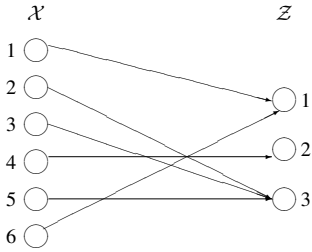


We assume that the second training sequence $Y_2^N$ is pre-processed by one local decision rule $V\in\calV_I$, i.e., $\tilY_{2,i}\sim V(\cdot|Y_{2,i})$ for all $i\in[N]$. The channels $\{W_j\}_{j\in[K]}$ that are used to  pre-process the test sequence $X^n$ and the first training sequence $Y_1^N$ are arbitrary. 
Under such a setting, we can simplify the asymptotically optimal error exponent and test (cf.~\eqref{Eq:thm1 exponent} and \eqref{Eq:test}) as follows:
\begin{align}
&\lim_{n\to\infty}E_{\calV_\rmI}^*(n,\alpha,P_1,P_2,\lambda|\ba,\bb,\calW) \nn\\*
&=\min_{\substack{(\bQ,\tilde{\bQ}) \\\in \calQ_{\lambda,\calV_I}(\alpha,\ba,\bb,\calW)}}\sum_{k\in[K]}\big(a_kD(Q_k\|P_2 W_k) \nn\\*
 &\qquad\qquad\qquad\qquad+\alpha b_k D(\tilQ_k\|P_1 W_k)\big)
\end{align}
where $\calQ_{\lambda,{\calV_\rmI}}(\alpha,\ba,\bb,\calW)
:=\big\{(\bQ,\tilde{\bQ})\in \calP([L])^{K}:\min_{\tilP\in\calP(\calX)}\sum_{k\in[K]}\big(a_kD(Q_k\|\tilP W_k)+\alpha b_k D(\tilQ_k\|\tilP W_k)\big)\leq \lambda\big\}$, and $\gamma_{\calV_\rmI}$ is given  in \eqref{eqn:gamma_calV_rmI} at the top of the next page.
\begin{figure*}
\begin{align}
\gamma_{\calV_\rmI}(Z^n,\tilY_1^N,\tilY_2^N) = \left\{
\begin{array}{cc}
\rmH_1 & \min_{\tilP}\sum_{k\in[K]}\big(a_kD(T_{Z^{na_k}}\|\tilP W_k)  +\alpha b_k D(T_{Y_1^{Nb_k}}\|\tilP W_k)\big)\leq \lambda,\\
\rmH_2 & \text{otherwise}.
\end{array}
\right.  \label{eqn:gamma_calV_rmI}
\end{align}\hrulefill
\end{figure*}

\begin{figure*}[t]
	\centering
	\subfigure[$\bb=(1,0,0)$]{
		\begin{minipage}[t]{0.48\linewidth}
			\centering
			\includegraphics[width=2.5in]{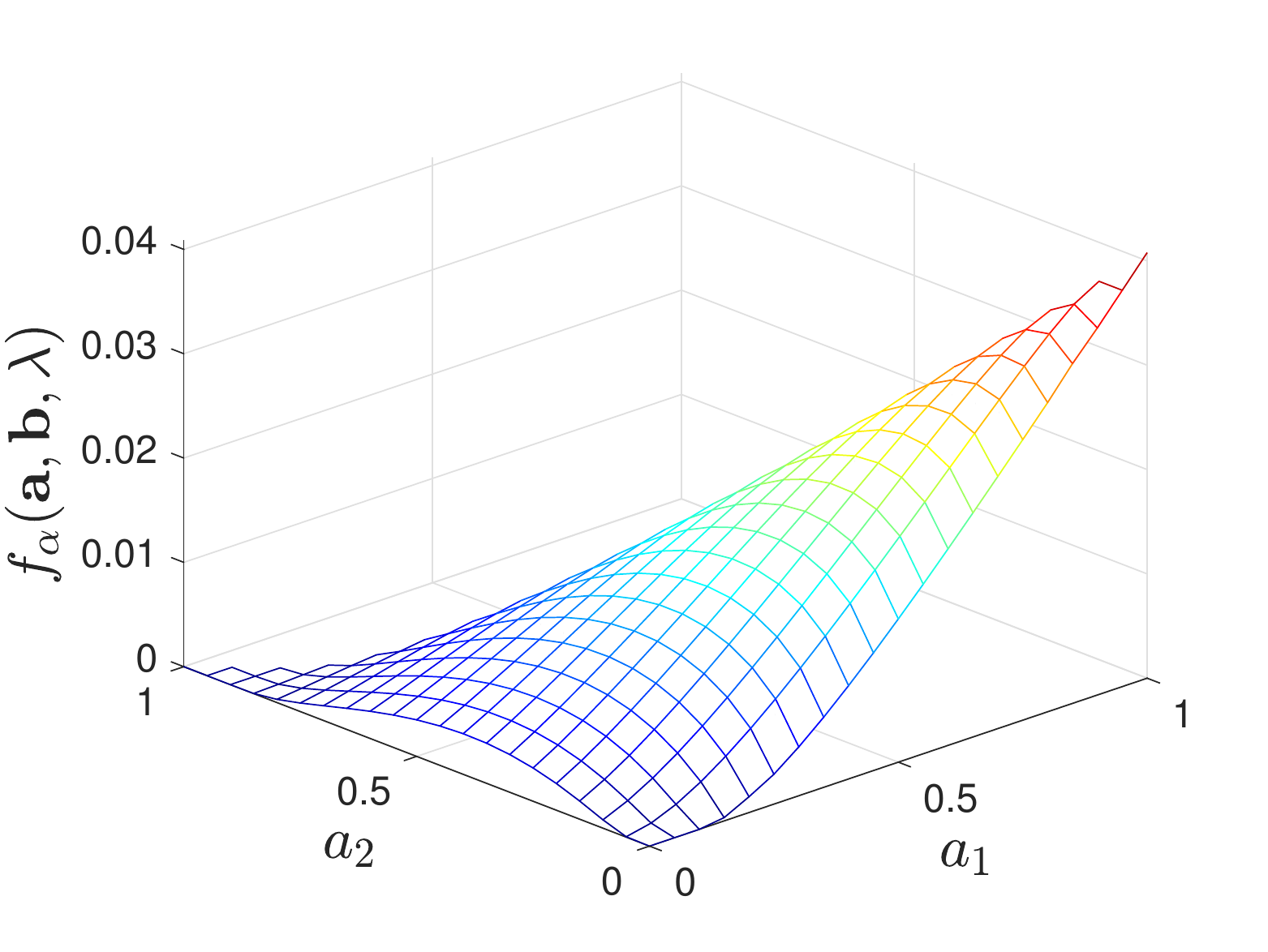}
			\label{fig:f K=3 b=100}
	\end{minipage}}
	\subfigure[$\bb=(0,1,0)$]{
		\begin{minipage}[t]{0.48\linewidth}
			\centering
			\includegraphics[width=2.5in]{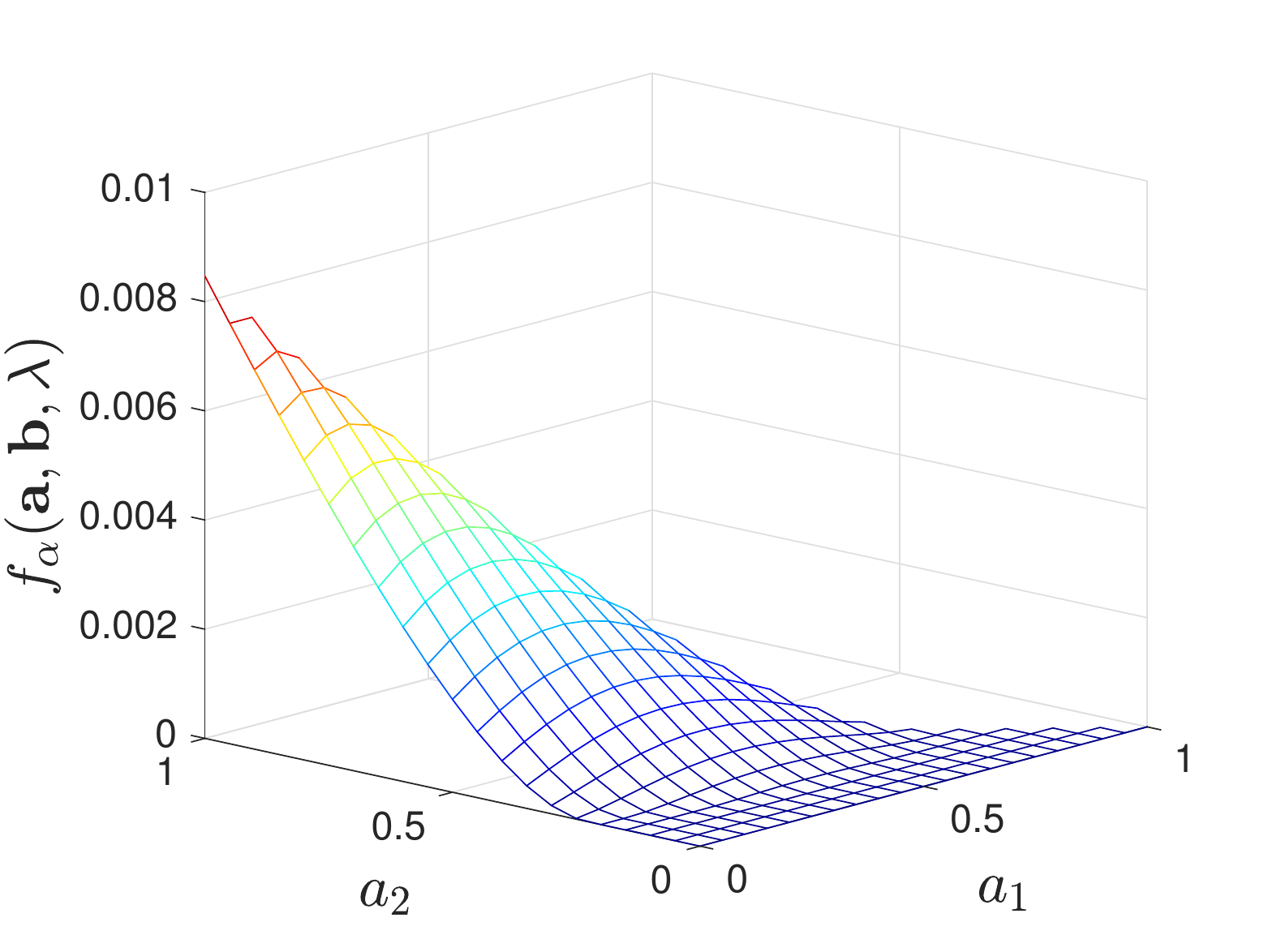}
			\label{fig:f K=3 b=010}
	\end{minipage}}
	\subfigure[$\bb=(0,0,1)$]{
		\begin{minipage}[t]{0.48\linewidth}
			\centering
			\includegraphics[width=2.5in]{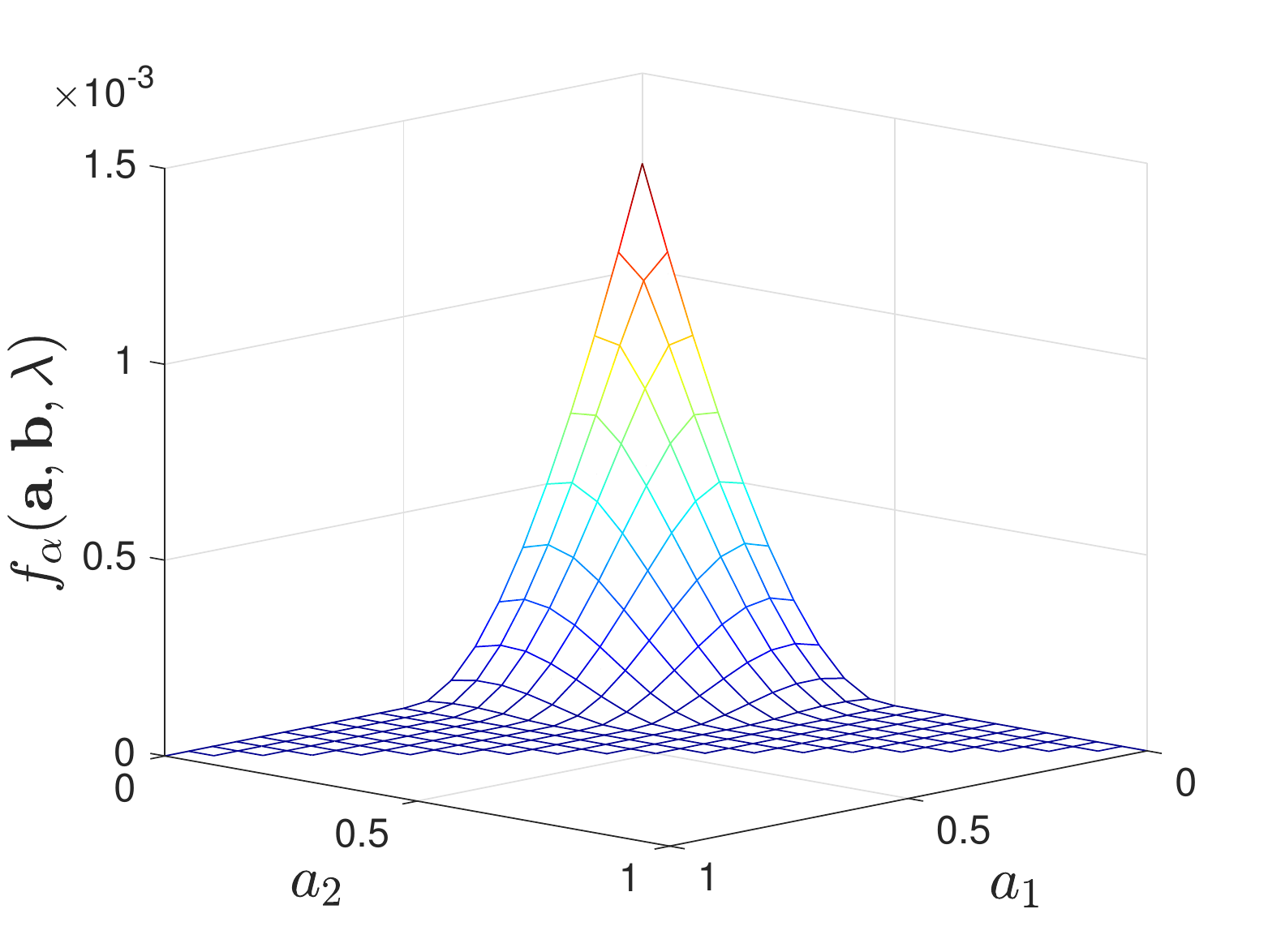}
			\label{fig:f K=3 b=001}
	\end{minipage}}
	\subfigure[$\bb=(\frac{1}{3},\frac{1}{3},\frac{1}{3})$]{
		\begin{minipage}[t]{0.48\linewidth}
			\centering
			\includegraphics[width=2.5in]{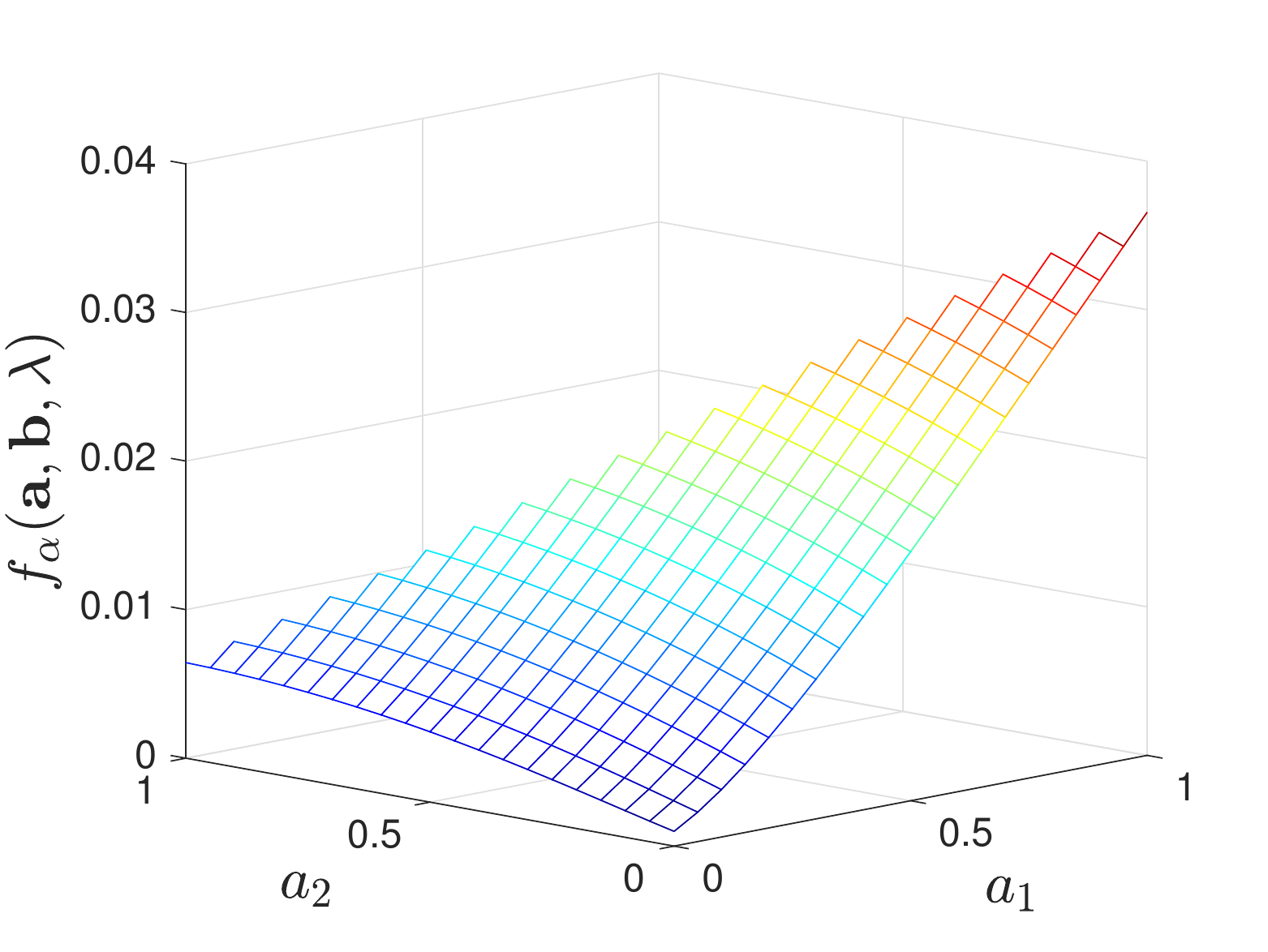}
			\label{fig:f K=3 b=0.3333}
	\end{minipage}}
	\caption{Numerical evaluations  of $f_{\alpha}(\ba,\bb,\lambda)$ with the assumption of $V\in\calV_I$ when $K=3$, $\lambda=0.01$, $\alpha=10$ and $W_1,W_2,W_3$ are deterministic. Note that for each vector $\bb$, the maxima with respect to $\ba$ occur at the corner points of $[0,1]^2$.}
	\label{fig:f K=3}
\end{figure*}

By calculating $f_{\alpha}(\ba,\bb,\lambda)$ for various  $(P_1,P_2)$ and $\calW$, we find that when $\alpha$ is moderate (i.e., neither $\le\alpha_0$ nor $\infty$), the maximal value of $f_{\alpha}(\ba,\bb,\lambda)$ always lies at a corner point of the feasible set of $(\ba,\bb)$, as shown in Figure~\ref{fig:f K=3}. 
Additional numerical results are shown in Appendix \ref{Sec:figures}. 
Inspired by these numerical results, we present the following conjecture:
}

\begin{conjecture} \label{conj:det}
 For all $\alpha,\lambda\in\bbR_+$, the vectors  $\ba_\alpha^*$ and $\bb_\alpha^*$ that maximize $f_{\alpha}(\ba,\bb,\lambda)$ are deterministic if 
	the second training sequence $Y_2^N$ is pre-processed by a single channel $V\in\calV_I$.
\end{conjecture}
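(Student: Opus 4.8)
The plan is to show that $f_\alpha^*(\lambda)=\max_{(\ba,\bb)\in\calP([K])^2}f_\alpha(\ba,\bb,\lambda)$ is attained at a vertex of the polytope $\calP([K])^2$; every vertex is a pair $(\be_j,\be_l)$ of standard basis vectors, i.e.\ a deterministic pair, exactly as in Corollary~\ref{Coro:alpha extreme} for $\alpha=\infty$. Convexity of $(\ba,\bb)\mapsto f_\alpha(\ba,\bb,\lambda)$ would suffice but seems to fail in general, so the route I would take is to dualize the single constraint in the (convex) program defining the $\calV_I$-simplified $f_\alpha$, so that the dependence on $(\ba,\bb)$ becomes linear. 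For $\lambda>0$ the point $Q_k=\tilQ_k=PW_k$ (any $P\in\calP(\calX)$) is strictly feasible, so strong duality applies; attaching a multiplier $\mu\ge0$ and merging the inner minimization over $\tilP$ into the joint minimization over $(\bQ,\tilde{\bQ},\tilP)$ gives
\begin{align*}
f_\alpha(\ba,\bb,\lambda)&=\sup_{\mu\ge0}\Big[-\mu\lambda\\ &\ +\min_{\tilP\in\calP(\calX)}\sum_{k\in[K]}\big(a_k\phi_k(\tilP,\mu)+\alpha b_k\psi_k(\tilP,\mu)\big)\Big],
\end{align*}
where $\phi_k(\tilP,\mu):=\min_Q[D(Q\|P_2W_k)+\mu D(Q\|\tilP W_k)]$ and $\psi_k(\tilP,\mu):=\min_Q[D(Q\|P_1W_k)+\mu D(Q\|\tilP W_k)]$ are closed-form R\'enyi-type quantities, jointly convex in $\tilP$ and concave in $\mu$ (the case $\lambda=0$ is handled separately, and the regime where $f_\alpha\equiv0$, cf.~Lemma~\ref{extreme:alpha=0}, is trivial).

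Next I would reduce the problem to one coordinate at a time: it suffices to show that for each fixed $\bb$ the map $\ba\mapsto f_\alpha(\ba,\bb,\lambda)$ attains its maximum over $\calP([K])$ at a vertex, and symmetrically in $\bb$ --- for then, starting from a joint maximizer $(\ba^*,\bb^*)$, the first fact lets us replace $\ba^*$ by a basis vector while staying a maximizer, and the second then does the same to $\bb^*$. Fix $\bb$ and set $c_\mu(\tilP):=\alpha\sum_k b_k\psi_k(\tilP,\mu)$. For each $\mu$ the function $(\ba,\tilP)\mapsto\sum_k a_k\phi_k(\tilP,\mu)+c_\mu(\tilP)$ is affine (hence concave) in $\ba$ and convex in $\tilP$ on compact convex domains, so Sion's minimax theorem lets us interchange $\max_\ba$ with $\min_\tilP$, and a linear functional on the simplex is maximized at a vertex; hence
\begin{align*}
\max_{\ba}f_\alpha(\ba,\bb,\lambda)&=\sup_{\mu\ge0}\Big[-\mu\lambda\\ &\ +\min_{\tilP}\big(\max_{k\in[K]}\phi_k(\tilP,\mu)+c_\mu(\tilP)\big)\Big],
\end{align*}
whereas the target quantity is $\max_{j}f_\alpha(\be_j,\bb,\lambda)=\max_{j}\sup_{\mu}[-\mu\lambda+\min_{\tilP}(\phi_j(\tilP,\mu)+c_\mu(\tilP))]$. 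The inequality ``$\ge$'' is immediate since each $\be_j$ is feasible; the content of the conjecture is the reverse inequality, i.e.\ an interchange of $\min_{\tilP}$ with the maximization over channels under the outer $\sup_\mu$.

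What makes this plausible --- and dictates the shape of the remaining argument --- is that $f_\alpha(\cdot,\bb,\lambda)$ is a \emph{pointwise supremum over $\mu$} of functions that are merely concave in $\ba$; such an envelope can attain its simplex-maximum at a vertex even though no single term does, because different values of $\mu$ (i.e.\ different effective strengths of the type-I constraint) favour different channels. To turn this into a proof I would run a first-order analysis at a putative interior maximizer $\ba^\star$, with active multiplier(s) $\mu^\star$ and inner optimizer(s) $\tilP^\star$: interiority forces a constant vector to lie in the convex hull of the channel-value vectors $(\phi_1(\tilP,\mu),\dots,\phi_K(\tilP,\mu))$ over the active $(\mu,\tilP)$, which in particular makes several channels simultaneously maximal for the $\phi$-term at $\tilP^\star$. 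The goal would then be to use the structure behind the hypothesis $V\in\calV_I$ --- the property that $P\mapsto PV$ is onto $\calP([L])$, which is precisely what collapsed the two auxiliary distributions and removed the $Y_2$-branch in the simplified $f_\alpha$ we are using --- to show such a coincidence is non-generic, and that whenever it does occur one may perturb $\tilP^\star$ (exploiting strict convexity of the $\phi_k$'s along the relevant directions) so as to break the tie without decreasing $f_\alpha$, thereby moving to a vertex; an identical analysis applied to $\bb$ finishes the reduction.

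The last step is the main obstacle, and is exactly why the statement is only a conjecture: without a non-degeneracy input there is no reason for $\min_{\tilP}\max_k$ to equal $\max_k\min_{\tilP}$, the coupling between the $\mu$-envelope and the optimization over $(\ba,\bb)$ is delicate, and the numerics only certify that the problematic ties do not arise in the instances computed (Figs.~\ref{fig:result_withoutV}, \ref{fig:f K=3}). A complete proof would either establish the required tie-breaking lemma for $\calV_I$-channels in full generality, or isolate the extra condition on $(P_1,P_2,\calW,\alpha)$ under which it holds.
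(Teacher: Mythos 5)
You are attempting to prove Conjecture~\ref{conj:det}, and the first thing to note is that the paper contains \emph{no} proof of this statement: its only support there is numerical (Figs.~\ref{fig:result_withoutV}, \ref{fig:f K=3}, \ref{fig:add1}, \ref{fig:W random}), and the authors explicitly list its resolution as future work. The only rigorously established analogue is the $\alpha\to\infty$ case (Lemma~\ref{extreme:f} and Corollary~\ref{Coro:alpha extreme}), proved by an elementary support-matching argument (comparing the cases where the set of indices with $a_k>0$, $b_k=0$ is empty or not, and plugging in the single-channel optimizers $Q_k^*$ achieving $f_{\infty}(\be_k,\be_k,\lambda)$), not by Lagrangian duality. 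So there is no paper proof to compare against; the relevant question is whether your argument closes the conjecture, and it does not.

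Your intermediate machinery looks sound: for $\lambda>0$ Slater's condition holds, so the dual representation of the $\calV_\rmI$-simplified exponent is legitimate; $\phi_k(\cdot,\mu)$ and $\psi_k(\cdot,\mu)$ are convex in $\tilP$ because partial minimization of the jointly convex Lagrangian preserves convexity; Sion's theorem applies for each fixed $\mu$; and the coordinate-wise reduction (fix $\bb$, move $\ba$ to a vertex, then symmetrically for $\bb$) is logically valid, though it would only yield the \emph{existence} of a deterministic maximizer rather than that every maximizer is deterministic. The genuine gap — which you yourself flag — is the crucial reverse inequality: after the reduction, the conjecture is exactly the claim that $\sup_{\mu}\min_{\tilP}\bigl(\max_{k}\phi_k(\tilP,\mu)+c_\mu(\tilP)\bigr)-\mu\lambda$ does not exceed $\max_{k}\sup_{\mu}\min_{\tilP}\bigl(\phi_k(\tilP,\mu)+c_\mu(\tilP)\bigr)-\mu\lambda$, i.e.\ that the channel maximization can be pulled outside both the $\min_{\tilP}$ and the $\mu$-envelope. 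Your proposed first-order/tie-breaking analysis at an interior maximizer is only a sketch: you never show that ties among the $\phi_k$ at the inner optimizer $\tilP^\star$ can be broken without decreasing the objective, and no structural property of $\calV_\rmI$ is actually exploited beyond the fact (already used in the paper) that it collapses the $Y_2$ branch of the exponent. As it stands, your attempt proves the easy direction $\max_{k}f_{\alpha}(\be_k,\bb,\lambda)\le\max_{\ba}f_{\alpha}(\ba,\bb,\lambda)$ and reformulates the hard direction, which is precisely the open content of the conjecture; it is a reasonable research programme, but not a proof.
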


{\color{black}
\subsubsection{Results without Assumptions on Channels}

In this subsection, we discuss the general case where there is no assumption (e.g., deterministic or membership in $\calV_\rmI$) on local decision rules used to pre-process $X^n$, $Y_1^N$ and $Y_2^N$.

By calculating $f_{\alpha}(\ba,\bb,\lambda)$ for various  $(P_1,P_2)$ and $\calW$, we find that when $K=3$ and $\alpha$ is moderate, the maximal value of $f_{\alpha}(\ba,\bb,\lambda)$ does not always lie at a corner point; instead, it may occur at non-corner points within the feasible set of $(\ba,\bb)$. We illustrate this in Figure~\ref{fig:K3_Wrandom_withoutV} for the case of stochastic $W_1,W_2,W_3$ and for some $\bb=(b_1,b_2,b_3)$.

Our numerical  results suggest that without the knowledge of true distributions, it is not always optimal to use only one identical channel to process the test and training samples, which differs from the result in Tsitsiklis' paper \cite{tsitsiklis1988decentralized}. The difference can be explained intuitively as follows. First,  in \cite{tsitsiklis1988decentralized},  deterministic decision rules are considered while in our setting, we allow the channels to be stochastic. 
 Second, when $\alpha$ is moderate, we are not able to estimate the true distributions with arbitrary accuracy using the training samples. Thus, to combat the randomness induced by the channels and to compensate for the loss of (full) knowledge of the true distributions, the fusion center may require more information; hence the need for more diversity in the local decision rules.}

\begin{figure*}[t]
	\centering
	\subfigure[$\bb=(0.3,0.3,0.4)$]{
		\begin{minipage}[t]{0.48\linewidth}
			\centering
			\includegraphics[width=2.5in]{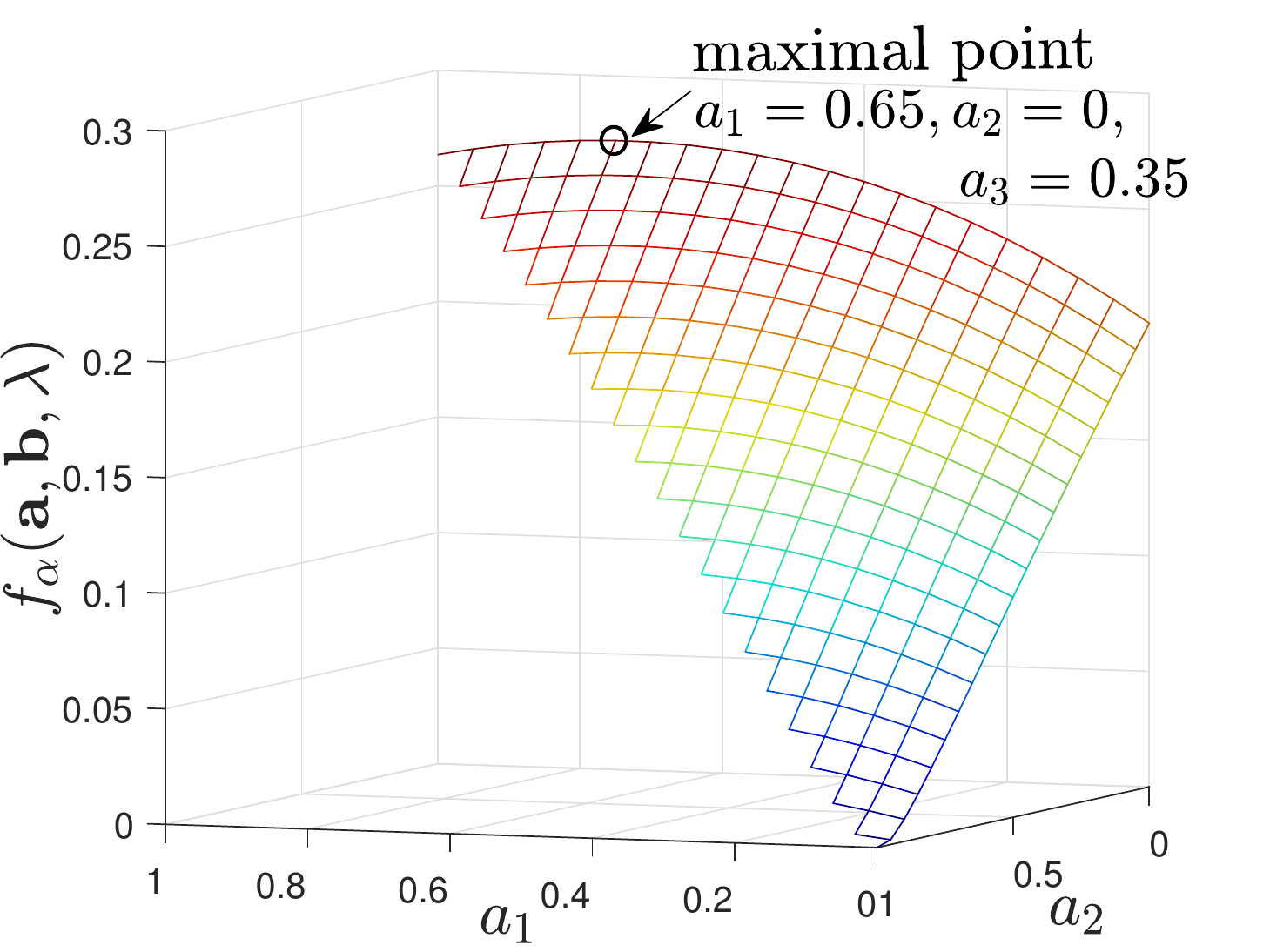}
			\label{fig:K3_Wrandom_1_withoutV}
	\end{minipage}}
	\subfigure[$\bb=(1,0,0)$]{
		\begin{minipage}[t]{0.48\linewidth}
			\centering
			\includegraphics[width=2.5in]{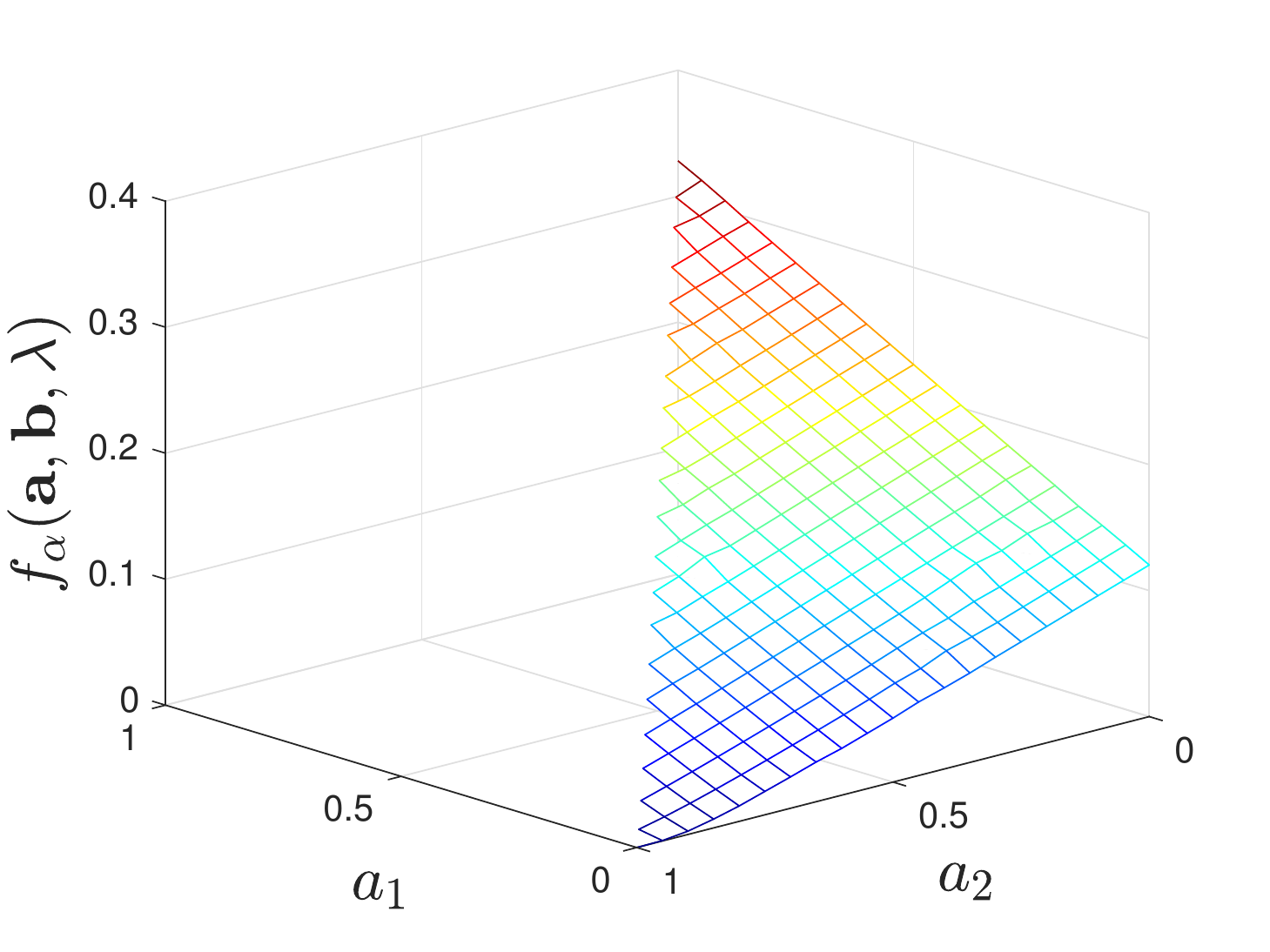}
			\label{fig:K3_Wrandom_2_withoutV}
	\end{minipage}}
	\caption{Numerical evaluations  of $f_{\alpha}(\ba,\bb,\lambda)$ when $K=3$, $\alpha=10$, $\lambda=0.01$ and $W_1,W_2,W_3$ are all stochastic matrices. Note that in Fig.~\ref{fig:K3_Wrandom_1_withoutV} the maximal value occurs at $(a_1,a_2,a_3)=(0.65,0,0.35)$. However, in Fig.~\ref{fig:K3_Wrandom_2_withoutV}, the maximal value occurs at $(a_1,a_2,a_3)=(1,0,0)$.}
	\label{fig:K3_Wrandom_withoutV}
\end{figure*}

\subsection{Connections to Results in Distributed Detection}
\label{sec:connections}

We discuss the connections between Theorem \ref{Thm:exponent} and   \cite{tsitsiklis1988decentralized}, which concerns distributed detection when the underlying distributions are known. 
 Throughout this subsection, to emphasize the dependence of error probabilities on $(\ba,\bb)$, we use $\beta_\nu(\gamma,P_1,P_2|\ba,\bb)$ to denote the type-$\nu$ error probability with respect to distributions $(P_1,P_2)$ when test $\gamma$ is used at the fusion center.

We first consider the Neyman-Pearson setting~\cite[Sec.~11.8]{cover2012elements}. Given any $\varepsilon\in[0,1]$, let $\Gamma_{\varepsilon}(\ba,\bb)$ be the set of tests satisfying that for all $(\tilP_1,\tilP_2)\in\calP(\calX)^2$,
\begin{equation}\label{eqn:upper_bd_exp}
\beta_1(\gamma,\tilP_1,\tilP_2|\ba,\bb)\leq \varepsilon.
\end{equation}
Let  the {\em optimal type-II error probability} subject to~\eqref{eqn:upper_bd_exp} be
\begin{align}
\beta_2^*(P_1,P_2)&:=\min_{\substack{(\ba,\bb)\in \\ \calP_n([K])\times\calP_{\alpha n}([K])}}\min_{\gamma\in\Gamma_{\varepsilon}(\ba,\bb)}\beta_2(\gamma,P_1,P_2|\ba,\bb).
\end{align}
Note that $\beta_2^*(P_1,P_2)$ depends on $n$, $\alpha$ and $\varepsilon$ but this dependence is suppressed for the sake of brevity. 

\begin{corollary}\label{Coro:tsitsiklis neyman}
Given any $V\in\calV_\rmI$ and any   $(P_1,P_2)\in\calP(\calX)^2$, 
\begin{equation}
\begin{aligned}
\!\!\lim_{\alpha\to\infty}\lim_{n\rightarrow\infty} \frac{1}{n}\log \frac{1}{\beta_2^*(P_1,P_2)}\!=\!\max_{k\in[K]}D(P_1W_k\|P_2W_k).
\end{aligned}\label{eqn:neyman}
\end{equation} 
\end{corollary}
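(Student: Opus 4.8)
The plan is to pass from the Neyman--Pearson quantity $\beta_2^*(P_1,P_2)$ to the single-letter exponent of Theorem~\ref{Thm:exponent} in its $\calV_\rmI$-specialized form, then let $\lambda\downarrow 0$, then $\alpha\to\infty$, and finally optimize over the proportions $(\ba,\bb)$. Fix $V\in\calV_\rmI$ and write $f_{\alpha,\calV_\rmI}(\ba,\bb,\lambda):=\lim_{n\to\infty}E_{\calV_\rmI}^*(n,\alpha,P_1,P_2,\lambda|\ba,\bb,\calW)$ for the simplified exponent of Section~\ref{sec:numerical}. The first step is the reduction
\begin{align}
&\lim_{n\to\infty}\frac1n\log\frac{1}{\beta_2^*(P_1,P_2)}\nn\\
&\quad=\max_{(\ba,\bb)\in\calP([K])^2}\;\lim_{\lambda\downarrow 0}f_{\alpha,\calV_\rmI}(\ba,\bb,\lambda).\nn
\end{align}
For ``$\geq$'' I would run the universal test $\gamma_{\calV_\rmI}$ of \eqref{eqn:gamma_calV_rmI} with a small fixed threshold $\lambda>0$: by the achievability half of Theorem~\ref{Thm:exponent} its type-I error is at most $\exp(-n(\lambda-o(1)))$ (hence below $\varepsilon$ for all large $n$, so $\gamma_{\calV_\rmI}\in\Gamma_\varepsilon(\ba^{(n)},\bb^{(n)})$), while its type-II error decays with exponent $f_{\alpha,\calV_\rmI}(\ba^{(n)},\bb^{(n)},\lambda)$; letting $(\ba^{(n)},\bb^{(n)})\to(\ba,\bb)$ (the type sets $\calP_n([K])\times\calP_{\alpha n}([K])$ are asymptotically dense and the exponent is continuous in $(\ba,\bb)$), optimizing over $(\ba,\bb)$, and then sending $\lambda\downarrow 0$ gives the inequality. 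For ``$\leq$'' I need the strong-converse statement that any test whose type-I error is uniformly bounded by the constant $\varepsilon<1$ over all $(\tilP_1,\tilP_2)$ cannot exceed the type-II exponent $\lim_{\lambda\downarrow0}f_{\alpha,\calV_\rmI}(\ba,\bb,\lambda)$; this I would extract from the converse portion of the proof of Theorem~\ref{Thm:exponent} via a limiting argument in $\lambda$, using that $f_{\alpha,\calV_\rmI}(\ba,\bb,\cdot)$ is non-increasing and continuous, so that $\lim_{\lambda\downarrow0}=\sup_{\lambda>0}$.

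Next I would evaluate the inner limit. Writing $f_{\alpha,\calV_\rmI}(\ba,\bb,\lambda)$ as the minimum of the continuous objective $\sum_k(a_kD(Q_k\|P_2W_k)+\alpha b_kD(\tilQ_k\|P_1W_k))$ over $\calQ_{\lambda,\calV_\rmI}(\alpha,\ba,\bb,\calW)$, a compactness argument --- the feasible sets decrease, as $\lambda\downarrow 0$, to the closed set on which the continuous functional $\min_{\tilP}\sum_k(a_kD(Q_k\|\tilP W_k)+\alpha b_kD(\tilQ_k\|\tilP W_k))$ vanishes --- gives
\begin{align}
&\lim_{\lambda\downarrow 0}f_{\alpha,\calV_\rmI}(\ba,\bb,\lambda)\nn\\
&=\min_{\tilP\in\calP(\calX)}\Big(\sum_k a_kD(\tilP W_k\|P_2W_k)+\alpha\sum_k b_kD(\tilP W_k\|P_1W_k)\Big)\nn\\
&=:g_\alpha(\ba,\bb),\nn
\end{align}
because the vanishing constraint forces $Q_k=\tilP^* W_k$ on $\supp(\ba)$ and $\tilQ_k=\tilP^* W_k$ on $\supp(\bb)$ for a common minimizer $\tilP^*$, while coordinates off the supports do not enter the objective. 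Now $g_\alpha(\ba,\bb)$ is non-decreasing in $\alpha$ and bounded above by $\sum_k a_kD(P_1W_k\|P_2W_k)$ (take $\tilP=P_1$), so any near-minimizer must satisfy $\sum_k b_kD(\tilP W_k\|P_1W_k)\to 0$, and the same reasoning gives
\begin{align}
&\lim_{\alpha\to\infty}g_\alpha(\ba,\bb)\nn\\
&=\min_{\tilP:\,\tilP W_k=P_1W_k\;\forall k\in\supp(\bb)}\sum_k a_kD(\tilP W_k\|P_2W_k)=:g_\infty(\ba,\bb),\nn
\end{align}
which is consistent with Lemma~\ref{extreme:f} followed by $\lambda\downarrow 0$. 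Monotonicity of $g_\alpha$ in $\alpha$ together with compactness of $\calP([K])^2$ lets me interchange $\lim_{\alpha\to\infty}$ with $\max_{(\ba,\bb)}$.

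Finally I would optimize over the proportions: for every $(\ba,\bb)$, taking $\tilP=P_1$ gives $g_\infty(\ba,\bb)\leq\sum_k a_kD(P_1W_k\|P_2W_k)\leq\max_{k\in[K]}D(P_1W_k\|P_2W_k)$ because $\ba$ is a probability vector; and with $(\ba,\bb)=(\be_{k^*},\be_{k^*})$ for $k^*\in\argmax_kD(P_1W_k\|P_2W_k)$, feasibility pins $\tilP W_{k^*}=P_1W_{k^*}$, so that $g_\infty(\be_{k^*},\be_{k^*})=D(P_1W_{k^*}\|P_2W_{k^*})$. Hence $\max_{(\ba,\bb)}g_\infty(\ba,\bb)=\max_{k}D(P_1W_k\|P_2W_k)$ --- this is the $\lambda\downarrow 0$ counterpart of Corollary~\ref{Coro:alpha extreme} --- and chaining the four steps yields \eqref{eqn:neyman}.

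The main obstacle is the strong-converse half of the first step: lifting the converse of Theorem~\ref{Thm:exponent}, which assumes an exponentially decaying type-I error, to the Neyman--Pearson regime where the type-I error need only stay below a fixed constant $\varepsilon<1$ for every pair $(\tilP_1,\tilP_2)$. The remaining ingredients --- two monotone/compactness limit interchanges and a one-line optimization over $(\ba,\bb)$ --- are routine.
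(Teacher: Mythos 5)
Your achievability half mirrors the paper's (Theorem~\ref{Thm:exponent} plus Lemma~\ref{extreme:f} and Corollary~\ref{Coro:alpha extreme}, with $\lambda\downarrow 0$), and your computation of the limits $g_\alpha$ and $g_\infty$ is consistent with that route. The problem is the converse, which you yourself flag as ``the main obstacle'' and then leave unresolved: you need that, at \emph{finite} $\alpha$, no test whose type-I error is only bounded by a constant $\varepsilon<1$ (uniformly over $(\tilP_1,\tilP_2)$) can beat $\lim_{\lambda\downarrow 0}f_{\alpha,\calV_\rmI}(\ba,\bb,\lambda)$, and your suggested way of getting it --- ``extract it from the converse portion of the proof of Theorem~\ref{Thm:exponent} via a limiting argument in $\lambda$'' --- does not go through. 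That converse (Lemma~\ref{Lem:type-based test optimal} together with Lemma~\ref{Lem:extended from Lin}) only forces a type-based test to accept $\rmH_1$ on types whose statistic is below $\lambda-\delta_n-\delta_n'$, where $\delta_n=O(n^{-1}\log n)$ comes from type counting. A constant constraint $\varepsilon$ corresponds to $\lambda_n=n^{-1}\log(1/\varepsilon)=O(n^{-1})$, which is swallowed by $\delta_n$, so the threshold $\lambda_n-\delta_n-\delta_n'$ is eventually negative and the resulting lower bound on $\beta_2$ is vacuous. Turning this into a genuine strong converse at fixed $\alpha$ would require a different argument (e.g.\ second-order/change-of-measure techniques), which neither you nor the paper supplies.

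The paper avoids this issue entirely, and this is where the two arguments genuinely differ. Its converse does not attempt a finite-$\alpha$ characterization under the $\varepsilon$-constraint: it simply observes that any $\gamma\in\Gamma_\varepsilon(\ba,\bb)$ in particular has type-I error at most $\varepsilon$ under the true pair $(P_1,P_2)$, and since the training sequences have a hypothesis-independent law they cannot help, so $\gamma$ is a feasible test for the \emph{classical} distributed Neyman--Pearson problem with known distributions. Tsitsiklis' Theorem~2 then caps the type-II exponent by $\max_{k\in[K]}D(P_1W_k\|P_2W_k)$ for every $n$ and every $\alpha$, which matches your achievability bound after $\alpha\to\infty$ --- and that is all Corollary~\ref{Coro:tsitsiklis neyman} asserts. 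So the exact finite-$\alpha$ equality
$\lim_{n\to\infty}\frac1n\log\frac{1}{\beta_2^*}=\max_{(\ba,\bb)}\lim_{\lambda\downarrow0}f_{\alpha,\calV_\rmI}(\ba,\bb,\lambda)$
that your plan hinges on is both unproven in your write-up and unnecessary for the statement; replacing your converse step by the reduction to the known-distribution problem closes the gap.
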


\begin{proof}[Proof sketch of Corollary \ref{Coro:tsitsiklis neyman}]
The direct parts of the result are corollaries of Theorem~\ref{Thm:exponent}, Lemma~\ref{extreme:f} and Corollary~\ref{Coro:alpha extreme} by letting $\lambda=\frac{1}{n}\log\frac{1}{\varepsilon}\downarrow 0$. The converse parts follow from \cite[Theorem~2]{tsitsiklis1988decentralized} where the distributions are known since one can never obtain better (larger) exponents with unknown distributions than with known distributions.
Since the justifications are straightforward, we omit the details for the sake of brevity. 
\end{proof}

We also consider the Bayesian setting. Assume the prior probabilities for $\rmH_1$ and $\rmH_2$ are $\pi_1$ and $\pi_2$ respectively. Clearly, $\pi_1+\pi_2=1$. Given any $(\ba,\bb)\in\calP_n([K])\times\calP_{\alpha n}([K])$ and any $\calW$, let the {\em Bayesian error probability} be 
\begin{align}
\rmP_\rme(\gamma,P_1,P_2|\ba,\bb)&:=\sum_{i=1}^2\pi_i\beta_i(\gamma,P_1,P_2|\ba,\bb). 
\end{align}
Furthermore, let the maximum {\em Chernoff information} between $P_2W_k$ and $P_1W_k$ be
\begin{equation}
   \lambda^*=\max_{k\in[K]}\max_{\rho\in[0,1]} \log \frac{1}{\sum_{z }(P_2W_k)^{\rho}(z)(P_1W_k)^{1-\rho}(z)}.
\end{equation}
and let $\Gamma_{\rm{Bayes}}(\ba,\bb)$ be the set of tests at the fusion center satisfying that for all $(\tilP_1,\tilP_2)\in\calP(\calX)^2$,
\begin{equation}\label{Eq:beta1 leq lambdak}
\beta_1(\gamma,\tilP_1,\tilP_2|\ba,\bb)\leq\exp(-n\lambda^*).
\end{equation}
Finally, let the {\em optimal Bayesian error probability} be 
\begin{align}
&\rmP_\rme^*(P_1,P_2)\nn\\*
&:=\min_{\substack{(\ba,\bb) \\ \in\calP_n([K])\times \calP_{\alpha n}([K])}}\min_{\gamma\in\Gamma_{\rm{Bayes}}(\ba,\bb)} \rmP_\rme(\gamma,P_1,P_2|\ba,\bb).
\end{align}
Again $\rmP_\rme^*(P_1,P_2)$ depends on both $n$ and $\alpha$.
\begin{corollary}\label{Coro:bayesian}
Given any $V\in\calV_\rmI$ and any  $(P_1,P_2)\in\calP(\calX)^2$,
\begin{equation}\label{Eq:bayesian exponent}
\lim_{\alpha\to\infty}\lim\limits_{n\rightarrow\infty} \frac{1}{n}\log\frac{1}{ \rmP_\rme^*(P_1,P_2)}=\lambda^*.
\end{equation}
\end{corollary}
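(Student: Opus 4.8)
The plan is to sandwich $\frac{1}{n}\log\frac{1}{\rmP_\rme^*(P_1,P_2)}$ between two bounds whose limits, as first $n\to\infty$ and then $\alpha\to\infty$, both equal $\lambda^*$; the direct (lower) bound is the substantive part. Fix an index $k^*\in[K]$ attaining the outer maximum in the definition of $\lambda^*$, and write $C_k:=\max_{\rho\in[0,1]}\log\frac{1}{\sum_{z}(P_2W_k)^{\rho}(z)(P_1W_k)^{1-\rho}(z)}$ for the Chernoff information between $P_1W_k$ and $P_2W_k$, so that $C_{k^*}=\lambda^*=\max_{k\in[K]}C_k$. For the direct part I would take $\ba=\bb=\be_{k^*}$ and apply Theorem~\ref{Thm:exponent} with threshold $\lambda^*$: by~\eqref{eqn:E_star}, for every $n$ there is a test $\gamma_n$ with $\beta_1(\gamma_n,\tilP_1,\tilP_2|\be_{k^*},\be_{k^*})\le\exp(-n\lambda^*)$ for all $(\tilP_1,\tilP_2)\in\calP(\calX)^2$, hence $\gamma_n\in\Gamma_{\rm{Bayes}}(\be_{k^*},\be_{k^*})$, whose type-II exponent approaches $f_\alpha(\be_{k^*},\be_{k^*},\lambda^*)$. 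Since $\rmP_\rme=\pi_1\beta_1+\pi_2\beta_2$ is squeezed between constant multiples of $\max\{\beta_1,\beta_2\}$, and $\rmP_\rme^*(P_1,P_2)\le\rmP_\rme(\gamma_n,P_1,P_2|\be_{k^*},\be_{k^*})$, I would conclude $\liminf_{n\to\infty}\frac{1}{n}\log\frac{1}{\rmP_\rme^*(P_1,P_2)}\ge\min\{\lambda^*,f_\alpha(\be_{k^*},\be_{k^*},\lambda^*)\}$.

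The remaining task for the direct bound is to compute $\lim_{\alpha\to\infty}f_\alpha(\be_{k^*},\be_{k^*},\lambda^*)$. By Lemma~\ref{extreme:f} this limit is $f_\infty(\be_{k^*},\be_{k^*},\lambda^*)$, and I would specialize~\eqref{def:kappa}: with only the $k^*$-th coordinate active, $\calP(P_1|\be_{k^*},\calW)$ forces $\tilP W_{k^*}=P_1W_{k^*}$, so $\kappa(\bQ,P_1|\be_{k^*},\be_{k^*},\calW)=D(Q_{k^*}\|P_1W_{k^*})$ and therefore
\begin{equation*}
f_\infty(\be_{k^*},\be_{k^*},\lambda^*)=\min_{Q\in\calP([L])\,:\,D(Q\|P_1W_{k^*})\le\lambda^*}D(Q\|P_2W_{k^*}).
\end{equation*}
The right-hand side is the Hoeffding/Blahut tradeoff exponent for i.i.d.\ binary hypothesis testing between $P_1W_{k^*}$ and $P_2W_{k^*}$ when the type-I exponent is pinned to $\lambda^*$; by the classical Chernoff-information balancing property (see \cite[Sec.~11.9]{cover2012elements}), this quantity equals the threshold exactly when the threshold is the Chernoff information $C_{k^*}$, and $C_{k^*}=\lambda^*$ by the choice of $k^*$. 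Hence $f_\infty(\be_{k^*},\be_{k^*},\lambda^*)=\lambda^*$, so $\lim_{\alpha\to\infty}\min\{\lambda^*,f_\alpha(\be_{k^*},\be_{k^*},\lambda^*)\}=\lambda^*$, the desired direct bound.

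For the converse I would fix any $n$, $\alpha$, $(\ba,\bb)$ and $\gamma\in\Gamma_{\rm{Bayes}}(\ba,\bb)$ and observe that if the fusion center internally generates the training sequences from $(P_1^N,P_2^N)$, passes them through $\calW$, and runs $\gamma$ on the result together with $Z^n$, then $\gamma$ becomes a (randomized) test for the classical distributed detection problem with \emph{known} distributions, channel set $\calW$, and proportion $\ba$ of channels applied to the source sequence, achieving exactly the same Bayesian error probability; this error is therefore no smaller than the optimal known-distribution Bayesian error at proportion $\ba$. By additivity of Chernoff information over the product laws $\bigotimes_{i\in[n]}P_1W_{h(i)}$ and $\bigotimes_{i\in[n]}P_2W_{h(i)}$, the latter decays with exponent $\sum_{k\in[K]}a_kC_k\le\max_{k\in[K]}C_k=\lambda^*$ (this is the binary case of the Bayesian distributed-detection result of \cite{tsitsiklis1988decentralized}), so $\rmP_\rme(\gamma,P_1,P_2|\ba,\bb)\ge\exp(-n(\lambda^*+o(1)))$ uniformly in $(\ba,\bb)$ and $\gamma$. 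Minimizing over $(\ba,\bb)\in\calP_n([K])\times\calP_{\alpha n}([K])$ and $\gamma$ gives $\limsup_{n\to\infty}\frac{1}{n}\log\frac{1}{\rmP_\rme^*(P_1,P_2)}\le\lambda^*$ for every $\alpha$, and together with the direct bound this establishes~\eqref{Eq:bayesian exponent}. The main obstacle is the middle paragraph: carrying out the single-channel reduction of $\kappa$, and then recognizing that the resulting Gutman-type exponent $f_\infty(\be_{k^*},\be_{k^*},\lambda^*)$ is exactly $\lambda^*$ by virtue of the exact balance between the type-I and type-II exponents at the Chernoff threshold.
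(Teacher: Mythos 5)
Your proposal is correct and follows essentially the same route as the paper's (very terse) proof sketch: the direct part via Theorem~\ref{Thm:exponent}, Lemma~\ref{extreme:f} and the single-channel optimality of Corollary~\ref{Coro:alpha extreme}, identifying the fixed point $f_{\infty}(\be_{k^*},\be_{k^*},\lambda)=\lambda$ at the Chernoff information, and the converse by reduction to the known-distribution setting of Tsitsiklis. Your write-up merely makes explicit the steps the paper leaves implicit (the $\min\{\lambda^*,f_\alpha\}$ bookkeeping, the balancing argument $g(C)=C$, and the simulation/additivity argument for the converse).
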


\begin{proof}[Proof sketch of Corollary \ref{Coro:bayesian}]
	The direct parts of the following results are corollaries of Theorem~\ref{Thm:exponent}, Lemma~\ref{extreme:f} and Corollary~\ref{Coro:alpha extreme} by solving $\max_{k\in[K]}f_{\infty}(\be_k,\be_k,\lambda)=\lambda$. 
	The (strong) converse parts follow from \cite{tsitsiklis1988decentralized}.
\end{proof}
Under the Bayesian setting, the exponents of the type-I and type-II error probabilities are equal~\cite[Thm.~11.9.1]{cover2012elements}. 

Note that Corollaries \ref{Coro:tsitsiklis neyman} and \ref{Coro:bayesian} are analogous to {\em  distributed detection} \cite{tsitsiklis1988decentralized} for the binary case under the Neyman-Pearson and Bayesian settings respectively where the true distributions $(P_1,P_2)$ are known. The intuition is that when the lengths of the training sequences are much longer than that of the source sequence (i.e., $\alpha\to\infty$), we can estimate the true distributions to arbitrary precision, i.e., as accurately as desired.

\section{$m$-ary Distributed Detection with the Rejection Option and Training Samples} \label{sec:mary}

In this section, we generalize the binary distributed detection problem to the scenario in which we desire to discriminate between $m\ge 2$ hypotheses with the rejection option. Our main contribution here is the identification of an appropriate  test statistic and test that achieves the optimal rejection exponent for a fixed lower bound on all error exponents.

\subsection{Problem Formulation} 

In the $m$-ary distributed detection problem, there are $m$ training sequences $\{Y_i^N\}_{i\in[m]}$, each generated i.i.d.\ according to an \emph{unknown} distribution $P_i\in\calP(\calX)$. There are  $n$ sensors. Each sensor observes a source symbol $X_i$ and compress/processes it into a noisy version  $Z_i$ just as in the binary distributed detection problem. Given noisy training sequences $\{\tilY_i^N\}_{i\in[m]}$ and the compressed source sequence $Z^n$, in which $\tilY_{i,j}\sim W_{g(j)}(\cdot|Y_{i,j})$ for all $i\in[m]$ and $j\in[N]$, the fusion center uses a decision rule $\gamma:[L]^{mN+n}\mapsto \{\rmH_1,\ldots,\rmH_m, \rmH_\rmr\}$ to discriminate among the following $m+1$ hypotheses:
\begin{itemize}
	\item $\rmH_j, \, \forall \, j\in[m]$: the source sequence $X^n$ and the $j$-th training sequence $Y_j^N$ are generated according to the same distribution;
	\item $\rmH_\rmr$: the source sequence $X^n$ is generated according to a distribution different from those which the training sequences are generated from and hence we reject all $\rmH_j,\,\forall\,  j \in [m]$.
\end{itemize}
Thus, the decision rule $\gamma$ partitions the sample space $[L]^{mN+n}$ into $m+1$ disjoint regions: $m$ acceptance regions $\{\Lambda_j(\gamma) \}_{j\in[m]}$, where $\Lambda_j(\gamma)$ favors hypothesis $\rmH_j$, i.e.,
\begin{align}
\Lambda_j(\gamma)&:=\Big\{(z^n,\tily_1^N,\ldots,\tily_m^N)\in[L]^{mN+n}: \nn\\* 
&\qquad\gamma(Z^n,\tily_1^N,\ldots,\tily_m^N)=\rmH_j \Big\},
\end{align}
and one rejection region $\Lambda_\rmr(\gamma):=(\cup_{j\in[m]}\Lambda_j(\gamma))^{\rmc}$ which favors hypothesis $\rmH_\rmr$. Note that here we assume that all $m$ training sequences are processed with channels in $\calW$ using the same index mapping function $g$. That is, all the first components $Y_{1,1},\ldots , Y_{m,1}$  are passed through the {\em same} channel, which is one element from $\calW$. The same is true for all the other $N-1$ components.

For conciseness, we  set $\bY^N=(Y_1^N,\ldots,Y_m^N)$ and use $\tilde{\bY}^N$ similarly. Furthermore, we set $\bP=(P_1,\ldots,P_m)$ and use $\tilde{\bP}$ and $\bQ$ similarly. Recall the definition of $\ba, \bb$ and the assumption that $N=\lceil\alpha n\rceil$. Given any decision rule $\gamma$ at the fusion center and any tuple of distributions $\bP$, the performance metrics we consider are the error probabilities and the rejection probabilities for each $j\in[m]$, i.e.,
\begin{align}
\beta_j(\gamma,\bP|\ba,\bb,\calW)&:=\bbP_j\{\gamma(Z^n,\tilde{\bY}^N)\notin\{ \rmH_j, \rmH_\rmr \} \},\\
\rej(\gamma,\bP|\ba,\bb,\calW)&:=\bbP_j\{\gamma(Z^n,\tilde{\bY}^N)=\rmH_\rmr \}.
\end{align}

We use $\beta_j(\gamma,\bP)$ and $\rej(\gamma,\bP)$ in place of $\beta_j(\gamma,\bP|\ba,\bb,\calW)$ and $\rej(\gamma,\bP|\ba,\bb,\calW)$ respectively if there is no risk of confusion.
For this setting, we are interested in tests that can simultaneously ensure exponential decay of the error probabilities under any hypothesis for {\em any} tuple of distributions and  exponential decay of  the rejection probabilities under each hypothesis for a {\em particular} tuple of distributions. To be concrete, given any tuple of distributions $\bP$ and any $\lambda\in\bbR_+$, we are interested in the following optimal exponent  of the rejection probability under hypothesis $\rmH_j$:
\begin{align}
&\nn E_j^*(n,\alpha,\bP,\lambda|\ba,\bb,\calW)\nn\\*
&:=\sup\Big\{E_j\in\bbR: \exists~ \gamma ~ \mathrm{s.t.}~ \forall~j\in[m], \nn\\* 
&\qquad\qquad\beta_j(\gamma,\tilde{\bP})\leq\exp(-n\lambda),~\forall~\tilde{\bP}\in\calP(\calX)^m, \nn\\*
&\qquad\qquad\rej(\gamma,\bP)\leq \exp(-nE_j) \Big\}\label{def:Ej*}.
\end{align} 
We emphasize that in this formulation, under each hypothesis, the error exponent is at least $\lambda$ for all tuples of distributions.

\subsection{Main Results}\label{Sec:mary results}
Before presenting the main result, we present some preliminary definitions.  Recall that for each $k\in[K]$, we use $z^{na_k}$ to denote the collection of $z_i$ satisfying $h(i)=k$, use $\bz^{n\ba}$ to denote $(z^{na_1},\ldots,z^{na_K})$ and use $\bT_{\bz^{n\ba}}$ to denote the vector of types $(T_{z^{na_1}},\ldots,T_{z^{na_K}})$. Similarly, for each $k\in[K]$ and $j\in[m]$, we use the notations $\tily_j^{Nb_k}$, $\tilde{\by}_j^{N\bb}$ and $\bT_{\tilde{\by}_j^{N\bb}}$. Given any tuple of distributions $\bP=(P_1,\ldots,P_m)\in\calP(\calX)^m$ and any $j\in[m]$, define the following linear combination of divergences
\begin{align}
&	\mathrm{LD}_j^{[m]}(\bT_{\bZ^{n\ba}},\{\bT_{\tilde{\bY}_i^{N\bb}}\}_{i\in[m]},\bP)\nn\\* 
	&:=\sum_{k\in[K]}\Big(a_kD(T_{Z^{na_k}}\| P_jW_k)+\sum_{i\in[m]}\alpha b_k D(T_{\tilY_i^{Nb_k}}\|P_i W_k) \Big),\label{def:LDjm}
\end{align}
and furthermore, let
\begin{align}
&	\mLD_j(\bT_{\bZ^{n\ba}},\{\bT_{\tilde{\bY}_i^{N\bb}}\}_{i\in[m]})\nn\\* 
	&:=\min_{\bP\in\calP(\calX)^m}\mathrm{LD}_j^{[m]}(\bT_{\bZ^{n\ba}},\{\bT_{\tilde{\bY}_i^{N\bb}}\}_{i\in[m]},\bP)\label{Eq:mary LD}.
\end{align}
Note that $\mathrm{LD}_j^{[m]}(\cdot)$ 
is a slight generalization of $\mathrm{LD}(\cdot)$ in \eqref{def:LD}.

In the following, we will see that $ \mLD_j(\cdot)$ is an appropriate  test statistic that will be used in the achievability proof and an optimized version of $\mathrm{LD}_j^{[m]}(\cdot)$ is the corresponding exponent. Finally, given any $(i,l)\in[m]^2$ satisfying $i\neq l$, we define the following set of the collections of distributions:
\begin{align}\label{def:mary calQ}
&\calQ_{\lambda,i,l}(\alpha,\ba,\bb,\calW):=\Big\{(\bQ,\{\tilde{\bQ}_i\}_{i\in[m]})\in\calP(\calX)^{(m+1)K}:\nn\\* 
&\quad \mLD_i(\bQ,\{\tilde{\bQ}_i\}_{i\in[m]})\leq \lambda,\mLD_l(\bQ,\{\tilde{\bQ}_i\}_{i\in[m]})\leq \lambda \Big\}.
\end{align}
Note that if we choose $(\bQ,\{\tilde{\bQ}_i\}_{i\in[m]})\in\calP(\calX)^{(m+1)K}$ such that $\bQ=\tilde{\bQ}_1=\ldots=\tilde{\bQ}_m$ and $Q_1=P_0W_1,~Q_2=P_0W_2,\ldots,Q_K=P_0W_K$ for any $P_0\in\calP(\calX)$, then for all $j \in [m]$, we have
\begin{equation}
\mathrm{LD}_j^*(\bQ,\tilde{\bQ}_j)=\mathrm{LD}_j^*(\{P_0W_k\}_{k\in[K]},\{P_0W_k\}_{k\in[K]})=0
\end{equation}
and 
\begin{align}
&\min_{\tilP_i}\sum_{k\in[K]}\alpha b_k D(\tilQ_{i,k}\| \tilP_i W_k) \nn\\*
&=\min_{\tilP_i}\sum_{k\in[K]}\alpha b_k D(P_0W_k\| \tilP_i W_k)=0, \;\;\forall\, i\in[m],
\end{align}
which implies that $\mLD_i(\bQ,\{\tilde{\bQ}_i\}_{i\in[m]})=0$ for all $i\in[m]$. Thus, $\calQ_{\lambda,i,l}(\alpha,\ba,\bb,\calW)$ is non-empty for any $(\alpha,\ba,\bb,\calW)$.

Our main result in this section is the following asymptotic characterization of $E_j^*(n,\alpha,\bP,\lambda|\ba,\bb,\calW)$.
\begin{theorem}\label{Thm:rejection exponent}
	Given any $(\lambda,\alpha)\in\mathbb{R}_+^2$ and any tuple of target distributions $\bP\in\calP(\calX)^m$, for all $j\in[m]$, we have
	\begin{align}\label{Eq:rejection exponent}
	&\lim\limits_{n\to\infty}E_j^*(n,\alpha,\bP,\lambda|\ba,\bb,\calW)\nn\\*
	&=\min_{(i,l)\in[m]^2:l\neq i}\min_{\substack{(\bQ,\{\tilde{\bQ}_i\}_{i\in[m]})\\
			\in\calQ_{\lambda,i,l}(\alpha,\ba,\bb,\calW)}}\mathrm{LD}_j^{[m]}(\bQ,\{\tilde{\bQ}_i\}_{i\in[m]},\bP).
	\end{align}
\end{theorem}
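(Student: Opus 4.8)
The plan is to prove Theorem~\ref{Thm:rejection exponent} by establishing a matching achievability bound, $\liminf_{n\to\infty}E_j^*(n,\alpha,\bP,\lambda|\ba,\bb,\calW)\ge f$, and converse bound, $\limsup_{n\to\infty}E_j^*(n,\alpha,\bP,\lambda|\ba,\bb,\calW)\le f$, where $f$ denotes the right-hand side of \eqref{Eq:rejection exponent}. This parallels the proof of Theorem~\ref{Thm:exponent} for the binary case and, at the level of the test, amounts to generalizing Unnikrishnan's rejection test~\cite{unnikrishnan2015asymptotically} by using, for each $j\in[m]$, the statistic $\mLD_j(\cdot)$ of~\eqref{Eq:mary LD} in place of a generalized Jensen--Shannon divergence.

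For achievability I would analyze the following test $\gamma^*$. Given the observed type vector, set $S:=\{s\in[m]:\mLD_s(\bT_{\bZ^{n\ba}},\{\bT_{\tilde{\bY}_i^{N\bb}}\}_{i\in[m]})\le\lambda\}$; if $S=\{s\}$ is a singleton, declare $\rmH_s$; if $|S|\ge 2$, declare $\rmH_\rmr$; and if $S=\emptyset$, declare $\rmH_s$ for some $s$ attaining $\min_{s'\in[m]}\mLD_{s'}$. The key observation is that $\gamma^*$ declares a wrong, non-rejected hypothesis $\rmH_l$ with $l\ne j$ only when $\mLD_j>\lambda$; hence for every tuple $\tilde{\bP}$, $\beta_j(\gamma^*,\tilde{\bP})\le\bbP_j\{\mLD_j(\bT_{\bZ^{n\ba}},\{\bT_{\tilde{\bY}_i^{N\bb}}\}_{i\in[m]})>\lambda\}$, and the method of types bounds this by $\poly(n)\exp(-n\inf\{\mathrm{LD}_j^{[m]}(\bQ,\{\tilde{\bQ}_i\}_{i\in[m]},\tilde{\bP}):\mLD_j(\bQ,\{\tilde{\bQ}_i\}_{i\in[m]})>\lambda\})$. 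Since $\mLD_j(\bQ,\{\tilde{\bQ}_i\}_{i\in[m]})=\min_{\bP'}\mathrm{LD}_j^{[m]}(\bQ,\{\tilde{\bQ}_i\}_{i\in[m]},\bP')\le\mathrm{LD}_j^{[m]}(\bQ,\{\tilde{\bQ}_i\}_{i\in[m]},\tilde{\bP})$, the exponent there is at least $\lambda$, so after inflating the threshold from $\lambda$ to $\lambda+\epsilon_n$ with $\epsilon_n\downarrow 0$ chosen to absorb the polynomial factor, all the undetected error probabilities obey the required $\exp(-n\lambda)$ bound, uniformly in $\tilde{\bP}$. For the rejection probability under $\rmH_j$ with the true $\bP$, $\rej(\gamma^*,\bP)=\bbP_j\{|S|\ge 2\}\le\sum_{(i,l)\in[m]^2:i\ne l}\bbP_j\{\mLD_i\le\lambda,\mLD_l\le\lambda\}$, and the method of types bounds each summand by $\poly(n)\exp(-n\inf_{(\bQ,\{\tilde{\bQ}_i\}_{i\in[m]})\in\calQ_{\lambda,i,l}(\alpha,\ba,\bb,\calW)}\mathrm{LD}_j^{[m]}(\bQ,\{\tilde{\bQ}_i\}_{i\in[m]},\bP))$; taking the worst pair $(i,l)$ gives exactly $f$. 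Declaring a minimum-statistic hypothesis rather than $\rmH_\rmr$ when $S=\emptyset$ is essential here: it keeps the event ``all statistics exceed $\lambda$'' out of the rejection region, where it would not be governed by the sets $\calQ_{\lambda,i,l}$; that event is instead absorbed harmlessly into the error analysis above, since it still forces $\mLD_j>\lambda$.

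For the converse, fix any test $\gamma$ with $\beta_j(\gamma,\tilde{\bP})\le\exp(-n\lambda)$ for all $j\in[m]$ and all $\tilde{\bP}\in\calP(\calX)^m$. Fix $(i,l)\in[m]^2$ with $i\ne l$ and a type vector $(\bQ,\{\tilde{\bQ}_s\}_{s\in[m]})$ with $\mLD_i<\lambda$ and $\mLD_l<\lambda$, and let $\calT\subseteq[L]^{mN+n}$ be the corresponding type class (all $(z^n,\tilde{\by}_1^N,\ldots,\tilde{\by}_m^N)$ whose blockwise $z^n$-types equal $\bQ$ and whose blockwise $\tilde{\by}_s^N$-types equal $\tilde{\bQ}_s$ for every $s$). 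The crux is to show that almost all of $\calT$ lies in $\Lambda_\rmr(\gamma)$: if some $\Lambda_s(\gamma)$ with $s\in[m]$ captured a fraction of $\calT$ not decaying at rate $\min(\lambda-\mLD_i,\lambda-\mLD_l)>0$, then since $s\ne i$ or $s\ne l$ one takes $\bP^{(i)}$ (resp.\ $\bP^{(l)}$) to be a minimizer defining $\mLD_i$ (resp.\ $\mLD_l$); under hypothesis $\rmH_i$ (resp.\ $\rmH_l$) with true tuple $\bP^{(i)}$ (resp.\ $\bP^{(l)}$), the type class $\calT$ has probability $\doteq\exp(-n\,\mLD_i(\bQ,\{\tilde{\bQ}_s\}_{s\in[m]}))$ (resp.\ $\exp(-n\,\mLD_l(\bQ,\{\tilde{\bQ}_s\}_{s\in[m]}))$), and since that distribution is equiprobable on $\calT$, such a fraction sitting in $\Lambda_s(\gamma)$ would force $\beta_i(\gamma,\bP^{(i)})$ (resp.\ $\beta_l(\gamma,\bP^{(l)})$) to exceed $\exp(-n\lambda)$, using $\mLD_i<\lambda$ (resp.\ $\mLD_l<\lambda$) --- a contradiction. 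Hence each $\Lambda_s(\gamma)$, $s\in[m]$, captures only an exponentially small fraction of $\calT$, so $\rej(\gamma,\bP)\ge\bbP_j\{\calT\cap\Lambda_\rmr(\gamma)\}\dotgeq\exp(-n\,\mathrm{LD}_j^{[m]}(\bQ,\{\tilde{\bQ}_s\}_{s\in[m]},\bP))$; optimizing over the type vector and over $(i,l)$, and using that $\calQ_{\lambda,i,l}(\alpha,\ba,\bb,\calW)$ is the closure of its open subset $\{\mLD_i<\lambda,\mLD_l<\lambda\}$ (which follows from the convexity of $\mLD_i,\mLD_l$ together with the fact, noted in the text, that each $\calQ_{\lambda,i,l}$ contains a point where all statistics vanish), yields $\rej(\gamma,\bP)\dotgeq\exp(-nf)$, i.e.\ $E_j^*\le f$.

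I expect the main obstacle to be precisely the converse step that forces every type class in $\calQ_{\lambda,i,l}$ into the rejection region of an arbitrary valid test: it hinges on the correct ``worst-case'' choice of generating tuples $\bP^{(i)},\bP^{(l)}$ so that the type class is likely under a hypothesis whose acceptance region $\gamma$ must largely avoid, and it is where the ``two confusable hypotheses'' structure --- the intersection $\{\mLD_i\le\lambda\}\cap\{\mLD_l\le\lambda\}$ that defines $\calQ_{\lambda,i,l}$ in~\eqref{def:mary calQ} --- genuinely enters. The remaining ingredients are routine: the continuity/convexity argument identifying $\calQ_{\lambda,i,l}$ with the closure of its strict interior, so that discrete-type minimizations pass to the single-letter expressions in~\eqref{Eq:rejection exponent}; the joint convexity of $\mathrm{LD}_j^{[m]}$ in all its distribution arguments (and hence convexity of its partial minimum $\mLD_j$); the vanishing-threshold perturbation absorbing polynomial factors; and the standard method-of-types estimates, which go through verbatim since all statistics depend on the data only through types on the finite alphabet $[L]$, even when $\calX$ is infinite.
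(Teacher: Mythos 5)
Your proposal is correct and follows essentially the same route as the paper: your test is exactly the paper's generalized Unnikrishnan test in \eqref{Eq:Unn test} (your set $S$ description is equivalent to the $i_1,i_2$ formulation), and your achievability analysis (bounding $\beta_j$ via the event $\mLD_j>\lambda$, union-bounding the rejection event over pairs $(i,l)$, method of types, and a vanishing threshold inflation) matches the paper's. Your converse is also the paper's argument with its two lemmas inlined: rather than first reducing to type-based tests (the analogue of Lemma~\ref{Lem:extended from Lin}) and then forcing such tests to reject all types with $\mLD_{i_2}\lesssim\lambda$ (Lemma~\ref{Lem:m-ary contradiction}), you argue directly that each acceptance region can capture only an exponentially small fraction of the relevant type class, using the same key device of choosing the generating tuples to be the minimizers defining $\mLD_i$ and $\mLD_l$ and the equiprobability of sequences within a type class.
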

The proof of Theorem \ref{Thm:rejection exponent} is given in Appendix \ref{proof:rej exponent}. 

First, as shown in Fig.~\ref{fig:mary_rej_expo}, there exists $\lambda\in\bbR_+$ such that $\lim_{n\to\infty}E_j^*(n,\alpha,\bP,\lambda|\ba,\bb,\calW) < \lambda$, which implies that the type-$j$ rejection exponents can be designed to be smaller than all the error exponents with an appropriate choice of $\lambda$. This scenario is reminiscent of practical communication scenarios~\cite{Forney68, HayashiTan15} (automatic repeat request or ARQ) where the rejection probability is designed to be much larger than the (undetected) error probability as declaring a rejection is typically much less costly than a genuine mistake being made.
\begin{figure*}[t]
	\centering
	\subfigure[$W_1,W_2$ are both deterministic matrices]{
		\begin{minipage}[t]{0.48\linewidth}
			\centering
			\includegraphics[width=3in]{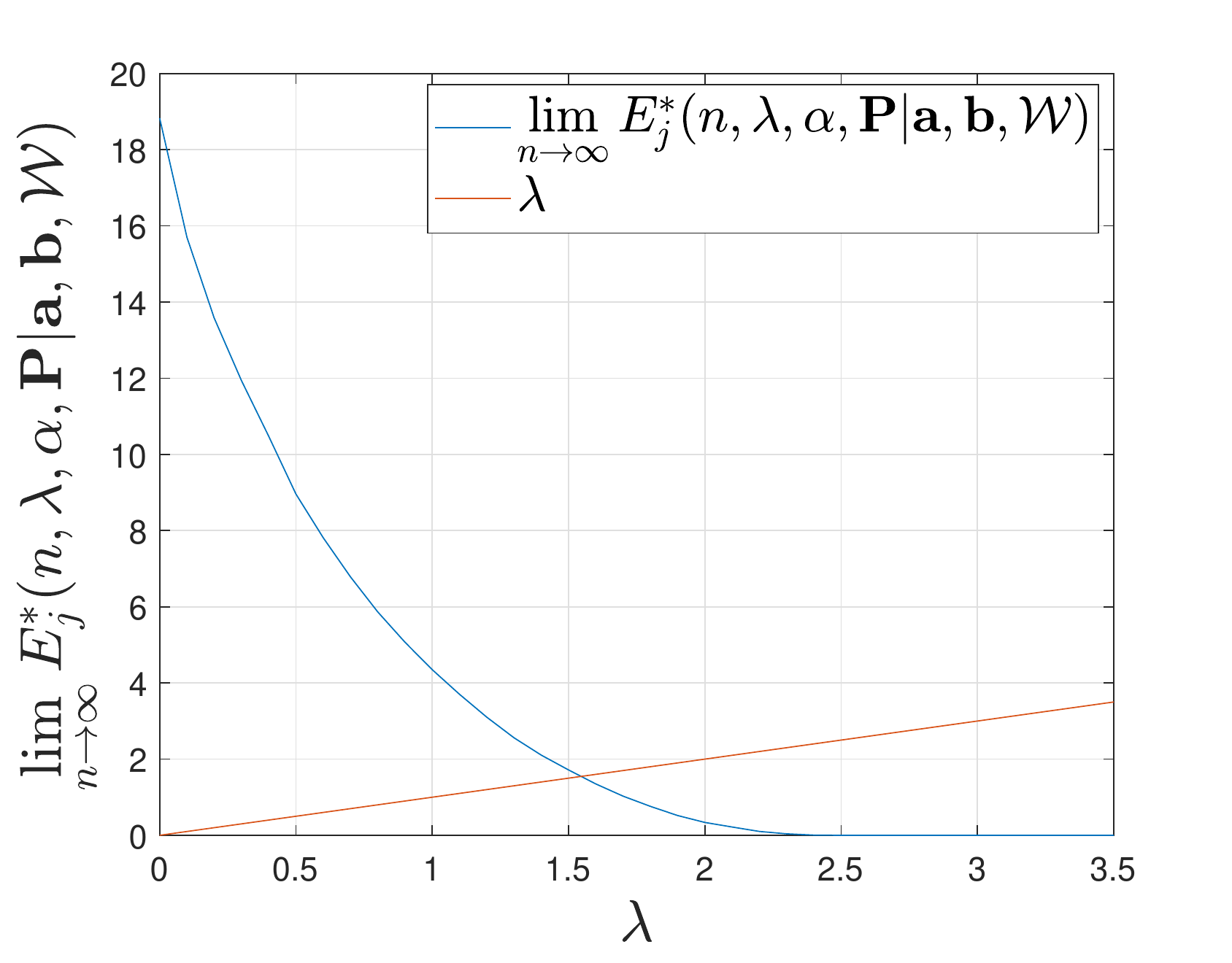}
			\label{fig:mary_Wdeter_exponent}
	\end{minipage}}
	\subfigure[$W_1,W_2$ are both stochastic matrices]{
		\begin{minipage}[t]{0.48\linewidth}
			\centering
			\includegraphics[width=3in]{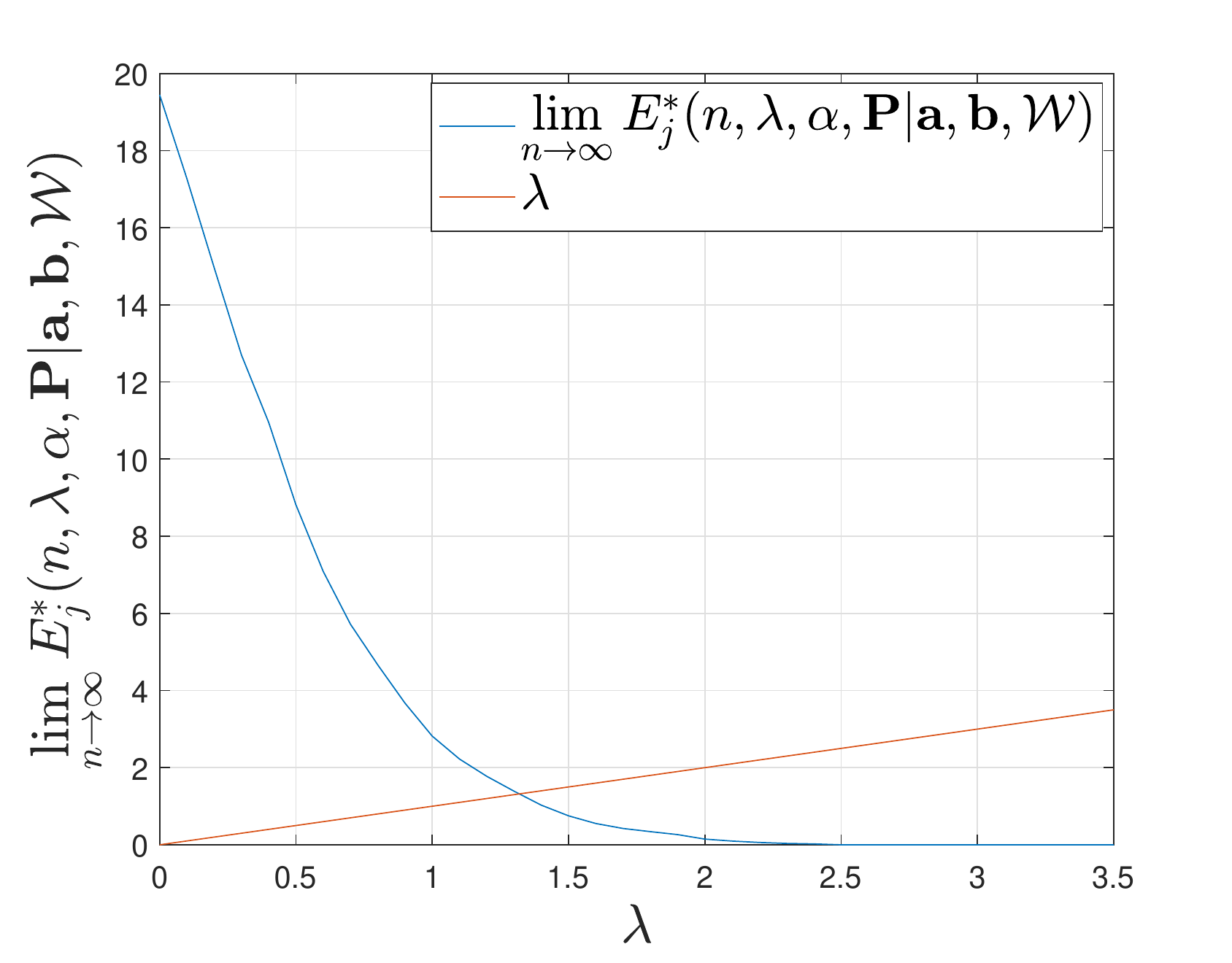}
			\label{fig:mary_Wrandom_exponent}
	\end{minipage}}
	\caption{Numerical evaluations  of $\lim_{n\to\infty}E_j^*(n,\alpha,\bP,\lambda|\ba,\bb,\calW)$ in \eqref{Eq:rejection exponent} versus $\lambda$ when $m=4$, $K=2$, $\ba=(0.2,0.8)$, $\bb=(0.6,0.4)$ and $\alpha=10$.}
	\label{fig:mary_rej_expo}
\end{figure*}

Second,  let us describe the test that is used in the achievability proof of  Theorem \ref{Thm:rejection exponent}. This test is a generalized version of Unnikrishnan's test~\cite{unnikrishnan2015asymptotically}.  Given any tuple of types $(\bT_{\bZ^{n\ba}},\{\bT_{\tilde{\bY}_i^{N\bb}}\}_{i\in[m]})$, we define the indices of the minimum and the second minimum of $\mLD_i(\bT_{\bZ^{n\ba}},\{\bT_{\tilde{\bY}_i^{N\bb}}\}_{i\in[m]})$ over all $i\in[m]$ as
\begin{align}
\label{def:i1}\!\!\! i_1\!=\! i_1(\bZ^n,\tilde{\bY}^N)\!:=\!\argmin_{i\in[m]}\mLD_i(\bT_{\bZ^{n\ba}},\{\bT_{\tilde{\bY}_i^{N\bb}}\}_{i\in[m]}) 
\end{align}
and
\begin{align}
\label{def:i2}\!\!\! i_2\!=\! i_2(\bZ^n,\tilde{\bY}^N)\!:=\!\argmin_{i\in[m]:i\neq i_1} \mLD_i(\bT_{\bZ^{n\ba}},\{\bT_{\tilde{\bY}_i^{N\bb}}\}_{i\in[m]})
\end{align}
respectively. 
If the index of the right hand side of \eqref{def:i2} is not unique, we define $i_2(\bZ^n,\tilde{\bY}^N)$ as the smallest index of all $i\in[m]$ such that the value of $\mLD_i(\bT_{\bZ^{n\ba}},\{\bT_{\tilde{\bY}_i^{N\bb}}\}_{i\in[m]})$ is second smallest. 
In the achievability proof of Theorem \ref{Thm:rejection exponent}, we make use of the following test 
\begin{align}\label{Eq:Unn test}
&\gamma(Z^n,\tilde{\bY}^N) \nn\\*
&=\left\{
\begin{array}{cl}
\rmH_j & \text{if}~ i_1=j~\text{and}~ \mLD_{i_2}(\bT_{\bZ^{n\ba}},\{\bT_{\tilde{\bY}_i^{N\bb}}\}_{i\in[m]})> \lambda,\\
\rmH_\rmr & \text{if}~\mLD_{i_2}(\bT_{\bZ^{n\ba}},\{\bT_{\tilde{\bY}_i^{N\bb}}\}_{i\in[m]})\leq \lambda.
\end{array}
\right.
\end{align}
In words, we declare that $\rmH_j$ is true if the minimum of $\mLD_i$ occurs when $i=j$ {\em and} the second largest $\mLD_i$ exceeds a certain threshold $\lambda$. The latter condition intuitively indicates that our decision that the true  hypothesis is $\rmH_j$ is made with high enough confidence. If there are at least  two test statistics $\mLD_i$ that are no larger than $\lambda$  (i.e., $|\{  i \in [m]: \mLD_i \le\lambda\}|\ge 2$), our confidence in the quality of the training and test data is low or that our confidence that $X^n$ is generated from one of the distributions that generated $Y_i^N, i\in [m]$ is low, and as such, a rejection event should be declared.

Third, when $m=2$, the test in \eqref{Eq:Unn test} specializes to the test given in \eqref{Eq:Unn test m=2} presented at the top of the next page
\begin{figure*}
\begin{equation}\label{Eq:Unn test m=2}
\gamma(Z^n,\tilde{\bY}^N)=\left\{
\begin{array}{cl}
\rmH_1 & \text{if}~ \mLD_{1}(\bT_{\bZ^{n\ba}},\{\bT_{\tilde{\bY}_i^{N\bb}}\}_{i\in[m]})<\mLD_{2}(\bT_{\bZ^{n\ba}},\{\bT_{\tilde{\bY}_i^{N\bb}}\}_{i\in[m]})~\text{and}~ \mLD_{2}(\bT_{\bZ^{n\ba}},\{\bT_{\tilde{\bY}_i^{N\bb}}\}_{i\in[m]})> \lambda,\\
\rmH_2 & \text{if}~ \mLD_{2}(\bT_{\bZ^{n\ba}},\{\bT_{\tilde{\bY}_i^{N\bb}}\}_{i\in[m]})<\mLD_{1}(\bT_{\bZ^{n\ba}},\{\bT_{\tilde{\bY}_i^{N\bb}}\}_{i\in[m]})~\text{and}~ \mLD_{1}(\bT_{\bZ^{n\ba}},\{\bT_{\tilde{\bY}_i^{N\bb}}\}_{i\in[m]})> \lambda,\\
\rmH_\rmr & \text{if}~\mLD_{1}(\bT_{\bZ^{n\ba}},\{\bT_{\tilde{\bY}_i^{N\bb}}\}_{i\in[m]})\leq \lambda ~\text{and}~ \mLD_{2}(\bT_{\bZ^{n\ba}},\{\bT_{\tilde{\bY}_i^{N\bb}}\}_{i\in[m]})\leq \lambda;
\end{array}
\right.
\end{equation} \hrulefill
\end{figure*}
and the type-$j$ rejection exponent  in~\eqref{Eq:rejection exponent} simplifies to
\begin{align}
&\lim\limits_{n\to\infty}E_j^*(n,\alpha,\bP,\lambda|\ba,\bb,\calW)  \nn\\*
&=\min_{\substack{(\bQ,\tilde{\bQ}_1,\tilde{\bQ}_2)\\
		\in\calQ_{\lambda,1,2}(\alpha,\ba,\bb,\calW)}}\sum_{k\in[K]}\big(a_kD(Q_k\| P_jW_k)\nn\\*
		&\qquad+\alpha b_k D(\tilQ_{1,k}\|P_1 W_k)+\alpha b_k D(\tilQ_{2,k}\|P_2 W_k) \big).
\end{align}
Note that here we consider binary classification with rejection, which is in contrast to the case of binary classification without a rejection option (cf.~\eqref{Eq:thm1 exponent} and \eqref{Eq:test}).


Fourthly, let us compare the exponents obtained for $m=2$ in Theorems \ref{Thm:exponent} and \ref{Thm:rejection exponent}. For the binary distributed detection problem without rejection (Theorem~\ref{Thm:exponent}), the acceptance region for hypothesis $\rmH_1$ is $\calQ_{\lambda}(\alpha,\ba,\bb,\calW)$ (cf.~\eqref{def:calQlambda}). In this section, when $m=2$, the rejection region is $\calQ_{\lambda,1,2}(\alpha,\ba,\bb,\calW)$ (cf.~\eqref{def:mary calQ}). For any  $(\bQ,\tilde{\bQ}_1,\tilde{\bQ}_2)\in\calQ_{\lambda,1,2}(\alpha,\ba,\bb,\calW)$ , we have
\begin{align}
 \mLD_1(\bQ,\tilde{\bQ}_1,\tilde{\bQ}_2)\leq\lambda,
\end{align}
which implies that $(\bQ,\tilde{\bQ}_1,\tilde{\bQ}_2)\in\calQ_{\lambda}(\alpha,\ba,\bb,\calW)$.  With this observation, we see that 
 \begin{align}
&\lim_{n\to\infty}E^*(n,\alpha,P_1,P_2,\lambda|\ba,\bb,V,\calW)\nn\\*
&=\min_{\substack{(\bQ,\tilde{\bQ}_1,\tilde{\bQ}_2) 
		\\\in \calQ_{\lambda}(\alpha,\ba,\bb,\calW)}}\sum_{k\in[K]}\big(a_kD(Q_k\| P_2W_k) \nn\\*
		&\qquad+\alpha b_k D(\tilQ_{1,k}\|P_1 W_k)+\alpha b_k D(\tilQ_{2,k}\|P_2 W_k)\big)\\
&\leq \min_{\substack{(\bQ,\tilde{\bQ}_1,\tilde{\bQ}_2)\\
		\in\calQ_{\lambda,1,2}(\alpha,\ba,\bb,\calW)}}\sum_{k\in[K]}\big(a_kD(Q_k\| P_2W_k)\nn\\*
		&\qquad+\alpha b_k D(\tilQ_{1,k}\|P_1 W_k)+\alpha b_k D(\tilQ_{2,k}\|P_2 W_k) \big)\\
&=\lim\limits_{n\to\infty}E_2^*(n,\alpha,\bP,\lambda|\ba,\bb,\calW),
\end{align}
which indicates that the type-II rejection exponent in \eqref{Eq:rejection exponent} is not smaller than the type-II error exponent in \eqref{Eq:thm1 exponent} when restricted to the binary setting.
 The rough intuition here  is that for the same $\lambda$, if it happens that  the optimal test for binary distributed detection {\em with} rejection in~\eqref{Eq:Unn test m=2} decides on rejecting the two hypotheses, this implies that the optimal test for binary distributed detection  {\em without} rejection in~\eqref{Eq:test} necessarily declares that $\rmH_1$ is true, thus resulting in a  type-II error. The reverse implication, however, is not true.

Finally, if we let $K=1$ and consider all channels to be deterministic, the test in \eqref{Eq:Unn test m=2} reduces to one presented in \eqref{Eq:Unn test m=2 GJS} at the top of the next page
\begin{figure*}
\begin{equation}\label{Eq:Unn test m=2 GJS}
\gamma(Z^n,\tilde{\bY}^N)=\left\{
\begin{array}{cl}
\rmH_j & \text{if}~ \mathrm{GJS}(T_{\tilY_j^N},T_{Z^n},\alpha)<\mathrm{GJS}(T_{\tilY_i^N},T_{Z^n},\alpha)~\text{and}~ \mathrm{GJS}(T_{\tilY_i^N},T_{Z^n},\alpha)> \lambda, ~\forall i\neq j,\\
\rmH_\rmr & \text{if}~\mathrm{GJS}(T_{\tilY_l^N},T_{Z^n},\alpha)\leq \lambda ~\text{and}~ \mathrm{GJS}(T_{\tilY_i^N},T_{Z^n},\alpha)\leq\lambda, ~\exists (i,l)\in[m]^2:i\neq l.
\end{array}
\right.
\end{equation} \hrulefill
\end{figure*}
For the $m$-ary hypotheses testing problem with rejection in \cite{gutman1989asymptotically}, Gutman used $\gamma_m^{\mathrm{Gut}}(Z^n,\tilde{\bY}^N)$ presented at \eqref{eqn:gut_test} on the next page
\begin{figure*}
\begin{equation}
\gamma_m^{\mathrm{Gut}}(Z^n,\tilde{\bY}^N)=\left\{
\begin{array}{cl}
\rmH_1 & \text{if}~ \mathrm{GJS}(T_{\tilY_i^N},T_{Z^n},\alpha)> \lambda,~\forall i\geq 2,\\
\rmH_j & \text{if}~ \mathrm{GJS}(T_{\tilY_j^N},T_{Z^n},\alpha)\leq \lambda ~\text{and}~ \mathrm{GJS}(T_{\tilY_i^N},T_{Z^n},\alpha)>\lambda, ~\forall i\neq j,~ j\geq 2\\
\rmH_\rmr & \text{if}~\mathrm{GJS}(T_{\tilY_l^N},T_{Z^n},\alpha)\leq \lambda ~\text{and}~ \mathrm{GJS}(T_{\tilY_i^N},T_{Z^n},\alpha)\leq\lambda, ~\exists (i,l)\in[m]^2:i\neq l.
\end{array}
\right. \label{eqn:gut_test}
\end{equation} \hrulefill
\end{figure*}
It can be seen that the rejection regions for both the tests are the same. However, the acceptance regions for $\gamma_m^{\mathrm{Gut}}$ are not symmetric for different hypotheses. In contrast, Unnikrishnan's test in \eqref{Eq:Unn test m=2 GJS} is symmetric in the $m$ hypotheses. Thus, it is more convenient to use  the generalized Unnikrishnan's test $\gamma$ in \eqref{Eq:Unn test} to analyze the error and rejection exponents.


\subsection{Further discussions on $(\ba,\bb)$}\label{Subsec:mary discussion}
 We have the following corollary of Theorem \ref{Thm:rejection exponent}. 
\begin{corollary}\label{Lem:mary alpha infty}
For each $j\in[m]$, the type-$j$ rejection exponent satisfies
	\begin{align}
	\label{eqn:mart_alpha_infty}
	&\lim\limits_{\alpha\to\infty}\lim\limits_{n\to\infty}E_j^*(n,\alpha,\bP,\lambda|\ba,\bb,\calW) \nn\\*
	&=\min_{(i,l)\in[m]^2:i\neq l}\min_{\substack{\bQ\in\calP([L])^K:\\
			\kappa(\bQ,P_i|\ba,\bb,\calW)\leq \lambda,\\
			\kappa(\bQ,P_l|\ba,\bb,\calW) \leq \lambda}}\sum_{k\in[K]}a_kD(Q_k\|P_jW_k),
	\end{align}
	where $\kappa(\bQ,P|\ba,\bb,\calW)$ was defined in \eqref{def:kappa}. 
\end{corollary}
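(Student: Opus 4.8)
The plan is to take the $\alpha\to\infty$ limit directly in the single-letter expression of Theorem \ref{Thm:rejection exponent}, namely in
$\min_{(i,l):l\neq i}\min_{(\bQ,\{\tilde{\bQ}_i\})\in\calQ_{\lambda,i,l}}\mathrm{LD}_j^{[m]}(\bQ,\{\tilde{\bQ}_i\}_{i\in[m]},\bP)$.
First I would observe that the outer minimization over unordered pairs $(i,l)$ is over a finite set and does not depend on $\alpha$, so it suffices to analyze, for each fixed pair $(i,l)$, the inner quantity
$g_\alpha(i,l):=\min_{(\bQ,\{\tilde{\bQ}_i\})\in\calQ_{\lambda,i,l}(\alpha,\ba,\bb,\calW)}\mathrm{LD}_j^{[m]}(\bQ,\{\tilde{\bQ}_i\}_{i\in[m]},\bP)$
and show $\lim_{\alpha\to\infty}g_\alpha(i,l)$ equals the corresponding inner term on the right-hand side of \eqref{eqn:mart_alpha_infty}. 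The key structural fact is that $\mathrm{LD}_j^{[m]}$ and the constraint functions $\mLD_i$ decompose additively over $k\in[K]$, with the training terms carrying a factor $\alpha b_k$ while the source terms $a_k D(Q_k\|P_jW_k)$ carry no $\alpha$. Hence as $\alpha\to\infty$ any feasible sequence must drive the training divergences $D(\tilQ_{i,k}\|\tilP_i W_k)$ to zero (on the support of $\bb$) in order to keep $\mLD_i\le\lambda$ and to keep the objective bounded; intuitively $\tilQ_{i,k}\to P_i W_k$, which is exactly the ``distributions can be estimated to arbitrary accuracy'' heuristic mentioned after Corollary \ref{Coro:bayesian}.

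The main technical step is to identify the limit of $\mLD_i(\bQ,\{\tilde{\bQ}_i\}_{i\in[m]})$ as $\alpha\to\infty$ in terms of $\bQ$ alone. Recall
$\mLD_i(\bQ,\{\tilde{\bQ}_i\})=\min_{\bP'\in\calP(\calX)^m}\sum_{k}\bigl(a_k D(Q_k\|P'_i W_k)+\sum_{l\in[m]}\alpha b_k D(\tilQ_{l,k}\|P'_l W_k)\bigr)$.
I would argue that this behaves, in the limit, like the partially-minimized quantity $\kappa(\bQ,\cdot|\ba,\bb,\calW)$ from \eqref{def:kappa}: for the constraint to survive, the inner distributions $\tilde{\bQ}_i$ must satisfy $\tilQ_{i,k}=P_i W_k$ for $k$ with $b_k>0$ (in the limit, or up to $o(1)$), and then the residual $\alpha$-free part of $\mLD_i$ is $\min_{\tilP\in\calP(P_i|\bb,\calW)}\sum_k a_k D(Q_k\|\tilP W_k)=\kappa(\bQ,P_i|\ba,\bb,\calW)$, where $\calP(P_i|\bb,\calW)$ is the set defined in \eqref{def:calPP1}. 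So in the limit $\calQ_{\lambda,i,l}$ collapses to the set $\{\bQ\in\calP([L])^K:\kappa(\bQ,P_i|\ba,\bb,\calW)\le\lambda,\ \kappa(\bQ,P_l|\ba,\bb,\calW)\le\lambda\}$, and the objective $\mathrm{LD}_j^{[m]}$ collapses to $\sum_k a_k D(Q_k\|P_j W_k)$ since the training terms of the objective vanish on the optimal feasible sequence. This matches \eqref{eqn:mart_alpha_infty}. Making ``collapses to'' precise requires the standard two-sided argument: (i) \emph{achievability} of the limit — plug the limiting optimizer $\bQ^\star$ into a feasible point for finite $\alpha$ by choosing $\tilQ_{i,k}^{(\alpha)}=(P_iW_k + \epsilon_\alpha R)$ or simply $\tilQ_{i,k}=P_iW_k$ when that is already feasible, and check the constraints and objective converge; (ii) \emph{converse} — take any sequence $(\bQ^{(\alpha)},\{\tilde{\bQ}_i^{(\alpha)}\})$ nearly attaining $g_\alpha(i,l)$, extract a convergent subsequence by compactness of $\calP([L])^{(m+1)K}$, show the limit point of the $\bQ$-part lies in the limiting constraint set (using lower semicontinuity of $D(\cdot\|\cdot)$ and that the $\alpha b_k$ terms force $\tilQ_{i,k}^{(\alpha)}\to P_iW_k$ on $\supp(\bb)$), and use lower semicontinuity of the objective.

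I would structure the write-up as two lemmas-within-the-proof: one showing $\liminf_{\alpha\to\infty}g_\alpha(i,l)$ is at least the RHS inner term (the converse/compactness direction), and one showing $\limsup$ is at most it (the construction direction), then combine and take $\min$ over $(i,l)$, invoking Theorem \ref{Thm:rejection exponent} to replace $g_\alpha$ by the true exponent. A subtlety worth flagging explicitly: one must be careful that the minimization defining $\mLD_i$ is itself an infimum that is attained (it is, since $\calP(\calX)^m$ is compact and $D$ is lower semicontinuous, possibly $+\infty$), and that on $\supp(\bb)^c$ the inner distributions $\tilQ_{i,k}$ are completely unconstrained in the limit but also do not appear in $\kappa$, so they can be discarded — this is consistent with the definition of $\kappa$, which only constrains $\tilP$ through $v_k\|PW_k-\tilP W_k\|_\infty=0$. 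The main obstacle, I expect, is the converse direction: handling the interchange of the $\alpha\to\infty$ limit with the inner minimization over $\bP'$ in the definition of $\mLD_i$, and ruling out degenerate feasible sequences where some $\tilQ_{i,k}^{(\alpha)}$ is not absolutely continuous with respect to $P_i W_k$ (so $D=+\infty$) yet the pair $(i,l)$ still achieves a finite value through a different support pattern; a uniform-integrability / support-stabilization argument along the convergent subsequence should close this gap, mirroring the analogous step in the proof of Lemma \ref{extreme:f} in Appendix \ref{proof:extremef}, which I would cite and adapt rather than redo.
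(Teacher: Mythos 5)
Your proposal is correct and follows essentially the same route as the paper, which omits the proof precisely because it is the natural adaptation of Lemma~\ref{extreme:f}: the $\alpha b_k$ training terms force $\tilQ_{i,k}=P_iW_k$ on $\supp(\bb)$, the constraints $\mLD_i\le\lambda$ collapse to $\kappa(\bQ,P_i|\ba,\bb,\calW)\le\lambda$ (as in Lemma~\ref{Lem:g(infty)}), and the objective collapses to $\sum_k a_k D(Q_k\|P_jW_k)$, after which one minimizes over the pairs $(i,l)$. Your explicit two-sided $\liminf$/$\limsup$ argument (monotonicity in $\alpha$ for achievability, compactness and lower semicontinuity for the converse) is a slightly more careful treatment of the limit--minimization interchange that the paper handles via the continuity argument in the proof of Lemma~\ref{extreme:f}, but it is the same idea.
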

The proof of Corollary \ref{Lem:mary alpha infty} is similar to that of Lemma \ref{extreme:f} and hence is omitted. 

We remark that when $\lambda$ is smaller  than a certain threshold   $\lambda_0=\lambda_0( \bP,\ba,\bb)$,  the   type-$j$ rejection exponent in~\eqref{eqn:mart_alpha_infty} is infinite. This is because if $\lambda\le\lambda_0$, the two constraint sets defined by $\kappa(\bQ,P_i|\ba,\bb,\calW)\le\lambda$ and $\kappa(\bQ,P_l|\ba,\bb,\calW)\le\lambda$ are disjoint for all distinct pairs of $i$ and $l$, and thus, the minimization in \eqref{eqn:mart_alpha_infty} is infeasible.

Let $f_{\infty,j}(\ba,\bb,\lambda)$ denote the right-hand-side of \eqref{eqn:mart_alpha_infty}.
When $\lambda$ is chosen such that $f_{\infty,j}(\ba,\bb,\lambda)<\infty$ for all $(\ba,\bb)\in\calP([K])^2$, we have the following corollary concerning the optimizers of $f_{\infty,j}(\ba,\bb,\lambda)$ when there are only two hypotheses, i.e., $m=2$.


\begin{corollary}
\label{optimizer:mary}
For the binary distributed detection problem with rejection, given each $j\in[2]$, we have
\begin{align}
\sup_{(\ba,\bb)\in\calP([K])^2} f_{\infty,j}(\ba,\bb,\lambda )=\max_{k\in[K]}f_{\infty,j}(\be_k,\be_k,\lambda),
\end{align}
and thus the optimizers $(\ba^*,\bb^*)$ are deterministic and satisfy   $\ba^*=\bb^*$.
\end{corollary}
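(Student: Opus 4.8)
The plan is to reduce the binary case $m=2$ of the rejection exponent formula \eqref{eqn:mart_alpha_infty} to an expression that exactly matches $f_\infty(\ba,\bb,\lambda)$ from Lemma \ref{extreme:f}, and then invoke Corollary \ref{Coro:alpha extreme} verbatim. First I would unpack \eqref{eqn:mart_alpha_infty} when $m=2$: the outer minimization over distinct pairs $(i,l)\in[2]^2$ has only the single term $(i,l)=(1,2)$ (up to the symmetry of the constraints), so
\begin{align}
f_{\infty,j}(\ba,\bb,\lambda)=\min_{\substack{\bQ\in\calP([L])^K:\\ \kappa(\bQ,P_1|\ba,\bb,\calW)\le\lambda,\\ \kappa(\bQ,P_2|\ba,\bb,\calW)\le\lambda}}\sum_{k\in[K]}a_kD(Q_k\|P_jW_k). \nn
\end{align}
Compared with $f_\infty(\ba,\bb,\lambda)=\min_{\bQ:\kappa(\bQ,P_1|\ba,\bb,\calW)\le\lambda}\sum_k a_kD(Q_k\|P_2W_k)$, the rejection exponent for $j=2$ has the \emph{same} objective but an \emph{extra} constraint $\kappa(\bQ,P_2|\ba,\bb,\calW)\le\lambda$; for $j=1$ the objective changes to $\sum_k a_kD(Q_k\|P_1W_k)$ with the analogous extra constraint. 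The key structural observation I would exploit is that the extra constraint for $j=2$ is \emph{slack at the optimizer}: the feasible point $Q_k=P_2W_k$ gives objective value $0$, and more to the point, along the optimal path the constraint $\kappa(\bQ,P_2|\ba,\bb,\calW)\le\lambda$ is automatically satisfied because the objective $\sum_k a_kD(Q_k\|P_2W_k)$ controls $\kappa(\bQ,P_2|\ba,\bb,\calW)\le\sum_k a_kD(Q_k\|P_2W_k)$, which is $\le\lambda$ whenever the first constraint forces the objective below $\lambda$ — and when the objective exceeds $\lambda$ it is not the binding regime for the extreme-point argument anyway. Symmetrically for $j=1$ with the roles of $P_1,P_2$ swapped.

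The cleaner route, which I would actually write up, avoids case analysis on slackness: I would simply establish the two-sided inequality $\sup_{(\ba,\bb)}f_{\infty,j}(\ba,\bb,\lambda)=\max_{k\in[K]}f_{\infty,j}(\be_k,\be_k,\lambda)$ directly by adapting the proof of Corollary \ref{Coro:alpha extreme}. For the ``$\ge$'' direction, evaluating at $(\ba,\bb)=(\be_k,\be_k)$ is a special case of the left side. For the ``$\le$'' direction, fix any $(\ba,\bb)$ and any $\bQ$ feasible for $f_{\infty,j}(\ba,\bb,\lambda)$; I would argue that the single-channel configuration that the proof of Corollary \ref{Coro:alpha extreme} selects — namely choosing $k^\star\in\argmax_k$ of the relevant per-channel divergence and setting $\ba^*=\bb^*=\be_{k^\star}$ — yields an objective at least as large while keeping \emph{both} constraints $\kappa(\cdot,P_1|\be_{k^\star},\be_{k^\star},\calW)\le\lambda$ and $\kappa(\cdot,P_2|\be_{k^\star},\be_{k^\star},\calW)\le\lambda$ feasible, because when $\bb=\be_{k^\star}$ the set $\calP(P_i|\bb,\calW)$ in \eqref{def:calPP1} is just $\{\tilP:\tilP W_{k^\star}=P_iW_{k^\star}\}$ and so $\kappa(\bQ,P_i|\be_{k^\star},\be_{k^\star},\calW)=D(Q_{k^\star}\|P_iW_{k^\star})$, a single-letter quantity matching the structure already handled in Corollary \ref{Coro:alpha extreme}. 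Pushing the optimization onto the $k^\star$-th coordinate and dropping the other coordinates (which only relaxes the constraints and cannot increase the objective in the direction we need) gives the bound; taking a max over the resulting single-channel values finishes it. The conclusion that the optimizers are deterministic with $\ba^*=\bb^*$ then follows exactly as in Corollary \ref{Coro:alpha extreme}.

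The main obstacle I anticipate is verifying that the double constraint $\{\kappa(\bQ,P_1|\ba,\bb,\calW)\le\lambda,\ \kappa(\bQ,P_2|\ba,\bb,\calW)\le\lambda\}$ behaves well under the coordinate-reduction step: collapsing $(\ba,\bb)$ to $(\be_{k^\star},\be_{k^\star})$ changes \emph{both} constraints simultaneously, and I need the reduced $\bQ$ (essentially its $k^\star$-th coordinate, with the others free) to remain jointly feasible for both. This is where I would have to be careful that the minimizations over $\tilP\in\calP(P_i|\bb,\calW)$ in the definition of $\kappa$ do not couple the two constraints in an obstructive way; concretely I would check that restricting attention to $\bb=\be_{k^\star}$ makes each $\kappa(\bQ,P_i|\be_{k^\star},\be_{k^\star},\calW)$ depend on $\bQ$ only through $Q_{k^\star}$, after which both constraints plus the objective live on a single alphabet $[L]$ and the extreme-point argument of Corollary \ref{Coro:alpha extreme} applies with essentially no change. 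Because that corollary's proof is already in the paper, I expect the additional work here to be a short verification rather than a new idea.
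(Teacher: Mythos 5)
Your overall plan (a two-sided inequality mimicking Corollary~\ref{Coro:alpha extreme}, with the observation that at $(\be_k,\be_k)$ both constraints reduce to the single-letter quantities $D(Q_k\|P_iW_k)$, $i\in[2]$) is the right one and is essentially what the paper does, but the pivotal ``$\le$'' step as you describe it does not work. You start from a $\bQ$ that is feasible for the \emph{general} $(\ba,\bb)$ problem and propose ``pushing the optimization onto the $k^\star$-th coordinate and dropping the other coordinates.'' This fails twice over: (a) feasibility for the aggregated constraints $\kappa(\bQ,P_i|\ba,\bb,\calW)\le\lambda$ does not imply that any single coordinate satisfies the per-channel constraints $D(Q_{k^\star}\|P_iW_{k^\star})\le\lambda$, so the reduced point need not be feasible for the single-channel problem; and (b) even if it were, you would only obtain $f_{\infty,j}(\be_{k^\star},\be_{k^\star},\lambda)\le D(Q_{k^\star}\|P_jW_{k^\star})$, an upper bound on the single-channel value by something \emph{larger} than $\sum_k a_kD(Q_k\|P_jW_k)$, which says nothing about the desired inequality $f_{\infty,j}(\ba,\bb,\lambda)\le\max_k f_{\infty,j}(\be_k,\be_k,\lambda)$. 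The correct adaptation goes in the opposite direction (and is how the paper argues, exactly as in Corollary~\ref{Coro:alpha extreme}): for each $k$ let $Q_k^*$ achieve $f_{\infty,j}(\be_k,\be_k,\lambda)$, i.e.\ minimize $D(Q\|P_jW_k)$ subject to $D(Q\|P_1W_k)\le\lambda$ and $D(Q\|P_2W_k)\le\lambda$; assemble $\bQ^*=(Q_1^*,\ldots,Q_K^*)$; since $P_i\in\calP(P_i|\bb,\calW)$ one has $\kappa(\bQ^*,P_i|\ba,\bb,\calW)\le\sum_k a_kD(Q_k^*\|P_iW_k)\le\lambda$ for $i=1,2$, so $\bQ^*$ is feasible for \emph{every} $(\ba,\bb)$ (this also makes the support/``$\calJ(\ba,\bb)$'' issue you worried about immaterial), and then $f_{\infty,j}(\ba,\bb,\lambda)\le\sum_k a_kD(Q_k^*\|P_jW_k)\le\max_k D(Q_k^*\|P_jW_k)=\max_k f_{\infty,j}(\be_k,\be_k,\lambda)$. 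The convex-combination step and the feasibility of the assembled candidate are the two ingredients missing from your sketch.

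Separately, the ``key structural observation'' in your first paragraph — that the second constraint is slack at the optimizer — is false in general and should not be kept even as motivation: when the optimal value exceeds $\lambda$ (the interesting regime, since a large exponent is the goal), $\kappa(\bQ,P_2|\ba,\bb,\calW)\le\sum_k a_kD(Q_k\|P_2W_k)$ gives no bound by $\lambda$, and indeed the extra constraint is exactly what makes the rejection exponent strictly larger than $f_\infty$ (and even $+\infty$ for small $\lambda$, as noted after Corollary~\ref{Lem:mary alpha infty}). Since you discard that route it does not invalidate your write-up, but the replacement argument needs the fix above.
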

The proof of Corollary \ref{optimizer:mary} is similar to that of Corollary \ref{Coro:alpha extreme} and hence  is omitted. 

 Corollary \ref{optimizer:mary} implies that when the length of the training sequences are much longer than the length of test sequence, it is optimal to use identical local decision rules at each sensor to pre-process the training sequences. 
 It is natural to wonder whether there is a generalization of Corollary~\ref{optimizer:mary} for larger $m$. Numerical optimization of the rejection exponent in~\eqref{eqn:mart_alpha_infty} over $(\ba,\bb) \in\calP([K])^2$ shows that when   $m = 3$ and $K=3$, in general, it is optimal to use all $3$ local decision rules or channels to randomize the test and training sequences. When $K=4$, however, it is optimal to use all $4$ channels in general. This differs  from the main finding  in Tsitsiklis' paper~\cite{tsitsiklis1988decentralized}, in which $\frac{1}{2}m(m-1)$ local decision rules suffice to achieve optimality. If the result were analogous, one would expect that for any $K$, only $\frac{1}{2}\cdot 3\cdot(3-1)=3$ local decision rules suffice. This difference can be intuitively explained as follows. With the rejection option, the fusion center needs to partition the sample space into more regions compared to the case without rejection. Roughly speaking, this means that the fusion center needs more information or diversity from the training and test samples to attain optimality. Hence, more channels or local decision rules (compared to~\cite{tsitsiklis1988decentralized}) are needed.

\section{Conclusion and Future Work} \label{sec:concl}

This work has taken a first step at considering the distributed detection problem \`a la Tsitsiklis~\cite{tsitsiklis1988decentralized} when the underlying distributions are unknown but in place of them, we have noisy training samples. We adopted the Gutman formulation~\cite{gutman1989asymptotically} in~\eqref{eqn:E_star}  and derived asymptotically optimal exponents for the binary and $m$-ary cases with and without rejection. While results as conclusive as those in Tsitsiklis' paper~\cite{tsitsiklis1988decentralized} were not obtained, we have several important contributions, including the identification of optimal tests and the conclusion that in the binary case (with and without rejection) and when the number of training samples far exceeds test samples, {\em one} decision rule suffices for achieving the optimal error exponent. 

In the future, one can consider  the following avenues for future work. First, a resolution of Conjecture~\ref{conj:det} would be desirable as it would allow us to parallel the main results in~\cite{tsitsiklis1988decentralized} for arbitrary and finite $\alpha\in\bbR_+$. Second, we can consider deriving second-order asymptotic results in the spirit of Zhou, Tan, and Motani~\cite{zhou2018second}. This would shed further insights into the finite-length behavior of the proposed tests. Finally, it would be fruitful to study the statistical learning versions of other distributed detection formulations, e.g., the anonymous heterogeneous version proposed by Chen and Wang~\cite{chen2019anonymous}.

\appendix
\subsection{Proof of Theorem \ref{Thm:exponent}}\label{Sec:proof of thm1}
 Recall the definitions of $\bT_{\bz^{n\ba}}$, $\bT_{\tilde{\by}_1^{N\bb}}$ and $\bT_{\tilde{\by}_2^{N\bb}}$ in Section \ref{Subsec:binary result}.
 Define the following set of types
\begin{align}
&\calT_{{\lambda}}(\alpha,\ba,\bb,\calW) \nn\\*
&:=\bigg\{(\bT,\tilde{\bT}_1,\tilde{\bT}_2)\in \prod_{k\in[K]}\big(\calP_{na_k}([L])\times\calP_{Nb_k}([L])^2\big):\nn\\*
&\min_{(\tilP,P)\in\calP(\calX)^2}\mathrm{LD}(\bT,\tilde{\bT}_1,\tilde{\bT}_2,\tilP,\tilP,P|\alpha,\ba,\bb,\calW)\!\leq\! \lambda \bigg\}\label{def:calTlambda},
\end{align}
and the following set of sequences
 \begin{align}
&\calL_{ {\lambda}}(\alpha,\ba,\bb,\calW):=\Big\{(\bz^{n\ba},\by_1^{N\bb},\by_2^{N\bb})\in[L]^{n+N}:\nn\\*
&\qquad(\bT_{\bz^{n\ba}},\bT_{\by_1^{N\bb}},\bT_{\by_2^{N\bb}})\in \calT_{ {\lambda}}(\alpha,\ba,\bb,\calW)   \Big\}.
\end{align}
Note that  $\calT_{{\lambda}}(\alpha,\ba,\bb,\calW)$ is the set $\calQ_{\lambda}(\alpha,\ba,\bb,\calW)$ (defined in~\eqref{def:calTlambda}) but restricted to types.

\subsubsection{Achievability}

In the achievability part, given any pair $(\ba,\bb)\in\calP_n([K])\times\calP_{\alpha n}([K])$, we assume that the test $\gamma$ at the fusion center is given by \eqref{Eq:test}, but $\lambda$ is replaced by $\tilde{\lambda}=\lambda+\frac{c_n}{n}$, where $c_n:=\sum_{k=1}^{K}(L\log(na_k+1)+2L\log(\alpha nb_k+1)) = O(\log n)$. Then for all pairs   $(\tilP_1,\tilP_2)\in \calP(\calX)^2$, the type-I error probability can be upper bounded as follows:
\begin{align}
\label{Eq:beta_1 upper bound begin}
\nn&\beta_1(\gamma,\tilP_1,\tilP_2)\\*
&=\bbP_1\{\gamma(Z^n,\tilY_1^N,\tilY_2^N)\neq \rmH_1\}\\
&=\sum_{\begin{subarray}{c}
	(\bz^{n\ba},\tilde{\by}_1^{N\bb},\tilde{\by}_2^{N\bb})\\
	\notin\calL_{\tilde{\lambda}}(\alpha,\ba,\bb,\calW)
	\end{subarray}}\prod_{k=1}^{K}\Big((\tilP_1W_k)^{na_k}(z^{na_k})  \nn\\*
	&\qquad\times (\tilP_1W_k)^{Nb_k}(\tily_1^{Nb_k})(\tilP_2W_k)^{Nb_k}(\tily_2^{Nb_k})\Big)\\
&=\sum_{\begin{subarray}{c}
	(\bz^{n\ba},\tilde{\by}_1^{N\bb},\tilde{\by}_2^{N\bb})\\
	\notin\calL_{\tilde{\lambda}}(\alpha,\ba,\bb,\calW)
	\end{subarray}}\exp\bigg\{\sum_{k=1}^{K}\bigg(\sum_{i: h(i)=k}\log (\tilP_1W_k)(z_i)\nn\\*
	&\qquad+\sum_{i: g(i)=k}\log (\tilP_1W_k)(\tily_{1,i})\!+\!\sum_{i: g(i)=k}\log (\tilP_2W_k)(\tily_{2,i})\bigg) \bigg\}\\
&\label{Eq:nD+ND}\leq\sum_{\begin{subarray}{c}
	(\bT_{\bz^{n\ba}},\bT_{\tilde{\by}_1^{N\bb}},\bT_{\tilde{\by}_2^{N\bb}})\\
	\notin\calT_{\tilde{\lambda}}(\alpha,\ba,\bb,\calW)
	\end{subarray}}\!\!\!\!\exp\bigg\{\!-n\!\min_{ \substack{(\tilP_1,\tilP_2)\in \\ \calP(\calX)^2}} \sum_{k=1}^{K}\Big(a_k D(T_{z^{na_k}}\|\tilP_1W_k)\nn\\*
	&\qquad\!+\!\alpha b_k D(T_{\tily_1^{Nb_k}}\|\tilP_1W_k)\!+\!\alpha b_k D(T_{\tily_2^{Nb_k}}\|\tilP_2W_k)\Big)\bigg\}\\
&< \exp(-n\tilde{\lambda})\prod_{k=1}^{K}|\mathcal{P}_{na_k}([L])|\times\big(|\mathcal{P}_{Nb_k}([L])|\big)^2 \label{eqn:useT}\\
&\leq \exp\left(-n\Big(\tilde{\lambda}-\frac{c_n}{n}\Big)\right)\label{Eq:beta_1 upper bound end}=\exp(-n\lambda),
\end{align}
where \eqref{eqn:useT} follows from the definition of $\calT_{{\lambda}}(\alpha,\ba,\bb,\calW)$ in~\eqref{def:calTlambda}.

For any $(P_1,P_2)\in\mathcal{P}(\mathcal{X})^2$, the type-II error probability can be upper bounded as follows: \begin{align}
\label{Eq:beta_2 upper bound begin}
\nn&\beta_2(\gamma,P_1,P_2)\\*
&=\sum_{\begin{subarray}{c}
	(\bz^{n\ba},\tilde{\by}_1^{N\bb},\tilde{\by}_2^{N\bb})\\
	\in\calL_{\tilde{\lambda}}(\alpha,\ba,\bb,\calW)
	\end{subarray}}\prod_{k=1}^{K}\Big((P_2W_k)^{na_k}(z^{na_k}) \nn\\*
	&\qquad\times  (P_1W_k)^{Nb_k}(\tily_1^{Nb_k})(P_2W_k)^{Nb_k}(\tily_2^{Nb_k})\Big)\\
&\leq \sum_{\begin{subarray}{c}
	(\bT_{\bz^{n\ba}},\bT_{\tilde{\by}_1^{N\bb}},\bT_{\tilde{\by}_2^{N\bb}})\\
	\in\calT_{\tilde{\lambda}}(\alpha,\ba,\bb,\calW)
	\end{subarray}}\exp\bigg\{-n\sum_{k=1}^{K}\Big(a_kD(T_{z^{na_k}}\|P_2W_k) \nn\\*
	&\qquad+\alpha b_k D(T_{\tily_1^{Nb_k}}\|P_1W_k)+\alpha b_k D(T_{\tily_2^{Nb_k}}\|P_2W_k) \Big)\bigg\} \\
&\dotleq \exp\bigg\{-n \min_{\begin{subarray}{c}
	(\bT_{\bz^{n\ba}},\bT_{\tilde{\by}_1^{N\bb}},\bT_{\tilde{\by}_2^{N\bb}})\\
	\in\calT_{\tilde{\lambda}}(\alpha,\ba,\bb,\calW)
	\end{subarray}}\sum_{k=1}^{K}\Big(a_k D(T_{z^{na_k}}\|P_2W_k) \nn\\*
	&\qquad+\alpha b_k D(T_{\tily_1^{Nb_k}}\|P_1W_k) +\alpha b_k D(T_{\tily_2^{Nb_k}}\|P_2W_k) \Big)   \bigg\}\\
&\label{Eq:beta_2 upper bound end}\leq\exp\bigg\{-n\min_{\substack{(\bQ,\tilde{\bQ}_1,\tilde{\bQ}_2)\\\in\calQ_{\tilde{\lambda}}(\alpha,\ba,\bb,\calW)}}\sum_{k=1}^{K}\Big(a_k D(Q_k\|P_2W_k)\nn\\*
&\qquad+\alpha b_k D(\tilQ_{1,k}\|P_1W_k)+\alpha b_k D(\tilQ_{2,k}\|P_2W_k)\Big) \bigg\}.
\end{align}

Thus, using the definition of $\mathrm{LD}$ in~\eqref{def:LD} and the result in~\eqref{Eq:beta_2 upper bound end}, we have the following lower bound on the type-II error exponent:
\begin{align}
	&\liminf\limits_{n\rightarrow\infty}\frac{1}{n}\log\frac{1}{\beta_2(\gamma,P_1,P_2)}\nn\\*
	&\geq \min_{\substack{
		(\bQ,\tilde{\bQ}_1,\tilde{\bQ}_2)
		\\ \in\calQ_{\lambda}(\alpha,\ba,\bb,\calW)
		} }\mathrm{LD}(\bQ,\tilde{\bQ}_1,\tilde{\bQ}_2,P_2,P_1,P_2|\alpha,\ba,\bb,\calW).\label{Eq:achievability}
\end{align}

\subsubsection{Converse}
Our converse proof proceeds by showing (i) type-based tests (i.e., tests $\gamma$ that depend only on the types or partial types of the data $(Z^n,\tilY_1^N,\tilY_2^N)$) are almost optimal, (ii) the test in \eqref{Eq:test} is an asymptotically optimal type-based test.

The following lemma extends that of~\cite[Lemma 7]{zhou2018second}.
\begin{lemma}\label{Lem:extended from Lin}
	For any deterministic test $\gamma$,   $\eta\in[0,1]$, $(P_1, P_2)\in \calP(\calX)^2$ and $(\ba,\bb)\in\calP_n([K])\times\calP_{\alpha n}([K])$,
	 we can construct a type-based test $\gamma^\rmT$ such that
	 \begin{align}
	 	\beta_1(\gamma,P_1,P_2 ) &\geq \eta \beta_1(\gamma^\rmT,P_1,P_2),\\*
	 	\beta_2(\gamma,P_1,P_2)&\geq (1-\eta) \beta_2(\gamma^\rmT,P_1,P_2).
	 \end{align}
\end{lemma}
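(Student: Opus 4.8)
The plan is to adapt the averaging (symmetrization-over-type-classes) argument of~\cite[Lemma 7]{zhou2018second} to the \emph{partial}-type structure created by the fixed index-mapping functions $h$ and $g$. The first step is to identify the right notion of ``type class''. Conditioned on $h$ and $g$, the coordinates of $(Z^n,\tilY_1^N,\tilY_2^N)$ are mutually independent, and within each block of coordinates routed through a common channel $W_k$ they are i.i.d.: under $\bbP_\nu$ one has $Z_i\sim P_\nu W_{h(i)}$, $\tilY_{1,i}\sim P_1W_{g(i)}$ and $\tilY_{2,i}\sim P_2W_{g(i)}$. Hence the probability that $\bbP_\nu$ assigns to a realization $(\bz^{n\ba},\tilde{\by}_1^{N\bb},\tilde{\by}_2^{N\bb})$ depends on that realization only through the tuple of partial types $\bt:=(\bT_{\bz^{n\ba}},\bT_{\tilde{\by}_1^{N\bb}},\bT_{\tilde{\by}_2^{N\bb}})$. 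Writing $\calA(\bt)$ for the set of all realizations with that tuple, every member of $\calA(\bt)$ therefore carries mass $\bbP_\nu(\calA(\bt))/|\calA(\bt)|$ under $\bbP_\nu$, for both $\nu=1$ and $\nu=2$; this is the only place where the multi-channel setting genuinely departs from~\cite{zhou2018second}.

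The second step is a deterministic rounding. Let $\calA_1$ be the acceptance region of $\gamma$ for $\rmH_1$ and, for each partial-type tuple $\bt$, set $p(\bt):=|\calA_1\cap\calA(\bt)|/|\calA(\bt)|\in[0,1]$, the fraction of the class on which $\gamma$ votes for $\rmH_1$. By the equiprobability just observed, $\beta_1(\gamma,P_1,P_2)=\sum_{\bt}(1-p(\bt))\,\bbP_1(\calA(\bt))$ and $\beta_2(\gamma,P_1,P_2)=\sum_{\bt}p(\bt)\,\bbP_2(\calA(\bt))$. Define the type-based test $\gamma^\rmT$ to declare $\rmH_1$ on all of $\calA(\bt)$ when $p(\bt)>1-\eta$, and $\rmH_2$ on all of $\calA(\bt)$ otherwise; being constant on each $\calA(\bt)$, it is a function of $\bt$ alone, hence type-based. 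For any $\bt$ on which $\gamma^\rmT$ declares $\rmH_2$ we have $1-p(\bt)\ge\eta$, so $(1-p(\bt))\,\bbP_1(\calA(\bt))\ge\eta\,\bbP_1(\calA(\bt))$; summing over exactly those $\bt$ (the others contribute nothing to $\beta_1(\gamma^\rmT)$) yields $\beta_1(\gamma,P_1,P_2)\ge\eta\,\beta_1(\gamma^\rmT,P_1,P_2)$. Symmetrically, on any $\bt$ on which $\gamma^\rmT$ declares $\rmH_1$ we have $p(\bt)>1-\eta$, so $p(\bt)\,\bbP_2(\calA(\bt))\ge(1-\eta)\,\bbP_2(\calA(\bt))$, and summing over those $\bt$ gives $\beta_2(\gamma,P_1,P_2)\ge(1-\eta)\,\beta_2(\gamma^\rmT,P_1,P_2)$. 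With the strict-inequality threshold the endpoints $\eta\in\{0,1\}$ are handled automatically: $\gamma^\rmT$ then declares $\rmH_2$ everywhere, resp.\ declares $\rmH_1$ wherever $\gamma$ ever does, and one of the two bounds is vacuous.

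I expect essentially no serious obstacle here: the argument is bookkeeping once the right ``type class'' is fixed. The one point worth stating carefully is the first step --- that, under each hypothesis, the joint law of $(Z^n,\tilY_1^N,\tilY_2^N)$ factors through the \emph{vector} of partial types $\bt$, not through the full types of the three sequences --- since that is exactly what licenses the averaging and, simultaneously, makes the resulting bound uniform in $(P_1,P_2)$ and $(\ba,\bb)$. After that, the thresholded test $\gamma^\rmT$ and the two displayed inequalities follow from the elementary manipulations above.
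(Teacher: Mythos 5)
Your proposal is correct and follows essentially the same route as the paper's proof: it uses the same partial-type classes induced by $h$ and $g$, the same key observation that each class is equiprobable under the product measures $\bbP_1,\bbP_2$, and the same thresholded (fraction-of-the-class) rounding to define $\gamma^\rmT$, yielding the two inequalities exactly as in the paper. The only cosmetic difference is which side of the threshold $p(\bt)=1-\eta$ carries the strict inequality, which does not affect the bounds.
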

\begin{proof}[Proof of Lemma \ref{Lem:extended from Lin}]
	For   $h$ and $g$ with proportions $\ba$ and $\bb$ respectively and any  $(x^n,y_1^N,y_2^N)$, let $(Z^n,\tilY_1^N,\tilY_2^N)\sim\big(\{W_{h(i)}(\cdot|x_i)\}_{i\in[n]},  \{W_{g(i)}(\cdot|y_{1,i})\}_{i\in[N]},\{W_{g(i)}(\cdot|y_{2,i})\}_{i\in[N]}\big)$. 
 Let $\calP_{n\ba+2N\bb}([L])$ denote the set $\big(\prod_{k\in[K]}\calP_{na_k}([L])\times\calP_{Nb_k}([L])^2\big)$ and let $\mathbf{Q}=(Q_{1,1},\ldots,Q_{1,K},Q_{2,1},\ldots,Q_{2,K},Q_{3,1},\ldots,Q_{3,K})\in\calP_{n\ba+2N\bb}([L])$. For any $\mathbf{Q}$, we use $\mathcal{\tilT}_{\mathbf{Q}}^{n+2N}$ to denote the set of sequence triples $(Z^n,\tilY_1^N,\tilY_2^N)$ such that for all $k\in[K]$, $Z^{na_k}\in \mathcal{\tilT}_{Q_{1,k}}^{na_k}$,  $\tilY_1^{Nb_k}\in \mathcal{\tilT}_{Q_{2,k}}^{Nb_k}$ and $\tilY_2^{Nb_k}\in \mathcal{\tilT}_{Q_{3,k}}^{Nb_k}$.
	
	Given any test $\gamma$, define the following acceptance region:
	 \begin{align}
 		\mathcal{A}(\gamma,\ba,\bb) \!:=\!\{(z^n,\tily_1^N,\tily_2^N):\gamma(z^n,\tily_1^N,\tily_2^N)\!=\!\rmH_1\}.
	\end{align}  
	Fix any $\eta\in [0,1]$. Given any $\mathbf{Q}\in\calP_{n\ba+2N\bb}([L])$, we can then construct the following type-based test $\gamma^\rmT$:
	\begin{itemize}
		\item If an $\eta$ fraction of sequence triples in  $\mathcal{\tilT}_{\mathbf{Q}}^{n+2N}$ favors hypothesis $\rmH_2$, i.e. $|\mathcal{A}^{\mathsf{c}}(\gamma,\ba,\bb)\cap \mathcal{\tilT}_{\mathbf{Q}}^{n+2N}|>\eta |\mathcal{\tilT}_{\mathbf{Q}}^{n+2N}|$, then $\gamma^\rmT(\bQ)=\rmH_2$;
		\item Otherwise, $\gamma^\rmT(\bQ)=\rmH_1$.
	\end{itemize}
	For any $(P_1, P_2)\in \calP(\calX)^2$ and $(\ba,\bb)$, we can relate the error probabilities of the two tests as in \eqref{Eq:beta eta begin}--\eqref{Eq:eta*beta1} and \eqref{eqn:beta2}--\eqref{Eq:beta eta end} at the top of the next page,
	\begin{figure*}
 \begin{align}\label{Eq:beta eta begin}
\beta_1(\gamma,P_1,P_2)&=\mathbb{P}_1\{\gamma(Z^n,\tilY_1^N,\tilY_2^N)=\rmH_2\}\\*
	&=\sum_{\mathbf{Q}\in\calP_{n\ba+2N\bb}([L])} \mathbb{P}_1\{(Z^n,\tilY_1^N,\tilY_2^N)\in\mathcal{A}^{\mathsf{c}}(\gamma,\ba,\bb)\cap \mathcal{\tilT}_{\mathbf{Q}}^{n+2N}\}\\
	&\geq \sum_{\substack{\mathbf{Q}\in \calP_{n\ba+2N\bb}([L]):\\
			|\mathcal{A}^{\mathsf{c}}(\gamma,\ba,\bb)\bigcap \mathcal{\tilT}_{\mathbf{Q}}^{n+2N}|>\eta |\mathcal{\tilT}_{\mathbf{Q}}^{n+2N}|}}  \mathbb{P}_1\{(Z^n,\tilY_1^N,\tilY_2^N)\in  \mathcal{A}^{\mathsf{c}}(\gamma,\ba,\bb)\cap \mathcal{\tilT}_{\mathbf{Q}}^{n+2N}\}\\
	\label{Eq:elements equally likely}&\geq \sum_{\substack{\mathbf{Q}\in \calP_{n\ba+2N\bb}([L]):\\
			|\mathcal{A}^{\mathsf{c}}(\gamma,\ba,\bb)\bigcap \mathcal{\tilT}_{\mathbf{Q}}^{n+2N}|>\eta |\mathcal{\tilT}_{\mathbf{Q}}^{n+2N}|}} \eta\mathbb{P}_1\{(Z^n,\tilY_1^N,\tilY_2^N)\in \mathcal{\tilT}_{\mathbf{Q}}^{n+2N} \} \\
	&= \sum_{\substack{\mathbf{Q}\in \calP_{n\ba+2N\bb}([L]):\\
			|\mathcal{A}^{\mathsf{c}}(\gamma,\ba,\bb)\bigcap \mathcal{\tilT}_{\mathbf{Q}}^{n+2N}|>\eta |\mathcal{\tilT}_{\mathbf{Q}}^{n+2N}|}} \eta\bigg(\prod_{k=1}^{K}\mathbb{P}_1\{Z^{na_k}\in\mathcal{\tilT}_{Q_{1,k}}^{na_{k}} \}\mathbb{P}_1\{\tilY_1^{Nb_k}\in\mathcal{\tilT}_{Q_{2,k}}^{Nb_k} \}\mathbb{P}_1\{\tilY_2^{Nb_k}\in\mathcal{\tilT}_{Q_{3,k}}^{Nb_k}\}  \bigg) \\ 
	\label{Eq:eta*beta1}&\geq\eta \beta_1(\gamma^\rmT,P_1,P_2),
	\end{align}  \hrulefill	\end{figure*}
	where \eqref{Eq:elements equally likely} follows since the elements in $\mathcal{T}_{\mathbf{Q}}^{n+2N}$ are equally likely (under any product distribution) for any $\bQ$. 
		\begin{figure*}
 \begin{align}
\beta_2(\gamma,P_1,P_2)
	&=\mathbb{P}_2\{\gamma(Z^n,\tilY_1^N,\tilY_2^N)=\rmH_1\} \label{eqn:beta2}\\
	&=\sum_{\mathbf{Q}\in \calP_{n\ba+2N\bb}([L])} \mathbb{P}_2\{(Z^n,\tilY_1^N,\tilY_2^N)\in\mathcal{A}(\gamma,\ba,\bb)\cap \mathcal{\tilT}_{\mathbf{Q}}^{n+2N}\}\\
	&= \sum_{\mathbf{Q}\in  \calP_{n\ba+2N\bb}([L])} (\mathbb{P}_2\{(Z^n,\tilY_1^N,\tilY_2^N)\in\mathcal{\tilT}_{\mathbf{Q}}^{n+2N}\}-\mathbb{P}_2\{(Z^n,\tilY_1^N,\tilY_2^N)\in\mathcal{A}^{\mathsf{c}}(\gamma,\ba,\bb)\cap \mathcal{\tilT}_{\mathbf{Q}}^{n+2N}\}) \\
	&\geq \sum_{\substack{\mathbf{Q}\in \calP_{n\ba+2N\bb}([L]):\\
			|\mathcal{A}^{\mathsf{c}}(\gamma,\ba,\bb)\bigcap \mathcal{\tilT}_{\mathbf{Q}}^{n+2N}|\leq\eta |\mathcal{\tilT}_{\mathbf{Q}}^{n+2N}|}} (1-\eta)\mathbb{P}_2\{(Z^n,\tilY_1^N,\tilY_2^N)\in\mathcal{\tilT}_{\mathbf{Q}}^{n+2N}\}\\
	&= \sum_{\substack{\mathbf{Q}\in \calP_{n\ba+2N\bb}([L]):\\
			|\mathcal{A}^{\mathsf{c}}(\gamma,\ba,\bb)\bigcap \mathcal{\tilT}_{\mathbf{Q}}^{n+2N}|\leq\eta |\mathcal{\tilT}_{\mathbf{Q}}^{n+2N}|}} (1-\eta)\bigg(\prod_{k=1}^{K}\mathbb{P}_2\{Z^{na_k}\in\mathcal{\tilT}_{Q_{1,k}}^{na_{k}} \}\mathbb{P}_2\{\tilY_1^{Nb_k}\in\mathcal{\tilT}_{Q_{2,k}}^{Nb_k} \}\mathbb{P}_2\{\tilY_2^{Nb_k}\in\mathcal{\tilT}_{Q_{3,k}}^{Nb_k}\}\bigg)\label{Eq:Aset_wrong}\\*
	&\geq(1-\eta) \beta_2(\gamma^\rmT,P_1,P_2).\label{Eq:beta eta end}
	\end{align}  \hrulefill	\end{figure*}
\end{proof} 

Let $\delta_n:=\frac{1}{n}\big(L\sum_{k\in[K]}(\log(na_k+1)+2\log(Nb_k+1))\big)=o(1)$ and fix an arbitrary sequence  $\{\delta'_n\}\subset (0,1)$ to be such that $\lim_{n\to\infty }\delta'_n=0$.
\begin{lemma}\label{Lem:type-based test optimal}
Given any $(\lambda,\alpha)\in \bbR_+^2$ and any $(\ba,\bb)$, for any type-based test $\gamma^\rmT$ satisfying that for all pairs of distributions $(\tilP_1, \tilP_2)\in \calP(\calX)^2$,  
\begin{equation}\label{Eq:beta_1<lambda}
	\beta_1(\gamma^\rmT,\tilP_1,\tilP_2)\leq \exp(-n\lambda),
\end{equation}
we have that for any pair of distributions $(P_1, P_2)\in \calP(\calX)^2$,
 \begin{align}\label{Eq:lemma12 beta2>...}
&\beta_2(\gamma^\rmT,P_1,P_2)\nn\\*
&\!\geq\! \mathbb{P}_2\left\{\!(\bT_{\bZ^{n\ba}},\bT_{\tilde{\bY}_1^{N\bb}},\bT_{\tilde{\bY}_2^{N\bb}})\!\in\!\calT_{\lambda-\delta_n-\delta'_n}(\alpha,\ba,\bb,\calW)\right\}\!.\!\!
\end{align}

\end{lemma}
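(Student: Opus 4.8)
The plan is to prove Lemma~\ref{Lem:type-based test optimal} by reducing the bound \eqref{Eq:lemma12 beta2>...} to the following purely combinatorial fact about the test's decision on type tuples: \emph{any} type-based test $\gamma^\rmT$ obeying the uniform type-I constraint \eqref{Eq:beta_1<lambda} must declare $\rmH_1$ on every (partial) type tuple lying in $\calT_{\lambda-\delta_n-\delta'_n}(\alpha,\ba,\bb,\calW)$. Granting this, \eqref{Eq:lemma12 beta2>...} is immediate: under $\rmH_2$ the decoder outputs $\rmH_1$ whenever the observed type tuple $(\bT_{\bZ^{n\ba}},\bT_{\tilde{\bY}_1^{N\bb}},\bT_{\tilde{\bY}_2^{N\bb}})$ falls into $\calT_{\lambda-\delta_n-\delta'_n}$, and outputting $\rmH_1$ under $\rmH_2$ is exactly a type-II error, so $\beta_2(\gamma^\rmT,P_1,P_2)=\bbP_2\{\gamma^\rmT=\rmH_1\}\ge\bbP_2\{(\bT_{\bZ^{n\ba}},\bT_{\tilde{\bY}_1^{N\bb}},\bT_{\tilde{\bY}_2^{N\bb}})\in\calT_{\lambda-\delta_n-\delta'_n}(\alpha,\ba,\bb,\calW)\}$.

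To establish the combinatorial fact I argue by contradiction. Suppose $\gamma^\rmT$ maps some type tuple $(\bT,\tilde{\bT}_1,\tilde{\bT}_2)\in\calT_{\lambda-\delta_n-\delta'_n}(\alpha,\ba,\bb,\calW)$ to $\rmH_2$. By the definition \eqref{def:calTlambda} of this set (the minimum over $\calP(\calX)^2$ being attained by lower semicontinuity of $\mathrm{LD}$ in its distribution arguments, or else by taking an approximate minimizer and absorbing the slack), there is a pair $(\tilP^*,P^*)\in\calP(\calX)^2$ with $\sum_{k\in[K]}\big(a_kD(T_k\|\tilP^* W_k)+\alpha b_kD(\tilT_{1,k}\|\tilP^* W_k)+\alpha b_kD(\tilT_{2,k}\|P^* W_k)\big)\le\lambda-\delta_n-\delta'_n$. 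Feed the adversarial pair $(\tilP_1,\tilP_2)=(\tilP^*,P^*)$ into \eqref{Eq:beta_1<lambda}. Under $\rmH_1$ with these generating distributions we have, independently across $k\in[K]$, $Z^{na_k}\sim(\tilP^* W_k)^{na_k}$, $\tilY_1^{Nb_k}\sim(\tilP^* W_k)^{Nb_k}$ and $\tilY_2^{Nb_k}\sim(P^* W_k)^{Nb_k}$, so the probability of realizing exactly the type tuple $(\bT,\tilde{\bT}_1,\tilde{\bT}_2)$ factors into a product of $3K$ type-class probabilities. Lower bounding each factor by the standard estimate $R^{m}(\calT_Q^m)\ge(m+1)^{-L}\exp(-mD(Q\|R))$ and absorbing all polynomial prefactors into $\exp(-n\delta_n)$ (this is precisely how $\delta_n$ is defined), the probability of this type tuple is at least $\exp(-n\delta_n)\exp\!\big(-n(\lambda-\delta_n-\delta'_n)\big)=\exp(-n\lambda+n\delta'_n)$. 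Since $\gamma^\rmT$ declares $\rmH_2$ on it, this event is contained in the type-I error event for $(\tilP^*,P^*)$, hence $\beta_1(\gamma^\rmT,\tilP^*,P^*)\ge\exp(-n\lambda+n\delta'_n)>\exp(-n\lambda)$ because $\delta'_n>0$, contradicting \eqref{Eq:beta_1<lambda}.

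The only bookkeeping requiring care is matching the product of per-block bounds with the clean quantity $\exp(-n\cdot\mathrm{LD})$: the $Z$-blocks carry types with denominator $na_k$ and the $\tilY_i$-blocks carry types with denominator $Nb_k=\alpha n b_k$, so the per-block exponents $na_kD(\cdot\|\cdot)$ and $Nb_kD(\cdot\|\cdot)$, after dividing by $n$, reproduce exactly the weights $a_k$ and $\alpha b_k$ appearing in $\mathrm{LD}$ (cf.~\eqref{def:LD}), while the prefactors $(na_k+1)^{L}$ and $(Nb_k+1)^{L}$ are exactly those entering the definition of $\delta_n$. I expect this matching of constants to be the only mildly delicate point; structurally the argument is the standard ``adversarially chosen generating distribution'' converse for Gutman-type tests (cf.~\cite[Lemma~7]{zhou2018second}), adapted here to the partial-type setting with the fixed family of channels $\calW$.
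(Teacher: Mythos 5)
Your proposal is correct and follows essentially the same route as the paper's proof: reduce the bound to the claim that the type-I constraint forces $\gamma^\rmT$ to declare $\rmH_1$ on every type tuple in $\calT_{\lambda-\delta_n-\delta'_n}(\alpha,\ba,\bb,\calW)$, then prove that claim by contradiction, feeding in the adversarial pair $(\tilP^*,P^*)$ minimizing $\mathrm{LD}$ for the offending type tuple and lower bounding the type-class probability by $\exp\bigl(-n(\mathrm{LD}+\delta_n)\bigr)$, which yields $\beta_1>\exp(-n\lambda)$. The bookkeeping you flag (the weights $a_k$, $\alpha b_k$ and the polynomial prefactors defining $\delta_n$) is handled in the paper exactly as you describe.
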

{\color{black}\begin{proof}[Proof of Lemma \ref{Lem:type-based test optimal}]
    Let $\mathrm{LD}(\bQ,\tilde{\bQ}_1,\tilde{\bQ}_2,\tilP,\tilP,P)=\mathrm{LD}(\bQ,\tilde{\bQ}_1,\tilde{\bQ}_2,\tilP,\tilP,P|\alpha,\ba,\bb,\calW)$.
    In other words, Lemma \ref{Lem:type-based test optimal} claims that for any type-based test $\gamma^\rmT$ satisfying \eqref{Eq:beta_1<lambda}, if any $(\bT_{\bz^{n\ba}},\bT_{\tilde{\by}_1^{N\bb}},\bT_{\tilde{\by}_2^{N\bb}})\in \calP_{n\ba+2N\bb}([L])$ satisfies 
    \begin{equation}
    \begin{aligned}[b]
&    \min_{(\tilP_1,\tilP_2)\in\calP(\calX)^2}\mathrm{LD} (\bT_{\bz^{n\ba}},\bT_{\tilde{\by}_1^{N\bb}},\bT_{\tilde{\by}_2^{N\bb}},\tilP_1,\tilP_1,\tilP_2)+\delta_n\nn\\*
&    \qquad\leq\lambda-\delta'_n,
    \end{aligned} 
    \end{equation}
    then we have $\gamma^\rmT(\bT_{\bz^{n\ba}},\bT_{\tilde{\by}_1^{N\bb}},\bT_{\tilde{\by}_2^{N\bb}})=\rmH_1$.
    
	This claim can be proved by contradiction. Suppose there exists $(\bQ_{\bz^{n\ba}},\bQ_{\tilde{\by}_1^{N\bb}},\bQ_{\tilde{\by}_2^{N\bb}})\in \calP_{n\ba+2N\bb}([L])$ such that 
\begin{align}		
\!\!\min_{\tilP_1,\tilP_2}\mathrm{LD} (\bQ_{\bz^{n\ba}},\bQ_{\tilde{\by}_1^{N\bb}},\bQ_{\tilde{\by}_1^{N\bb}},\tilP_1,\tilP_1,\tilP_2)\!+\!\delta_n\!\leq\!\lambda\!-\!\delta'_n, \label{eqn:LDlamb} 
\end{align}
and  
$\gamma^\rmT(\bQ_{\bz^{n\ba}},\bQ_{\tilde{\by}_1^{N\bb}},\bQ_{\tilde{\by}_2^{N\bb}}) =\mathrm{H}_2$. 
For all $(\tilP_1, \tilP_2)$, we have
	\begin{align}
		\label{Eq:beta1 gammaT lower begin}\nn&\beta_1(\gamma^\rmT,\tilP_1,\tilP_2)\\*
		&= \sum_{\begin{subarray}{c}
			(\bz^{n\ba},\tilde{\by}_1^{N\bb},\tilde{\by}_2^{N\bb}):\\
			\gamma^\rmT(\bT_{\bz^{n\ba}},\bT_{\tilde{\by}_1^{N\bb}},\bT_{\tilde{\by}_2^{N\bb}})=\rmH_2
			\end{subarray}}
		\bigg(\prod_{k=1}^{K}(\tilP_1W_k)^{na_k}(z^{na_k}) \nn\\*
		&\qquad\qquad\times (\tilP_1W_k)^{Nb_k}(\tily_1^{Nb_k})(\tilP_2W_k)^{Nb_k}(\tily_2^{Nb_k})\bigg)\\
		\nn&=\sum_{\begin{subarray}{c}
			(\bT_{\bz^{n\ba}},\bT_{\tilde{\by}_1^{N\bb}},\bT_{\tilde{\by}_2^{N\bb}}):\\
			\gamma^\rmT(\bT_{\bz^{n\ba}},\bT_{\tilde{\by}_1^{N\bb}},\bT_{\tilde{\by}_2^{N\bb}})=\rmH_2
			\end{subarray}}\bigg(\prod_{k=1}^{K}|\calT_{z^{na_k}}||\calT_{\tily_1^{Nb_k}}||\calT_{\tily_2^{Nb_k}}|\bigg) \nn\\
			&\qquad\times\exp\bigg(-n\sum_{k=1}^{K}a_k\left(D(T_{z^{na_k}}\|\tilP_1W_k)+H(T_{z^{na_k}})\right)\nn\\
		&\qquad  -N\sum_{k=1}^{K}b_k\left(D(T_{\tily_1^{Nb_k}}\|\tilP_1W_k)+H(T_{\tily_1^{Nb_k}})\right) \nn\\
		&\qquad-N\sum_{k=1}^{K}b_k\left(D(T_{\tily_2^{Nb_k}}\|\tilP_2W_k)+H(T_{\tily_2^{Nb_k}})\right)\bigg)\\
		\nn& \geq \sum_{\begin{subarray}{c}
			(\bT_{\bz^{n\ba}},\bT_{\tilde{\by}_1^{N\bb}},\bT_{\tilde{\by}_2^{N\bb}}):\\
			\gamma^\rmT(\bT_{\bz^{n\ba}},\bT_{\tilde{\by}_1^{N\bb}},\bT_{\tilde{\by}_2^{N\bb}})=\rmH_2
			\end{subarray}}\!\!\!\! \exp\bigg(-n\sum_{k=1}^{K}\Big(a_k D(T_{z^{na_k}}\|\tilP_1W_k) \nn\\*
			&\qquad+\alpha b_k D(T_{\tily_1^{Nb_k}}\|\tilP_1W_k)+\alpha b_kD(T_{\tily_2^{Nb_k}}\|\tilP_2W_k)\bigg)\nn\\*
		&\qquad -\sum_{k=1}^{K}\left(L\log(na_k+1)+2L\log(Nb_k+1)\right) \bigg) \label{Eq: lem9 beta_1 size of type class}\\
		&\geq \exp\bigg(-n\bigg(\sum_{k=1}^{K}\Big(a_kD(Q_{z^{na_k}}\|\tilP_1W_k ) \nn\\*
		&\qquad\!+\!\alpha b_k D(Q_{\tily_1^{Nb_k}}\|\tilP_1W_k)\!+\!\alpha b_kD(Q_{\tily_2^{Nb_k}}\|\tilP_2W_k)\Big)\!+\!\delta_n\bigg)\!\bigg), \label{Eq:beta1 gammaT lower end}
	\end{align}
	where~\eqref{Eq: lem9 beta_1 size of type class} follows since $|\calT_{z^{na_k}}|\geq (na_k+1)^{-L} \exp(na_kH(T_{z^{na_k}}))$ (cf.~\cite[Ch.~2]{Csi97}) and similar lower bounds hold for $|\calT_{\tily_1^{Nb_k}}|$ and $|\calT_{\tily_2^{Nb_k}}|$.
However, if we choose $(\tilP_1,\tilP_2)$ such that
	\begin{align}
		&(\tilP_1,\tilP_2)=(\tilP_1^*,\tilP_2^*) \nn\\*
		&:=\argmin_{(\barP_1,\barP_2)\in\calP(\calX)^2}\mathrm{LD} (\bQ_{\bz^{n\ba}},\bQ_{\tilde{\by}_1^{N\bb}},\bQ_{\tilde{\by}_2^{N\bb}},\barP_1,\barP_1,\barP_2),
	\end{align}
	then we have
	\begin{align}
& \beta_1(\gamma^\rmT,\tilP_1,\tilP_2)\nn\\*
 &\geq \exp\big(\! -n (\mathrm{LD} (\bQ_{\bz^{n\ba}},\bQ_{\tilde{\by}_1^{N\bb}},\bQ_{\tilde{\by}_2^{N\bb}},\tilP^*_1,\tilP^*_1,\tilP_2^*)\! +\! \delta_n \! )\big)\label{eqn:lower_LD}  \\
		&\geq \exp\big(-n(\lambda-\delta'_n)\big)\label{Eq:lem9 beta_1 min LD}\\*
		&> \exp(-n\lambda ) \label{Eq: strictly larger},
	\end{align}
	where \eqref{eqn:lower_LD} follows from the definition of $\mathrm{LD}(\cdot)$ in \eqref{def:LD};~\eqref{Eq:lem9 beta_1 min LD} follows from \eqref{eqn:LDlamb} and the strict inequality in~\eqref{Eq: strictly larger} follows because $\delta'_n>0$ for all $n$.
	The result in~\eqref{Eq: strictly larger} contradicts the assumption that~(\ref{Eq:beta_1<lambda}) is satisfied for all $(\tilP_1, \tilP_2)\in \calP(\calX)^2$. 
\end{proof}}

\begin{corollary}\label{Coro:beta_2>beta_2(GJS)}
	Given any $(\lambda,\alpha)\in \bbR_+^2$ and any $(\ba,\bb)$, for any test $\gamma$ satisfying that for all $(\tilP_1, \tilP_2)\in \calP(\calX)^2$,  
	\begin{equation}\label{Eq:beta_1 < lambda+delta}
		\beta_1(\gamma,\tilP_1, \tilP_2)\leq \exp\Big(-n\Big(\lambda+\frac{\log 2}{n}\Big)\Big),
	\end{equation}
	we have that for any pair of distributions $(P_1, P_2)\in \calP(\calX)^2$,
\begin{align}
		&\beta_2(\gamma,P_1,P_2) \nn\\*
		&\geq \frac{1}{2}\mathbb{P}_2 
		\Big\{\big(\bT_{\bZ^{n\ba}},\bT_{\tilde{\bY}_1^{N\bb}},\bT_{\tilde{\bY}_2^{N\bb}}\big)\in\calT_{\lambda-\delta_n-\delta'_n}(\alpha,\ba,\bb,\calW)\Big\}.
\end{align}
\end{corollary}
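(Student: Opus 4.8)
The plan is to derive Corollary~\ref{Coro:beta_2>beta_2(GJS)} by composing Lemma~\ref{Lem:extended from Lin} with Lemma~\ref{Lem:type-based test optimal}, taking the bookkeeping parameter to be $\eta=\tfrac12$; this is exactly the value for which the extra $\tfrac{\log 2}{n}$ in the hypothesis~\eqref{Eq:beta_1 < lambda+delta} is absorbed. First I would apply Lemma~\ref{Lem:extended from Lin} with $\eta=\tfrac12$ to the given (deterministic) test $\gamma$ to produce a type-based test $\gamma^{\rmT}$. The key observation is that the construction of $\gamma^{\rmT}$ inside that lemma's proof depends only on $(\gamma,\ba,\bb,\eta)$ and on the distribution-free partition of $[L]^{n+2N}$ into type classes. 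Hence one fixed $\gamma^{\rmT}$ satisfies both $\beta_1(\gamma,\tilP_1,\tilP_2)\ge\tfrac12\,\beta_1(\gamma^{\rmT},\tilP_1,\tilP_2)$ and $\beta_2(\gamma,P_1,P_2)\ge\tfrac12\,\beta_2(\gamma^{\rmT},P_1,P_2)$, simultaneously over all pairs of distributions.

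Next I would check the hypothesis of Lemma~\ref{Lem:type-based test optimal} for $\gamma^{\rmT}$. Since $\exp\!\big(-n(\lambda+\tfrac{\log 2}{n})\big)=\tfrac12\exp(-n\lambda)$, the assumed bound~\eqref{Eq:beta_1 < lambda+delta} combined with the first inequality above gives $\beta_1(\gamma^{\rmT},\tilP_1,\tilP_2)\le 2\,\beta_1(\gamma,\tilP_1,\tilP_2)\le\exp(-n\lambda)$ for every $(\tilP_1,\tilP_2)\in\calP(\calX)^2$, which is precisely~\eqref{Eq:beta_1<lambda}. Lemma~\ref{Lem:type-based test optimal} then yields $\beta_2(\gamma^{\rmT},P_1,P_2)\ge\mathbb{P}_2\big\{(\bT_{\bZ^{n\ba}},\bT_{\tilde{\bY}_1^{N\bb}},\bT_{\tilde{\bY}_2^{N\bb}})\in\calT_{\lambda-\delta_n-\delta'_n}(\alpha,\ba,\bb,\calW)\big\}$; multiplying through by $\tfrac12$ and invoking the second inequality from Lemma~\ref{Lem:extended from Lin} gives the claimed bound.

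I expect no real obstacle, since all the substantive content is already contained in the two lemmas and what remains is the careful composition and the tracking of constants. The one point that deserves a remark is that Lemma~\ref{Lem:extended from Lin} is stated for deterministic tests, so for an arbitrary test $\gamma$ one should first reduce to the deterministic case — either by conditioning on the internal randomization, since $\beta_1$ and $\beta_2$ are affine in it, or by noting that the extremal type-I/type-II trade-off is attained at a deterministic test. It is also worth confirming, although it is immediate, that the $o(1)$ slack $\delta_n$ and the vanishing sequence $\delta'_n$ pass through the composition unchanged, since they enter only via Lemma~\ref{Lem:type-based test optimal} applied to $\gamma^{\rmT}$.
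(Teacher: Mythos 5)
Your proposal is correct and follows exactly the paper's route: the paper obtains this corollary by combining Lemma~\ref{Lem:extended from Lin} (with $\eta=\tfrac12$) and Lemma~\ref{Lem:type-based test optimal}, using $\exp(-n(\lambda+\tfrac{\log 2}{n}))=\tfrac12\exp(-n\lambda)$ to verify~\eqref{Eq:beta_1<lambda} for the constructed type-based test. Your added observations—that the same $\gamma^{\rmT}$ works uniformly over all $(\tilP_1,\tilP_2)$ because its construction is distribution-free, and the remark about reducing to deterministic tests—are correct refinements of details the paper leaves implicit.
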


Corollary \ref{Coro:beta_2>beta_2(GJS)} can be directly obtained from Lemma \ref{Lem:extended from Lin} and Lemma \ref{Lem:type-based test optimal} by letting $\eta=\frac{1}{2}$. Using Corollary \ref{Coro:beta_2>beta_2(GJS)}, we have 
\begin{align}\label{Eq:beta2 lower begin}
&	 \beta_2(\gamma,P_1,P_2) \nn\\*
	 &\geq\frac{1}{2}\mathbb{P}_2 
	\Big\{\big(\bT_{\bZ^{n\ba}},\bT_{\tilde{\bY}_1^{N\bb}},\bT_{\tilde{\bY}_2^{N\bb}}\big)\in\calT_{\lambda-\delta_n-\delta'_n}(\alpha,\ba,\bb,\calW)\Big\}\\*
	&=\sum_{\begin{subarray}{c}
		(\bT_{\bz^{n\ba}},\bT_{\tilde{\by}_1^{N\bb}},\bT_{\tilde{\by}_2^{N\bb}})\\
		\in\calT_{\lambda-\delta_n-\delta'_n}(\alpha,\ba,\bb,\calW)
		\end{subarray}}\bigg(\prod_{k=1}^{K}|\calT_{z^{na_k}}||\calT_{\tily_1^{Nb_k}}||\calT_{\tily_2^{Nb_k}}|\bigg) \nn\\*
		&\qquad\exp\bigg\{-n\sum_{k=1}^{K}a_k\big(D(T_{z^{na_k}}\|P_2W_k)+H(T_{z^{na_k}})\big) \nonumber\\*
	&\qquad -N\sum_{k=1}^K b_k\big(D(T_{\tily_1^{Nb_k}}\|P_1W_k)+H(T_{\tily_1^{Nb_k}})	\big)\nn\\*
	&\qquad-N\sum_{k=1}^K b_k\big(D(T_{\tily_2^{Nb_k}}\|P_2W_k)+H(T_{\tily_2^{Nb_k}})	\big)-\log 2 \bigg\}\\
	&\geq \sum_{\begin{subarray}{c}
		(\bT_{\bz^{n\ba}},\bT_{\tilde{\by}_1^{N\bb}},\bT_{\tilde{\by}_2^{N\bb}})\\
		\in\calT_{\lambda-\delta_n-\delta'_n}(\alpha,\ba,\bb,\calW)
		\end{subarray}}\exp\bigg\{-n\sum_{k=1}^{K}\bigg(a_k D(T_{z^{na_k}}\|P_2W_k)\nn\\*
		&\qquad+\sum_{i\in[2]}\alpha b_k D(T_{\tily_i^{Nb_k}}\|P_iW_k)+\frac{c_n+\log 2}{n}\bigg)\bigg\}\\
	&\geq \exp\bigg\{-n \min_{\begin{subarray}{c}
		(\bT_{\bz^{n\ba}},\bT_{\tilde{\by}_1^{N\bb}},\bT_{\tilde{\by}_2^{N\bb}})\\
		\in\calT_{\lambda-\delta_n-\delta'_n}(\alpha,\ba,\bb,\calW)
		\end{subarray}}\sum_{k=1}^{K}\bigg(a_k D(T_{z^{na_k}}\|P_2W_k)\nn\\*
		&\qquad+\sum_{i\in[2]}\alpha b_kD(T_{\tily_i^{Nb_k}}\|P_iW_k)+\frac{c_n+\log 2}{n}\bigg)	\bigg\}. \label{Eq:beta2 lower end} 
\end{align}

 Note that the union of the set of types $\cup_{n\in\bbN}\calT_{\lambda}(\alpha,\ba,\bb,\calW)$ (where $\calT_\lambda$ is  defined in~\eqref{def:calTlambda}) is dense in the set of distributions $\calQ_{\lambda}(\alpha,\ba,\bb,\calW)$ (defined in \eqref{def:calQlambda}); this follows from the continuity of $(\bQ,\tilde{\bQ}_1,\tilde{\bQ}_2)\mapsto \min\limits_{(\tilP,P)\in\calP(\calX)^2}\mathrm{LD}(\bQ,\tilde{\bQ}_1,\tilde{\bQ}_2,\tilP,\tilP,P|\alpha,\ba,\bb,\calW)$.  Also $n\rightarrow \infty$, $\delta_n,\delta'_n, \frac{c_n}{n},\frac{\log 2}{n}\to 0$. As a result,  for any test $\gamma$ satisfying \eqref{Eq:beta_1 < lambda+delta}, the type-II error exponent can be upper bounded as follows
\begin{align}\label{Eq:converse}
	&\limsup\limits_{n\rightarrow \infty}\frac{1}{n}\log\frac{1}{\beta_2(\gamma,P_1,P_2)} \nn\\*
&\leq  \min_{\substack{
		(\bQ,\tilde{\bQ}_1,\tilde{\bQ}_2 )\\
		\in \calQ_{\lambda}(\alpha,\ba,\bb,\calW)
}}\mathrm{LD}(\bQ,\tilde{\bQ}_1,\tilde{\bQ}_2,P_2,P_1,P_2|\alpha,\ba,\bb,\calW).
\end{align}
Combining the lower and upper bounds in~\eqref{Eq:achievability} and~\eqref{Eq:converse} respectively, we conclude that the optimal type-II error exponent is given by \eqref{Eq:thm1 exponent} in Theorem \ref{Thm:exponent}, completing the proof.

\subsection{Proof of Lemma \ref{extreme:f}}
\label{proof:extremef}

Given any vector $\bb\in\calP([K])$ and any distributions $(P_1,P_2)\in\calP(\calX)^2$, define the following set of distributions
 \begin{align}\label{def:tildecalQ}
&\tilde{\calQ}(P_1,P_2|\bb,\calW)
:=\Big\{(\tilde{\bQ}_1,\tilde{\bQ}_2)\in\calP([L])^{2K}\,:\, \forall~k\in[K]\nn\\*
&\quad  b_k\|\tilQ_{1,k}-P_1 W_k\|_{\infty}=b_k\|\tilQ_{2,k}-P_2 W_k\|_{\infty}=0\Big\}.
\end{align} 

Recall  $\mathrm{LD}(\bQ,\tilde{\bQ}_1,\tilde{\bQ}_2,\tilP,\tilP,P |\alpha,\ba,\bb,\calW)$ in \eqref{def:LD} and $f_{\alpha}(\ba,\bb,\lambda)$ in \eqref{def:f}. As $\alpha\to\infty$, the objective function of $f_{\alpha}(\ba,\bb,\lambda)$ tends to infinity unless 
$(\tilde{\bQ}_1,\tilde{\bQ}_2)\in\tilde{\calQ}(P_1,P_2|\bb,\calW)$.

\begin{lemma}\label{Lem:g(infty)}
	For any $\bQ\in\calP([L])^K$ and any $(\tilde{\bQ}_1,\tilde{\bQ}_2)\in\tilde{\calQ}(P_1,P_2|\bb,\calW)$, we have
	\begin{align}
		&\lim\limits_{\alpha\to\infty}\min_{(\tilP,P)\in\calP(\calX)^2}\mathrm{LD}(\bQ,\tilde{\bQ}_1,\tilde{\bQ}_2,\tilP,\tilP,P|\alpha,\ba,\bb,\calW
		)\nn\\*
		&=\min_{\substack{\tilP\in\calP(\calX):\forall~k\in[K],\\b_k\|\tilQ_{1,k}-\tilP W_k\|_{\infty}=0}}\sum_{k\in[K]}a_kD(Q_k\|\tilP W_k).
	\end{align}
\end{lemma}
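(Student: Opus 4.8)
The plan is to study the quantity
$g(\alpha):=\min_{(\tilP,P)\in\calP(\calX)^2}\mathrm{LD}(\bQ,\tilde{\bQ}_1,\tilde{\bQ}_2,\tilP,\tilP,P\,|\,\alpha,\ba,\bb,\calW)$
directly and show that $\lim_{\alpha\to\infty}g(\alpha)$ equals the right-hand side, which I abbreviate $g^*$. Two structural facts are set up first. For fixed $(\tilP,P)$ the objective equals $\sum_{k}a_kD(Q_k\|\tilP W_k)+\alpha\sum_{k}b_k\big(D(\tilQ_{1,k}\|\tilP W_k)+D(\tilQ_{2,k}\|P W_k)\big)$, which is non-decreasing in $\alpha$; hence $g$ is non-decreasing and $\lim_{\alpha\to\infty}g(\alpha)=\sup_{\alpha}g(\alpha)$ exists in $[0,\infty]$, and this value is also the limit of $g$ along any sequence $\alpha_j\to\infty$. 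Moreover $(\tilP,P)\mapsto\mathrm{LD}(\bQ,\tilde{\bQ}_1,\tilde{\bQ}_2,\tilP,\tilP,P\,|\,\alpha,\ba,\bb,\calW)$ is lower semicontinuous on the compact set $\calP(\calX)^2$, since relative entropy is jointly lower semicontinuous and $\tilP\mapsto\tilP W_k$ is linear (hence continuous), so the minimum defining $g(\alpha)$ is attained. I will also use repeatedly that for probability vectors $D(\mu\|\nu)=0\iff\mu=\nu\iff\|\mu-\nu\|_\infty=0$, which makes $b_k\|\tilQ_{1,k}-\tilP W_k\|_\infty=0$ equivalent to $b_kD(\tilQ_{1,k}\|\tilP W_k)=0$, and likewise for $\tilQ_{2,k}$.

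For the bound $\lim_{\alpha\to\infty}g(\alpha)\le g^*$, I would fix any $\tilP^{\dagger}$ in the feasible set of the $g^*$-minimization; this set is non-empty because $(\tilde{\bQ}_1,\tilde{\bQ}_2)\in\tilde{\calQ}(P_1,P_2\,|\,\bb,\calW)$ gives $b_k\|\tilQ_{1,k}-P_1W_k\|_\infty=0$ for all $k$, so $\tilP=P_1$ qualifies. Evaluating $\mathrm{LD}$ at $(\tilP^{\dagger},P_2)$: feasibility of $\tilP^{\dagger}$ gives $b_kD(\tilQ_{1,k}\|\tilP^{\dagger}W_k)=0$ for every $k$, and $(\tilde{\bQ}_1,\tilde{\bQ}_2)\in\tilde{\calQ}(P_1,P_2\,|\,\bb,\calW)$ gives $b_kD(\tilQ_{2,k}\|P_2W_k)=0$ for every $k$, so every $\alpha$-scaled term vanishes and $g(\alpha)\le\sum_{k}a_kD(Q_k\|\tilP^{\dagger}W_k)$ for all $\alpha$. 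Minimizing the right side over feasible $\tilP^{\dagger}$ yields $g(\alpha)\le g^*$ for all $\alpha$, hence the bound (valid also when $g^*=\infty$).

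For the reverse bound I would pass to a subsequence. If $g^*=\infty$ the argument below still produces $\lim_{\alpha\to\infty}g(\alpha)\ge g^*$, so assume $g^*<\infty$; then $g(\alpha)\le g^*<\infty$ by the previous step, so a minimizer $(\tilP_\alpha,P_\alpha)$ of $g(\alpha)$ exists at a point of finite value. By compactness of $\calP(\calX)^2$, choose $\alpha_j\to\infty$ with $\tilP_{\alpha_j}\to\tilP_\infty$ and $P_{\alpha_j}\to P_\infty$. Since all terms are nonnegative, $\alpha_j\sum_{k}b_kD(\tilQ_{1,k}\|\tilP_{\alpha_j}W_k)\le g(\alpha_j)\le g^*$, so $\sum_{k}b_kD(\tilQ_{1,k}\|\tilP_{\alpha_j}W_k)\le g^*/\alpha_j\to0$; lower semicontinuity of $\tilP\mapsto\sum_{k}b_kD(\tilQ_{1,k}\|\tilP W_k)$ then forces this sum to be $0$ at $\tilP_\infty$, i.e.\ $b_k\|\tilQ_{1,k}-\tilP_\infty W_k\|_\infty=0$ for every $k$, so $\tilP_\infty$ is feasible for the $g^*$-minimization. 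Discarding the nonnegative $\alpha$-terms of $g(\alpha_j)$ and using lower semicontinuity of $\tilP\mapsto\sum_{k}a_kD(Q_k\|\tilP W_k)$,
\[
g^*\le\sum_{k}a_kD(Q_k\|\tilP_\infty W_k)\le\liminf_{j}\sum_{k}a_kD(Q_k\|\tilP_{\alpha_j}W_k)\le\liminf_{j}g(\alpha_j)=\lim_{\alpha\to\infty}g(\alpha),
\]
the last equality by monotonicity of $g$. Combining this with the previous paragraph gives the claimed identity. The only genuinely delicate point is the interchange of the inner minimization with the limit $\alpha\to\infty$; I expect this to be the main obstacle, and the combination of monotonicity of $g$ (to reduce to a subsequence whose limit is the full limit) with lower semicontinuity on a compact set (to push the minimizers to the limit) is exactly what resolves it.
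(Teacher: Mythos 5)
Your proof is correct, and its skeleton matches the paper's: the upper bound comes from evaluating $\mathrm{LD}$ at a feasible $\tilP^{\dagger}$ (and $P_2$) so that all $\alpha$-weighted terms vanish, and the lower bound comes from showing that limiting minimizers must satisfy the constraint $b_k\|\tilQ_{1,k}-\tilP W_k\|_\infty=0$. Where you differ is in how the delicate interchange of $\min$ and $\lim_{\alpha\to\infty}$ is handled: the paper invokes joint continuity of $\mathrm{LD}$ in $(\alpha,\tilP,P)$ together with a continuity-of-the-minimum lemma from an external reference, and then argues somewhat informally in terms of $g(\infty)$ and minimizers ``$\tilP^*(\infty),P^*(\infty)$'' at $\alpha=\infty$, whereas you use monotonicity of $g$ in $\alpha$ plus lower semicontinuity of relative entropy and compactness of $\calP(\calX)^2$ to extract a convergent subsequence of minimizers whose $\alpha$-scaled terms are forced to zero. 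Your route is more self-contained and actually tightens the step the paper leaves loose. One small gloss: when $g^*=\infty$, your literal bound $\alpha_j\sum_k b_k D(\tilQ_{1,k}\|\tilP_{\alpha_j}W_k)\le g^*$ is vacuous; the clean fix is to note that if $\lim_{\alpha\to\infty}g(\alpha)=C<\infty$ then the same subsequence argument with $C$ in place of $g^*$ produces a feasible $\tilP_\infty$ with $\sum_k a_k D(Q_k\|\tilP_\infty W_k)\le C<\infty$, contradicting $g^*=\infty$ — a trivial repair that does not affect the substance of your argument.
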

\begin{proof}
	Given any $\tilde{\bQ}=(\tilQ_1,\ldots,\tilQ_K)\in\calP([L])^K$, let
	\begin{align}
	\calS(\tilde{\bQ},\bb,\calW) :=\{\tilP :b_k\|\tilQ_k-\tilP W_k \|_{\infty}=0, ~\forall k\in[K] \}.
	\end{align}
	For any $(\tilde{\bQ}_1,\tilde{\bQ}_2)\in\tilde{\calQ}(P_1,P_2|\bb,\calW)$ and any $i\in[2]$,
 \begin{align}
&	b_k\|\tilQ_{i,k}-\tilP W_k \|_{\infty}\nn\\*
	&\leq b_k\|\tilQ_{i,k}-P_i W_k \|_{\infty}+b_k\|P_iW_k-\tilP W_k \|_{\infty} \nn\\*
	&=b_k\|P_iW_k-\tilP W_k \|_{\infty},
	\end{align}
	so we have $\calS(\{P_iW_k\}_{k\in[K]},\bb,\calW)\subset \calS(\tilde{\bQ}_i,\bb,\calW)$. Since $\calS(\{P_iW_k\}_{k\in[K]},\bb,\calW)\neq \emptyset$ for $i\in[2]$, we have $\calS(\tilde{\bQ}_i,\bb,\calW)\neq \emptyset$.
	
	Note first that $(\alpha,\tilP,P)\in[0,\infty)\times \calP(\calX)^2 \mapsto \mathrm{LD}(\bQ,\tilde{\bQ}_1,\tilde{\bQ}_2,\tilP,\tilP,P|\alpha,\ba,\bb,\calW)$ is jointly continuous, and $\calP(\calX)$ is compact. As such, the function
	\begin{equation}
	g(\alpha):=\min_{\tilP,P }\mathrm{LD}(\bQ,\tilde{\bQ}_1,\tilde{\bQ}_2,\tilP,\tilP,P|\alpha,\ba,\bb,\calW)
	\end{equation}
	with domain $[0,\infty)$ is continuous in $\alpha$ (cf.~\cite[Lemma 14]{tan2011large}). Let 
	\begin{align}
&	(\tilP^*(\alpha),P^*(\alpha)) \nn\\*
&	:=\argmin_{(\tilP,P)\in\calP(\calX)^2}\mathrm{LD}(\bQ,\tilde{\bQ}_1,\tilde{\bQ}_2,\tilP,\tilP,P|\alpha,\ba,\bb,\calW).
    \end{align}	
Thus, 
\begin{align}
&	\lim\limits_{\alpha\to\infty}\mathrm{LD}(\bQ,\tilde{\bQ}_1,\tilde{\bQ}_2,\tilP^*(\alpha),\tilP^*(\alpha),P^*(\alpha)|\alpha,\ba,\bb,\calW)\nn\\*
&	=\lim\limits_{\alpha\to\infty}g(\alpha)=g(\infty).
\end{align}
 Since $\calS(\tilde{\bQ}_i,\bb,\calW)\neq \emptyset$ where $i\in[2]$ for any $(\tilde{\bQ}_1,\tilde{\bQ}_2)\in\tilde{\calQ}(P_1,P_2|\bb,\calW)$, we have for any $\alpha\in[0,\infty)$
	\begin{align}\label{Eq:g(alpha)<infty}
	g(\alpha)&\leq \min_{\substack{\tilP\in\calS(\tilde{\bQ}_1,\bb,\calW),\\P\in\calS(\tilde{\bQ}_2,\bb,\calW)}}\mathrm{LD}(\bQ,\tilde{\bQ}_1,\tilde{\bQ}_2,\tilP,\tilP,P|\alpha,\ba,\bb,\calW)\nn\\*
	&=\min_{\tilP\in\calS(\tilde{\bQ}_1,\bb,\calW)}\sum_{k\in[K]}a_kD(Q_k\| \tilP W_k)<\infty.
	\end{align}
	If $\tilP^*(\infty),P^*(\infty)\notin \calS(\tilde{\bQ}_1,\bb,\calW)$, then $g(\infty)=\infty$, which violates \eqref{Eq:g(alpha)<infty}. So $\tilP^*(\infty),P^*(\infty)\in \calS(\tilde{\bQ},\bb,\calW)$ and
	\begin{equation}
	g(\infty)=\min_{\tilP\in\calS(\tilde{\bQ}_1,\bb,\calW)}\sum_{k\in[K]}a_kD(Q_k\| \tilP W_k).
	\end{equation}This concludes the proof.
\end{proof}

Thus, for any $\bQ\in\calP([L])^K$ and any $(\tilde{\bQ}_1,\tilde{\bQ}_2)\in\tilde{\calQ}(P_1,P_2|\bb,\calW)$, we have 
\begin{align}
\nn&\lim\limits_{\alpha\to\infty}\min_{(\tilP,P)\in\calP(\calX)^2}\mathrm{LD}(\bQ,\tilde{\bQ}_1,\tilde{\bQ}_2,\tilP,\tilP,P|\alpha,\ba,\bb,\calW)\\*
&=\min_{\tilP\in\calS(\tilde{\bQ}_1,\bb,\calW)}\sum_{k\in[K]}a_kD(Q_k\|\tilP W_k)\label{avoidinf}\\
&=\min_{\substack{\tilP \in\calP(\calX):\forall~k\in[K],\\b_k\|P_1W_k-\tilP W_k\|_{\infty}=0}}\sum_{k\in[K]}a_kD(Q_k\|\tilP W_k)\label{usetilcalQ}\\
&=\min_{\tilP \in\calP(P_1|\bb,\calW)}\sum_{k\in[K]}a_kD(Q_k\|\tilP W_k)\label{usecalPP1}\\
&=\kappa(\bQ,P_1|\ba,\bb,\calW)\label{usekappa},
\end{align}
where \eqref{avoidinf} follows from Lemma \ref{Lem:g(infty)}, \eqref{usetilcalQ} follows since $0\leq b_k\|P_1W_k-\tilP W_k\|_{\infty}\leq b_k\|\tilQ_{1,k}-\tilP W_k\|_{\infty}+b_k\|\tilQ_{1,k}-P_1 W_k\|_{\infty}=0$ for all  $k\in[K]$, \eqref{usecalPP1} follows from the definition of $\calP(P_1|\bb,\calW)$ in \eqref{def:calPP1} and \eqref{usekappa} is   due to the definition of $\kappa(\cdot)$ in~\eqref{def:kappa}.

Combining above results, we have the desired result in Lemma \ref{extreme:f}.

\subsection{Proof of Corollary \ref{Coro:alpha extreme}}
\label{proof:coro:alpha extreme}
From Lemma \ref{extreme:f}, we know that as $\alpha\rightarrow \infty$, given any $\lambda$ and any $(\ba,\bb)\in\calP([K])^2$, $f_{\alpha}(\ba,\bb,\lambda)$ converges to
\begin{align}
f_{\infty}(\ba,\bb,\lambda)
=\min_{\substack{\bQ\in\calP([L])^K:\\
\kappa(\bQ,P_1|\ba,\bb,\calW)\leq \lambda}}\sum_{k\in[K]}a_kD(Q_k\|P_2W_k).
\end{align}
For any $(\ba,\bb)\in\calP([K])^2$, we define 
\begin{align}\label{Eq:calJ(a,b)}
\calJ(\ba,\bb)
:=\{k\in[K]:~a_k>0\mbox{~and~}b_k=0\}.
\end{align}

Recalling the definition of $\kappa(\cdot)$ in \eqref{def:kappa} and noting that $P_1\in\calP(P_1|\bb,\calW)$, given any $(\ba,\bb)$ such that $\calJ(\ba,\bb)\neq \emptyset$, we have that
\begin{align}
f_{\infty}(\ba,\bb,\lambda)
&\leq \min_{\substack{\bQ\in \calP([L])^K:\\
	\sum_{k\in[K]}a_kD(Q_k\|P_1W_k)\leq\lambda	}}\sum_{k\in[K]}a_k D(Q_k\|P_2W_k)\label{differentsupport}.
\end{align}
On the other hand, for any $(\ba,\bb)$ such that $\calJ(\ba,\bb)=\emptyset$, we have that
\begin{align}
f_{\infty}(\ba,\bb,\lambda)
&=\min_{\substack{\bQ\in \calP([L])^K:\\
	\sum_{k\in[K]}a_kD(Q_k\|P_1W_k)\leq\lambda	}}\sum_{k\in[K]}a_k D(Q_k\|P_2W_k)\label{samesupport}.
\end{align}

Note that for any $\ba\in\calP([K])$, there exists $\bb\in\calP([K])$ such that $\calJ(\ba,\bb)=\emptyset$ (e.g., $\mathrm{supp}(\ba)=\mathrm{supp}(\bb)$); on the other hand, there also exists $\bb$ such that $\calJ(\ba,\bb)\neq \emptyset$, e.g., $\mathrm{supp}(\bb)\subset\mathrm{supp}(\ba)$ when $|\mathrm{\supp}(\ba)|\geq 1$ and $\mathrm{supp}(\bb)\cap\mathrm{supp}(\ba)=\emptyset$ when $|\mathrm{\supp}(\ba)|=1$. Thus, combining \eqref{differentsupport} and \eqref{samesupport}, we have that for any $\lambda\in\bbR_+$, 
\begin{align}
& \sup_{(\ba,\bb)\in\calP([K])^2}f_{\infty}(\ba,\bb,\lambda)\nn\\*
 &=\sup_{\ba\in\calP([K])}\sup_{\bb\in\calP([K]):\calJ(\ba,\bb)=\emptyset}f_{\infty}(\ba,\bb,\lambda)\\
&=\sup_{(\ba,\bb)\in\calP([K])^2:\calJ(\ba,\bb)=\emptyset}f_{\infty}(\ba,\bb,\lambda)\label{samesupportconclude}.
\end{align}

For each $k\in[K]$, given $\lambda\in\bbR_+$, let $Q_k^*$ achieve $f_{\infty}(\be_k,\be_k,\lambda)$, i.e.,
\begin{align}
f_{\infty}(\be_k,\be_k,\lambda)
&=\min_{\substack{Q_{k}\in \mathcal{P}([L]):\\
	D(Q_{k}\|P_1W_k)\leq\lambda	}}D(Q_{k}\|P_2W_k)\\*
&=D(Q_k^*\|P_2W_k)\label{Eq:f(ak)}.
\end{align}
Given any $(\ba,\bb)$ such that $\calJ(\ba,\bb)=\emptyset$, from \eqref{Eq:f(ak)}, we know that
\begin{align}
\sum_{k\in[K]}a_kD(Q_k^*\|P_1W_k)\leq \lambda,
\end{align}
and thus
\begin{align}
 f_{\infty}(\ba,\bb,\lambda)&=\min_{\substack{\mathbf{Q}\in \calP([L])^K:\\
	\sum_{k\in[K]}a_k D(Q_{k}\|P_1W_k) \leq\lambda}}\sum_{k\in[K]}a_k D(Q_k\|P_2W_k)\\
&\leq \sum_{k\in[K]}a_k D(Q_k^*\|P_2W_k)\\
&\leq \max_{k\in[K]}D(Q_k^*\|P_2W_k)\\*
&=\max_{k\in[K]}f_{\infty}(\be_k,\be_k,\lambda)\label{todirect}.
\end{align}
Combining \eqref{samesupportconclude} and \eqref{todirect}, we have that
\begin{align}
\sup_{(\ba,\bb)\in\calP([K])^2}f_{\infty}(\ba,\bb,\lambda)
&\leq\max_{k\in[K]}f_{\infty}(\be_k,\be_k,\lambda)\label{direct}.
\end{align}
On the other hand, it is easy to verify that
\begin{align}
\sup_{(\ba,\bb)\in\calP([K])^2}f_{\infty}(\ba,\bb,\lambda)
&\geq\max_{k\in[K]}f_{\infty}(\be_k,\be_k,\lambda)\label{converse}.
\end{align}
The proof of Corollary \ref{Coro:alpha extreme} is completed by combining~\eqref{direct} and~\eqref{converse}.

\subsection{Numerical Evaluations of $f_{\alpha}(\ba,\bb,\lambda)$ when $K=3$}\label{Sec:figures}

\begin{figure*}[t]
	\centering
	\subfigure[$\bb=(0.7,0.1,0.2)$]{
		\begin{minipage}[t]{0.48\linewidth}
			\centering
			\includegraphics[scale=0.45]{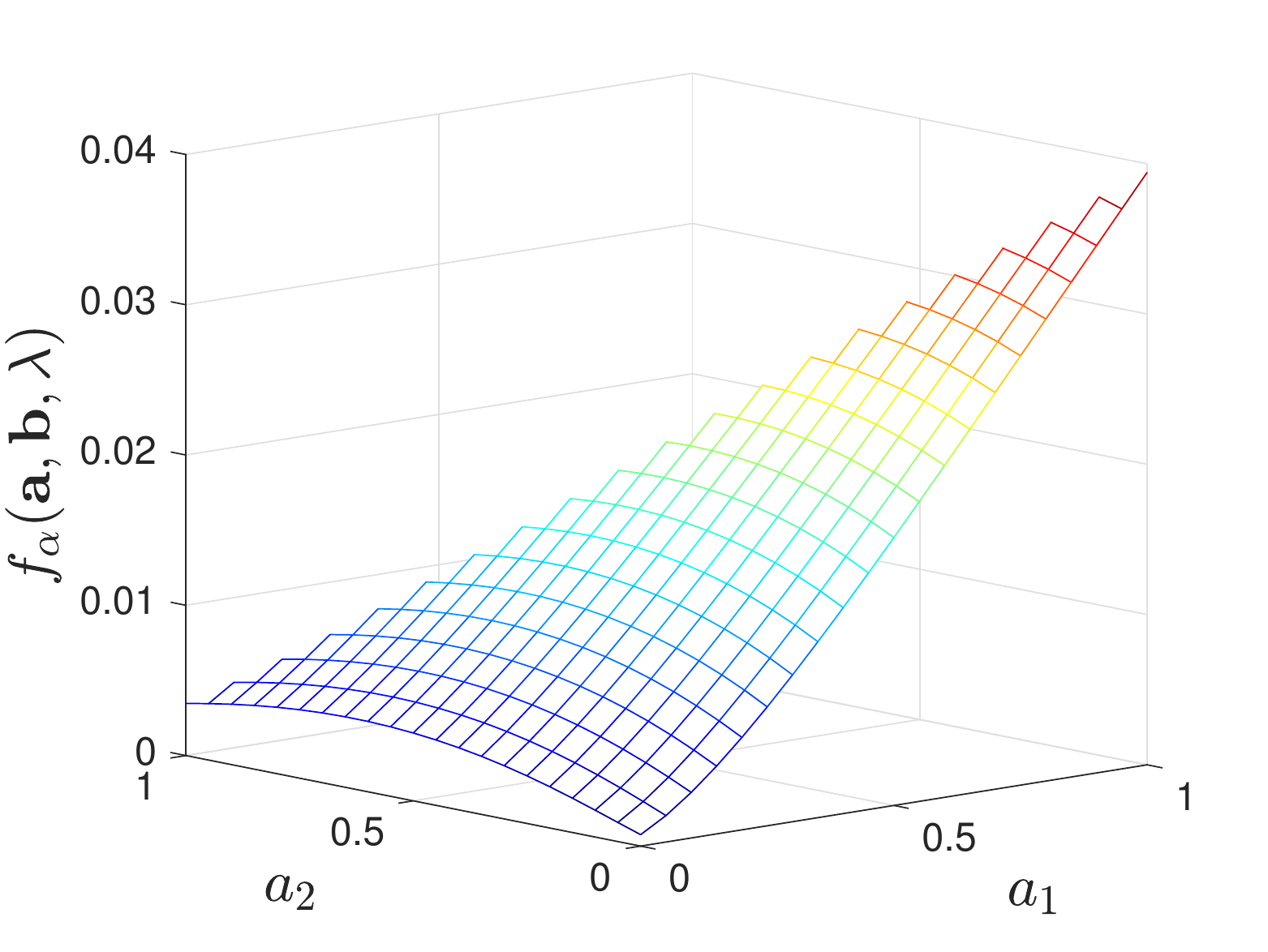}
			\label{fig:K3_(0.7,0.1,0.2)}
	\end{minipage}}
	\subfigure[$\bb=(0.1,0.5,0.4)$]{
		\begin{minipage}[t]{0.48\linewidth}
			\centering
			\includegraphics[scale=0.45]{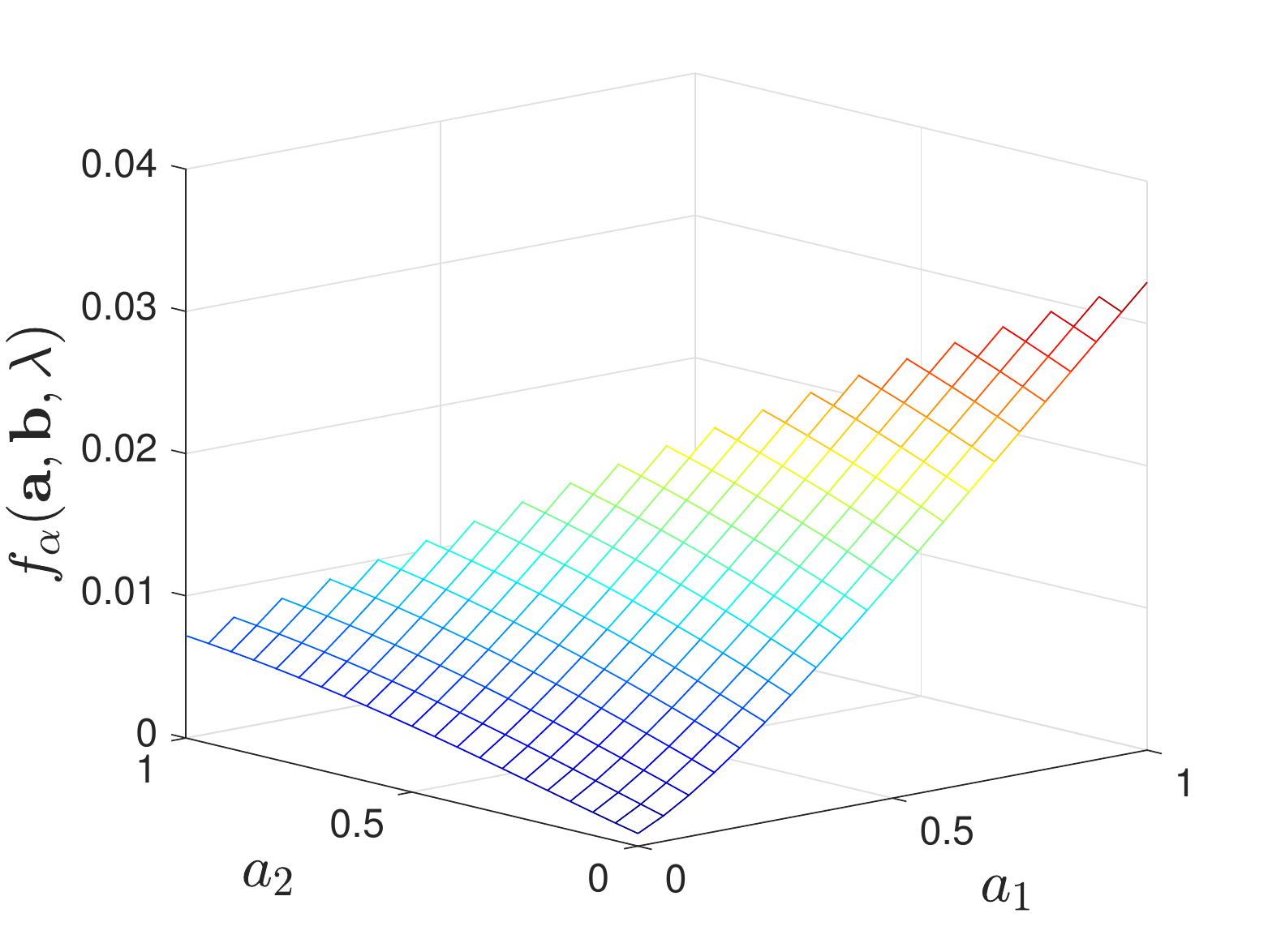}
			\label{fig:K3_(0.1,0.5,0.4)}
	\end{minipage}}
	\caption{Numerical evaluations  of $f_{\alpha}(\ba,\bb,\lambda)$ with the assumption that $V\in\calV_I$  when $K=3$, $\lambda=0.01$, $\alpha=10$ and $ W_1,W_2,W_3$ are \emph{deterministic}. Note that the maxima occur at the corner points of $[0,1]^2$, corroborating Conjecture~\ref{conj:det}. }
	\label{fig:add1}
	\hrulefill
\end{figure*}
In this subsection, we present further numerical evidence for Conjecture~\ref{conj:det} in Figs.~\ref{fig:add1} and \ref{fig:W random}. See the captions for descriptions of the figures.
\begin{figure*}[t]
	\centering
	\subfigure[$\bb=(1,0,0)$]{
		\begin{minipage}[t]{0.48\linewidth}
			\centering
			\includegraphics[scale=0.45]{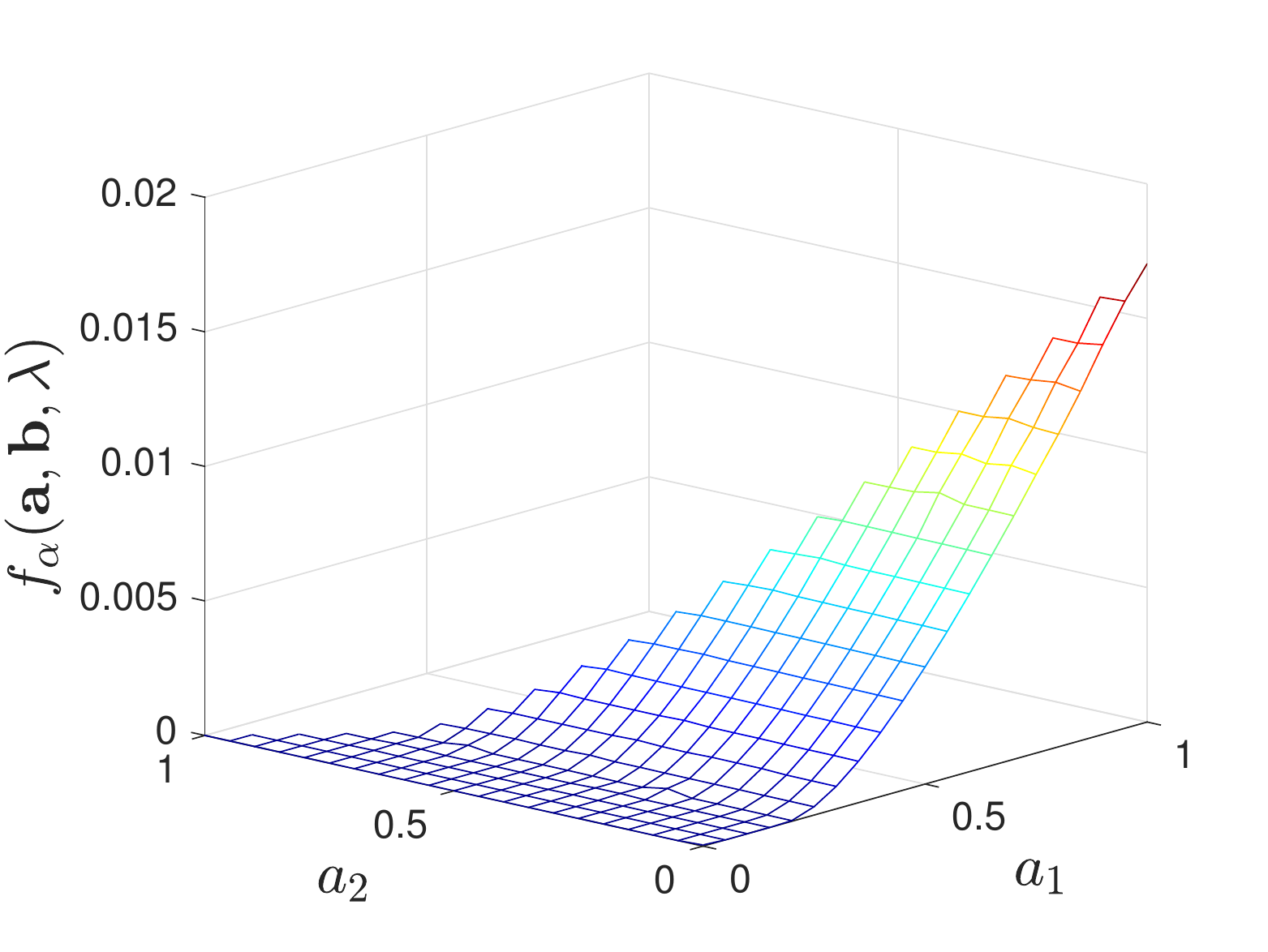}
			\label{fig:K3_Wrandom_100}
	\end{minipage}}
	\subfigure[$\bb=(0,0,1)$]{
		\begin{minipage}[t]{0.48\linewidth}
			\centering
			\includegraphics[scale=0.45]{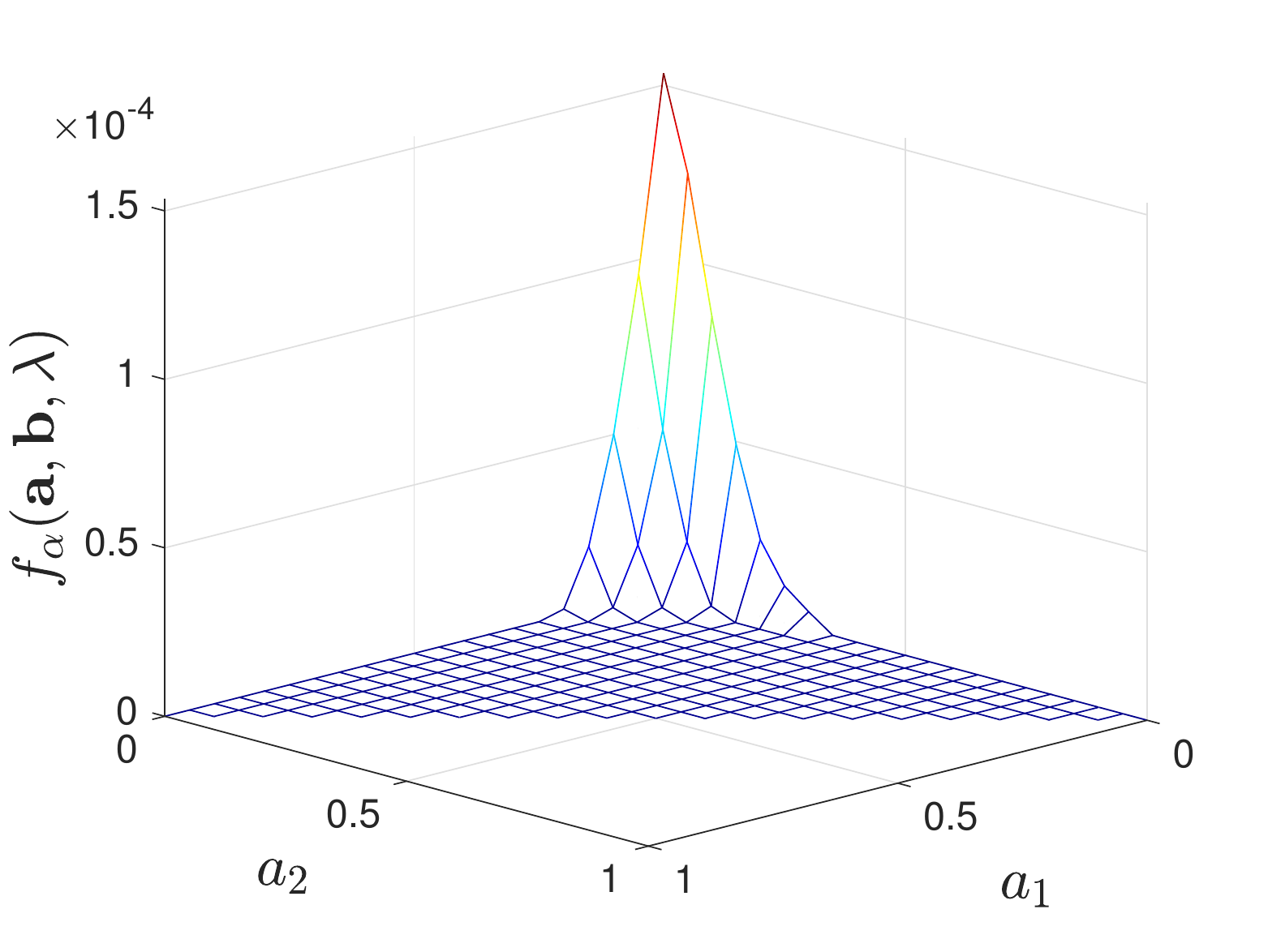}
			\label{fig:K3_Wrandom_001}
	\end{minipage}}
	\subfigure[$\bb=(\frac{1}{3},\frac{1}{3},\frac{1}{3})$]{
		\begin{minipage}[t]{0.48\linewidth}
			\centering
			\includegraphics[scale=0.45]{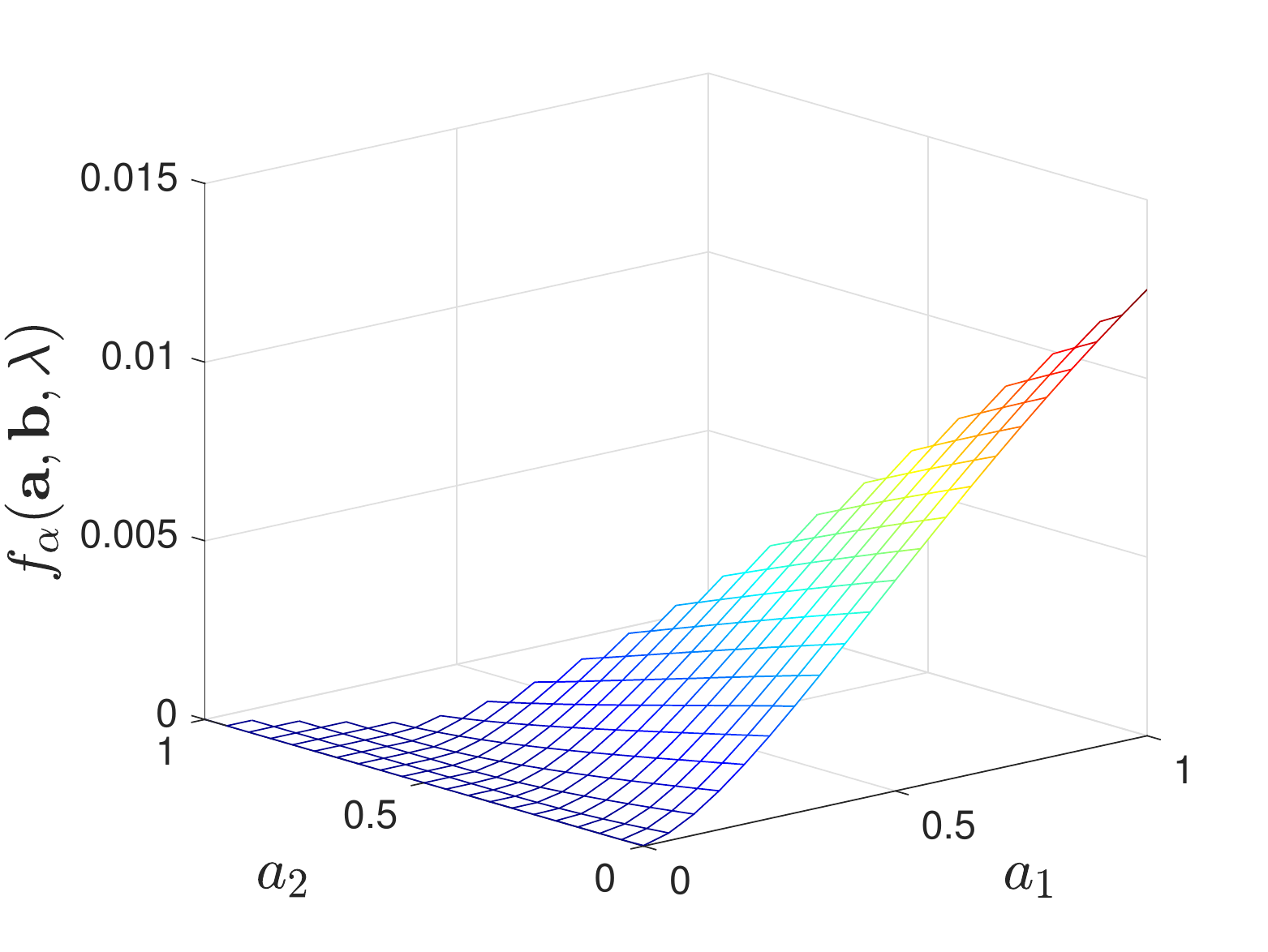}
			\label{fig:K3_Wrandom_random1}
	\end{minipage}}
	\subfigure[$\bb=(0.1,0.5,0.4)$]{
		\begin{minipage}[t]{0.48\linewidth}
			\centering
			\includegraphics[scale=0.45]{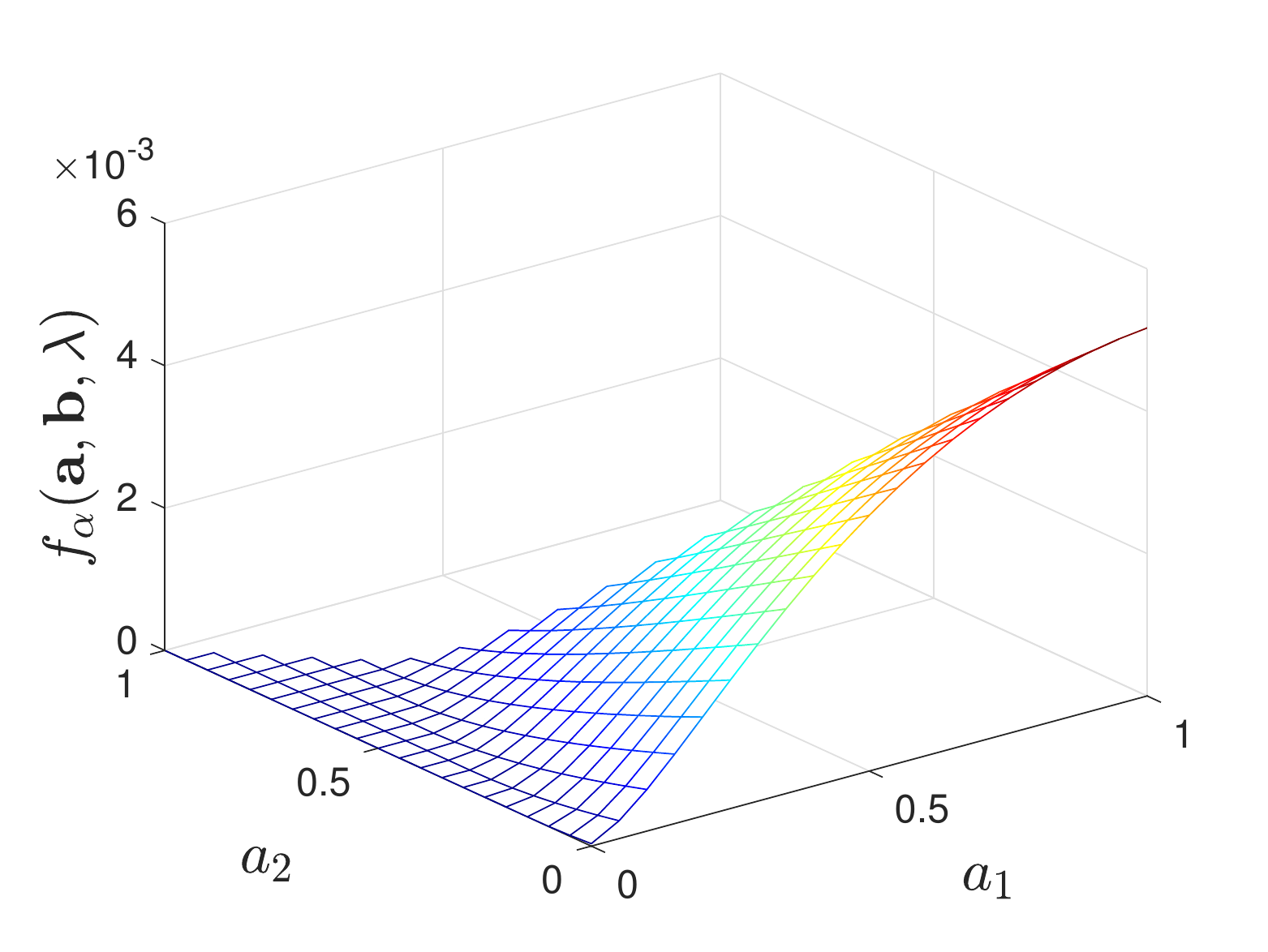}
			\label{fig:K3_Wrandom_random2}
	\end{minipage}}
	\caption{Numerical evaluations  of $f_{\alpha}(\ba,\bb,\lambda)$ with the assumption of $V\in\calV_I$  when $K=3$, $\lambda=0.01$, $\alpha=10$ and $ W_1,W_2,W_3$ are \emph{stochastic}. Note that the maxima occur at the corner points of $[0,1]^2$, corroborating Conjecture~\ref{conj:det}.}
	\label{fig:W random}
	\hrulefill
\end{figure*}

\subsection{Proof of Theorem \ref{Thm:rejection exponent}}\label{proof:rej exponent}

Recall the definitions of $i_1(\bZ^n,\tilde{\bY}^N)$ in \eqref{def:i1} and $i_2(\bZ^n,\tilde{\bY}^N)$ in \eqref{def:i2}. Given any tuple of distributions $(\bQ,\{\tilde{\bQ}_i\}_{i\in[m]})\in\calP([L])^{(m+1)K}$, we use $\mLD_j$ to denote $\mLD_j(\bQ,\{\tilde{\bQ}_i\}_{i\in[m]})$ if there is no risk of confusion.

\subsubsection{Achievability}
We use the test in \eqref{Eq:Unn test} with $\lambda$ replaced by $\tilde{\lambda}=\lambda+\frac{c_n}{n}$, where $c_n:=\sum_{k\in[K]}L(\log(na_k+1)+m\log(\alpha nb_k+1))=O(\log n)$. Given any $\bP\in\calP(\calX)^m$, for any tuple of types $(\bT_{\bZ^{n\ba}},\{\bT_{\tilde{\bY}_i^{N\bb}}\}_{i\in[m]})\in\calP_{n\ba+mN\bb}([L])$, for each $j\in[m]$, the type-$j$ error and rejection probabilities for the test in \eqref{Eq:Unn test} are respectively
\begin{align}
\beta_j(\gamma,\bP)&=\bbP_j\{i_1\neq j~\text{and}~ \mLD_{i_2}> \tilde{\lambda} \} \quad\mbox{and}\\
\rej(\gamma,\bP)&=\bbP_j\{\mLD_{i_2}\leq \tilde{\lambda} \}.
\end{align}

For each $j\in[m]$ and for all $\tilde{\bP}\in\calP(\calX)^m$, we   upper bound the type-$j$ error probability as follows:
\begin{align}
&\beta_j(\gamma,\bP)=\bbP_j\{i_1\neq j,~ \mLD_i> \tilde{\lambda},~\forall i\neq i_1  \}\\*
\label{Eq:upper bound by LDj>lambda}&\leq \bbP_j\{\mLD_j> \tilde{\lambda} \}\\
&= \sum_{\substack{(\bz^{n\ba},\{\tilde{\by}_i^{N\bb}\}_{i\in[m]}): \\
		 \mLD_j>\tilde{\lambda}}}\exp\Big\{\sum_{k\in[K]}\Big(\sum_{i:h(i)=k}\log(\tilP_jW_k)(z_i) \nn\\*
		  &\qquad\qquad+\sum_{l\in[m]}\sum_{i:g(i)=k}\log(\tilP_lW_k)(\tily_{l,i}) \Big) \Big\}\\
&\leq\sum_{\substack{(\bT_{\bz^{n\ba}},\{\bT_{\tilde{\by}_i^{N\bb}}\}_{i\in[m]}):\\
		 \mLD_j>\tilde{\lambda}}}\exp\Big\{-n\Big(\sum_{k\in[K]}a_kD(T_{z^{na_k}} \|\tilP_jW_k) \nn\\*
		 &\qquad\qquad+\sum_{l\in[m]}\sum_{k\in[K]}\alpha b_k D(T_{\tily_l^{Nb_k}}\|\tilP_lW_k) \Big) \Big\}     \\
\nn&\leq \sum_{\substack{(\bT_{\bz^{n\ba}},\{\bT_{\tilde{\by}_i^{N\bb}}\}_{i\in[m]}):\\
		\mLD_j>\tilde{\lambda}}}\!\!\!\exp\Big\{-n\Big(\min_{\tilP_j}\!\sum_{k\in[K]}\!\big(a_kD(T_{z^{na_k}} \|\tilP_jW_k) \nn\\*
		&\qquad\qquad+\alpha b_k D(T_{\tily_l^{Nb_k}}\|\tilP_jW_k)\big) \nn\\*
		&\qquad\qquad+\sum_{l\in[m]:l\neq j}\min_{\tilP_l}\sum_{k\in[K]}\alpha b_k D(T_{\tily_l^{Nb_k}}\|\tilP_lW_k) \Big) \Big\} \\
\label{Eq:follow def mLD}&\leq \exp(-n\tilde{\lambda})\prod_{k\in[K]}|\calP_{na_k}([L])||\calP_{Nb_k}([L])|^m\\
&\leq \exp\left(-n\left(\tilde{\lambda}-\frac{c_n}{n}\right)\right)=\exp(-n\lambda),
\end{align}
where \eqref{Eq:upper bound by LDj>lambda} follows since $\bigcap_{i\neq i_1, i_1\neq j }\big\{\mLD_i> \tilde{\lambda} \big\}\subset \big\{\mLD_j> \tilde{\lambda}\big\}$ and \eqref{Eq:follow def mLD} follows from the definition of $\mLD_j$ in \eqref{Eq:mary LD}.

Similarly, for each $j\in[m]$, we upper bound the type-$j$ rejection probability as  follows: 
\begin{align}
&\rej(\gamma,\bP)\nn\\*
&=\bbP_j\{\exists~(l,i)\in[m]^2:~l\neq i,~\mLD_i\leq \tilde{\lambda},~\mLD_l\leq \tilde{\lambda}\} \label{eqn:Pj_start}\\
&\leq \frac{m(m-1)}{2}\max_{(i,l)\in[m]^2:l\neq i}\bbP_j\{\mLD_i\leq \tilde{\lambda},~\mLD_l\leq \tilde{\lambda}\}\\
&\dotleq  \max_{(i,l)\in[m]^2:l\neq i} \sum_{\substack{(\bT_{\bz^{n\ba}},\{\bT_{\tilde{\by}_i^{N\bb}}\}_{i\in[m]}):\\
		\mLD_i\leq \tilde{\lambda},~\mLD_l\leq \tilde{\lambda}
} } \nn\\*
&\hspace{.6in}\exp\Big\{-n\Big(\sum_{k\in[K]}a_kD(T_{z^{na_k}} \|\tilP_jW_k) \nn\\*
&\hspace{.6in}+\sum_{i\in[m]}\sum_{k\in[K]}\alpha b_k D(T_{\tily_i^{Nb_k}}\|\tilP_iW_k) \Big) \Big\}\\
&\dotleq \exp\Big\{-n\min_{(i,l)\in[m]^2:l\neq i} \min_{\substack{(\bT_{\bz^{n\ba}},\{\bT_{\tilde{\by}_i^{N\bb}}\}_{i\in[m]}):\\
		\mLD_i\leq \tilde{\lambda},~\mLD_l\leq \tilde{\lambda}
}} \nn\\*
&\hspace{.6in}\Big(\sum_{k\in[K]}a_kD(T_{z^{na_k}} \|\tilP_jW_k) \nn\\*
&\hspace{.6in}+\sum_{i\in[m]}\sum_{k\in[K]}\alpha b_k D(T_{\tily_i^{Nb_k}}\|\tilP_iW_k) \Big) \Big\}\\
\label{Eq:mary sanov}&\leq \exp\Big\{-n\min_{(i,l)\in[m]^2:l\neq i}\min_{\substack{(\bQ,\{\tilde{\bQ}_i\}_{i\in[m]})\\
		\in\calQ_{\lambda,i,l}(\alpha,\ba,\bb,\calW)}} \nn\\*
&\hspace{.6in}\sum_{k\in[K]}\big(a_kD(Q_k\| P_jW_k) \nn\\*
		&\hspace{.6in}+\sum_{i\in[m]}\alpha b_k D(\tilQ_{i,k}\|P_i W_k) \big)\Big\}. 
\end{align}
Using  \eqref{Eq:mary sanov} and the definition of $\mathrm{LD}_j^{[m]}(\bQ,\{\tilde{\bQ}_i\}_{i\in[m]},\bP)$ in \eqref{def:LDjm}, we arrive at  the following lower bound on the type-$j$ rejection exponent:
\begin{align}\label{Eq:m-ary direct}
&\liminf\limits_{n\to\infty}\frac{1}{n}\log\frac{1}{\rej(\gamma,\bP)} \nn\\*
&\geq \min_{(i,l)\in[m]^2:l\neq i}\min_{\substack{(\bQ,\{\tilde{\bQ}_i\}_{i\in[m]})\\
		\in\calQ_{\lambda,i,l}(\alpha,\ba,\bb,\calW)}}\mathrm{LD}_j^{[m]}(\bQ,\{\tilde{\bQ}_i\}_{i\in[m]},\bP).
\end{align}

\subsubsection{Converse}
Similar to the binary case, the converse proof proceeds by showing (i) type-based tests (i.e., tests $\gamma$ that depend only on the types or partial types of the sequences $(Z^n,\tilde{\bY}^N)$, i.e., $\bT_{\bZ^{n\ba}}$ and $\{\bT_{\tilde{\bY}_i^{N\bb}}\}_{i\in[m]}$), are almost optimal and (ii) the test in \eqref{Eq:Unn test} is an asymptotically optimal type-based test.
\begin{lemma}\label{Lem:m-ary type based optimal}
	For any test $\gamma$, $(\eta_1,\ldots, \eta_m)\in[0,1]^m$, any $(\ba,\bb)\in \calP_{n}([K])\times\calP_{\alpha n}([K])$ and any tuple of distributions $\bP\in\calP(\calX)^m$, we can construct a type-based test $\gamma^\rmT$ such that for each $j\in[m]$,
	\begin{align}
	&\beta_j(\gamma,\bP)\geq \eta_{\min}\beta_j(\gamma^\rmT,\bP),\\
	&\rej(\gamma,\bP)\geq (1-\eta_{\mathrm{sum}})\rej(\gamma^\rmT,\bP),
	\end{align}
	where $\eta_{\min}:=\min_{i\in[m]}\eta_i$ and $\eta_{\mathrm{sum}}:=\sum_{i\in[m]}\eta_i$.
\end{lemma}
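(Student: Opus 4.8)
The plan is to mimic the binary construction in Lemma~\ref{Lem:extended from Lin}, now partitioning the type class according to which of the $m+1$ decision regions $\{\Lambda_1(\gamma),\ldots,\Lambda_m(\gamma),\Lambda_\rmr(\gamma)\}$ captures a large enough fraction of each orbit. Recall that for any tuple of types $(\bQ,\{\tilde{\bQ}_i\}_{i\in[m]})\in\calP_{n\ba+mN\bb}([L])$, all sequence tuples $(z^n,\tilde{\by}^N)$ lying in the corresponding generalized type class $\calT_{(\bQ,\{\tilde{\bQ}_i\})}^{n+mN}$ are equiprobable under any product distribution $\bbP_j$. First I would fix a type tuple $\bQ$ and, using the threshold vector $(\eta_1,\ldots,\eta_m)$, define $\gamma^\rmT(\bQ)$ as follows: if there is some $j\in[m]$ for which the fraction of the orbit lying in $\Lambda_j(\gamma)^\rmc$ exceeds $\eta_j$ (i.e.\ an $\eta_j$-fraction of tuples would be an error against $\rmH_j$), declare $\gamma^\rmT(\bQ)=\rmH_\rmr$ (rejection is the ``safe'' declaration that never counts as an error against any $\rmH_j$); otherwise declare $\gamma^\rmT(\bQ)$ to equal the hypothesis $\rmH_{j^\star}$ that the majority of the orbit favors, or more simply $\rmH_\rmr$ again if no such unique $j^\star$ exists --- in fact the cleanest choice is: $\gamma^\rmT(\bQ)=\rmH_\rmr$ unless the orbit is ``clean'' in the sense that $|\Lambda_j(\gamma)^\rmc\cap\calT_\bQ|\le\eta_j|\calT_\bQ|$ for every $j$, in which case we may let $\gamma^\rmT(\bQ)$ be the smallest $j$ with $\Lambda_j(\gamma)\cap\calT_\bQ$ of maximal size.

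The two inequalities then follow by the same counting argument as in the binary case, applied separately to each $j$. For the error bound $\beta_j(\gamma,\bP)\ge\eta_{\min}\beta_j(\gamma^\rmT,\bP)$: whenever $\gamma^\rmT(\bQ)\notin\{\rmH_j,\rmH_\rmr\}$, by construction we must have $|\Lambda_j(\gamma)^\rmc\cap\calT_\bQ|\le\eta_j|\calT_\bQ|$ --- wait, this needs the declaration rule arranged so that $\gamma^\rmT(\bQ)\notin\{\rmH_j,\rmH_\rmr\}$ forces a large $\Lambda_j(\gamma)^\rmc$ mass; I would instead set it up so that $\gamma^\rmT(\bQ)=\rmH_l$ with $l\neq j$ only when $|\Lambda_j(\gamma)^\rmc\cap\calT_\bQ|>\eta_j|\calT_\bQ|$, exactly paralleling the binary ``if an $\eta$ fraction favors $\rmH_2$'' rule. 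Summing over such $\bQ$, each $\bbP_j\{(Z^n,\tilde{\bY}^N)\in\Lambda_j(\gamma)^\rmc\cap\calT_\bQ\}\ge\eta_j\bbP_j\{(Z^n,\tilde{\bY}^N)\in\calT_\bQ\}\ge\eta_{\min}\bbP_j\{\cdots\}$ by equiprobability within the orbit, and the sum of orbit masses over the relevant $\bQ$'s is exactly $\beta_j(\gamma^\rmT,\bP)$. For the rejection bound $\rej(\gamma,\bP)\ge(1-\eta_{\mathrm{sum}})\rej(\gamma^\rmT,\bP)$: on any orbit $\bQ$ with $\gamma^\rmT(\bQ)=\rmH_\rmr$, the event $\{\gamma(Z^n,\tilde{\bY}^N)\in\Lambda_\rmr(\gamma)\}\cap\calT_\bQ$ occupies at least a $(1-\sum_{i\in[m]}\eta_i)$-fraction of the orbit, because its complement within $\calT_\bQ$ is contained in $\bigcup_{i\in[m]}(\Lambda_i(\gamma)^\rmc)^\rmc\cap\calT_\bQ$... more carefully, the tuples in $\calT_\bQ$ not landing in $\Lambda_\rmr(\gamma)$ are exactly those in $\bigcup_{i\in[m]}\Lambda_i(\gamma)$, and since $\gamma^\rmT(\bQ)=\rmH_\rmr$ was triggered by some index having error-mass exceeding its $\eta$, I need the set-up to guarantee each $|\Lambda_i(\gamma)\cap\calT_\bQ|\le\eta_i|\calT_\bQ|$ on a rejection orbit --- so the trigger for declaring $\rmH_\rmr$ should rather be: \emph{fail} to find a ``confident'' hypothesis, meaning for every $i$ one has $|\Lambda_i(\gamma)\cap\calT_\bQ|\le(1-\eta_i)|\calT_\bQ|$ doesn't directly give it either. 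Let me instead declare $\gamma^\rmT(\bQ)=\rmH_j$ precisely when $|\Lambda_j(\gamma)^\rmc\cap\calT_\bQ|\le\eta_j|\calT_\bQ|$, i.e.\ $|\Lambda_j(\gamma)\cap\calT_\bQ|\ge(1-\eta_j)|\calT_\bQ|$ (at most one such $j$ can exist when $\sum\eta_i<1$... not quite, but pick the smallest), and $\gamma^\rmT(\bQ)=\rmH_\rmr$ otherwise. Then on a rejection orbit, $|\Lambda_i(\gamma)\cap\calT_\bQ|<(1-\eta_i)|\calT_\bQ|$ for all $i$, so $|\Lambda_\rmr(\gamma)\cap\calT_\bQ|\ge|\calT_\bQ|-\sum_i(1-\eta_i)|\calT_\bQ|$, which is nonnegative only when $\sum_i(1-\eta_i)\le1$ --- so the honest statement must use the hypotheses that the complement of $\Lambda_\rmr$ is $\bigcup_i\Lambda_i$ and at most one $\Lambda_i$ can hold the bulk; the combinatorics here is the step I'd need to get exactly right.

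The hard part, then, is precisely this bookkeeping: arranging the type-based declaration rule so that simultaneously (a) a non-$\{\rmH_j,\rmH_\rmr\}$ declaration on orbit $\bQ$ certifies error-mass $>\eta_j$ for the original test on that orbit, and (b) a rejection declaration certifies rejection-mass $\ge(1-\eta_{\mathrm{sum}})$. The natural fix is to exploit that on any orbit the events $\{\gamma\in\Lambda_i(\gamma)\}$, $i\in[m]$, together with $\{\gamma\in\Lambda_\rmr(\gamma)\}$ partition $\calT_\bQ$, so that $|\Lambda_\rmr(\gamma)\cap\calT_\bQ| = |\calT_\bQ| - \sum_{i\in[m]}|\Lambda_i(\gamma)\cap\calT_\bQ|$, and to declare $\rmH_\rmr$ exactly when no index $j$ has $|\Lambda_j(\gamma)\cap\calT_\bQ|>\eta_j|\calT_\bQ|$ --- for such $\bQ$ every term in the sum is at most $\eta_i|\calT_\bQ|$, giving $|\Lambda_\rmr(\gamma)\cap\calT_\bQ|\ge(1-\eta_{\mathrm{sum}})|\calT_\bQ|$ immediately; and declare $\rmH_j$ for the (unique, since $\sum\eta_i$ may exceed $1$ only if we allow ties broken by smallest index) $j$ with $|\Lambda_j(\gamma)\cap\calT_\bQ|>\eta_j|\calT_\bQ|$, so that $\gamma^\rmT(\bQ)=\rmH_l$, $l\notin\{j,\rmr\}$, forces $|\Lambda_j(\gamma)\cap\calT_\bQ|\le\eta_j|\calT_\bQ|$, hence $|\Lambda_j(\gamma)^\rmc\cap\calT_\bQ|\ge(1-\eta_j)|\calT_\bQ|\ge\eta_{\min}|\calT_\bQ|$ provided $\eta_j\le 1/2$ --- and indeed one takes all $\eta_i\le1/2$ (e.g.\ later setting $\eta_i=1/(m+1)$ or similar), which is the regime actually used downstream. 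Once the rule is pinned down, both displayed inequalities drop out by the equiprobability-within-orbit identity and summation over the relevant collection of type tuples, exactly as in \eqref{Eq:beta eta begin}--\eqref{Eq:beta eta end}; I would write out the $m$-ary analogues of those two display chains, using $\calP_{n\ba+mN\bb}([L])$ in place of $\calP_{n\ba+2N\bb}([L])$ and the product $\prod_{k\in[K]}\bbP_j\{Z^{na_k}\in\calT_{Q_k}\}\prod_{i\in[m]}\bbP_j\{\tilY_i^{Nb_k}\in\calT_{\tilQ_{i,k}}\}$ for the orbit probability.
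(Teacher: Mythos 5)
Your final construction --- declare $\gamma^\rmT(\bQ)=\rmH_j$ when $|\Lambda_j(\gamma)\cap\calT_\bQ|>\eta_j|\calT_\bQ|$ (smallest such $j$ if several qualify) and $\gamma^\rmT(\bQ)=\rmH_\rmr$ when no index exceeds its threshold --- is the right $m$-ary analogue of the binary construction (the paper itself omits this proof, stating only that it mirrors Lemma~\ref{Lem:extended from Lin}), and your rejection bound is correct: on a rejection orbit every $|\Lambda_i(\gamma)\cap\calT_\bQ|\le\eta_i|\calT_\bQ|$, the regions $\Lambda_1(\gamma),\ldots,\Lambda_m(\gamma),\Lambda_\rmr(\gamma)$ partition the orbit, so $|\Lambda_\rmr(\gamma)\cap\calT_\bQ|\ge(1-\eta_{\mathrm{sum}})|\calT_\bQ|$, and equiprobability within the orbit plus summation over rejection orbits gives $\rej(\gamma,\bP)\ge(1-\eta_{\mathrm{sum}})\rej(\gamma^\rmT,\bP)$ (trivially true if $\eta_{\mathrm{sum}}>1$).

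The error bound, however, is not established by the step you wrote. You argue that $\gamma^\rmT(\bQ)=\rmH_l$ with $l\notin\{j,\rmr\}$ ``forces'' $|\Lambda_j(\gamma)\cap\calT_\bQ|\le\eta_j|\calT_\bQ|$ and then conclude via $|\Lambda_j(\gamma)^{\rmc}\cap\calT_\bQ|\ge(1-\eta_j)|\calT_\bQ|\ge\eta_{\min}|\calT_\bQ|$ ``provided $\eta_j\le 1/2$''. Three problems: (i) with smallest-index tie-breaking the forcing claim fails when both $l$ and $j>l$ exceed their thresholds; (ii) $\Lambda_j(\gamma)^{\rmc}$ contains $\Lambda_\rmr(\gamma)$, and in this $m$-ary-with-rejection setting $\beta_j$ counts only $\{\gamma\notin\{\rmH_j,\rmH_\rmr\}\}=\bigcup_{i\neq j}\Lambda_i(\gamma)$, so a lower bound on $|\Lambda_j(\gamma)^{\rmc}\cap\calT_\bQ|$ does not lower bound the error mass of $\gamma$ on that orbit (the original test could reject almost the entire orbit); (iii) the lemma is asserted for all $(\eta_1,\ldots,\eta_m)\in[0,1]^m$, so the side condition $\eta_j\le 1/2$ is not available. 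The fix is simpler than what you attempted and needs none of this: if $\gamma^\rmT(\bQ)=\rmH_l$ with $l\neq j$, then by your own declaration rule $|\Lambda_l(\gamma)\cap\calT_\bQ|>\eta_l|\calT_\bQ|\ge\eta_{\min}|\calT_\bQ|$, and since $l\notin\{j,\rmr\}$ the set $\Lambda_l(\gamma)\cap\calT_\bQ$ is contained in the $\rmH_j$-error event of $\gamma$; hence $\bbP_j\{\gamma(Z^n,\tilde{\bY}^N)\notin\{\rmH_j,\rmH_\rmr\},\,(Z^n,\tilde{\bY}^N)\in\calT_\bQ\}\ge\eta_{\min}\,\bbP_j\{(Z^n,\tilde{\bY}^N)\in\calT_\bQ\}$ by equiprobability within the orbit, and summing over the orbits on which $\gamma^\rmT$ errs under $\rmH_j$ yields $\beta_j(\gamma,\bP)\ge\eta_{\min}\beta_j(\gamma^\rmT,\bP)$ for arbitrary $\eta\in[0,1]^m$ and any tie-breaking rule.
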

The proof of Lemma \ref{Lem:m-ary type based optimal} is similar to the proof of Lemma \ref{Lem:extended from Lin} and thus omitted.

Before starting the next result, let $\delta_n=\frac{c_n}{n}$, and fix an arbitrary sequence $\{\delta_n'\}\subset (0,1)$ be such that $\lim_{n\to\infty}\delta_n'=0$.

\begin{lemma}\label{Lem:m-ary contradiction}
	Given any $(\lambda,\alpha)\in\bbR_+^2$, any $(\ba,\bb)$, for any type-based test $\gamma^{\rmT}$ satisfying that for all tuples of distributions $\tilde{\bP}\in\calP(\calX)^m$,
	\begin{equation}\label{Eq:type-based beta_j < exp(-n lambda) }
	\beta_j(\gamma^{\rmT},\tilde{\bP})\leq \exp(-n\lambda), ~\forall j\in[m],
	\end{equation}
	we have that for any $\bP\in\calP(\calX)^m$, 
	\begin{align}
	&\rej(\gamma^{\rmT},\bP)\nn\\*
	&\geq \bbP_j\Big\{\mLD_{i_2}(\bT_{\bZ^{n\ba}},\{\bT_{\tilde{\bY}_i^{N\bb}}\}_{i\in[m]})\leq \lambda-\delta_n-\delta_n' \Big\}.
	\end{align}
\end{lemma}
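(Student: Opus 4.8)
The plan is to mimic the argument in the proof of Lemma \ref{Lem:type-based test optimal}, but now the conclusion is about the \emph{rejection} region of the type-based test rather than its $\rmH_1$-acceptance region. Concretely, I will show the following pointwise statement: for any type-based test $\gamma^{\rmT}$ satisfying \eqref{Eq:type-based beta_j < exp(-n lambda) }, if a tuple of types $(\bT_{\bz^{n\ba}},\{\bT_{\tilde{\by}_i^{N\bb}}\}_{i\in[m]})\in\calP_{n\ba+mN\bb}([L])$ satisfies $\mLD_{i_2}(\bT_{\bz^{n\ba}},\{\bT_{\tilde{\by}_i^{N\bb}}\}_{i\in[m]})\le\lambda-\delta_n-\delta_n'$, then $\gamma^{\rmT}(\bT_{\bz^{n\ba}},\{\bT_{\tilde{\by}_i^{N\bb}}\}_{i\in[m]})=\rmH_\rmr$. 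Once this is established, summing $\bbP_j$ over all such type tuples (which all lie in the rejection region) gives the claimed lower bound on $\rej(\gamma^{\rmT},\bP)$.

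I would prove the pointwise claim by contradiction. Suppose some type tuple $(\bQ_{\bz^{n\ba}},\{\bQ_{\tilde{\by}_i^{N\bb}}\}_{i\in[m]})$ has $\mLD_{i_2}\le\lambda-\delta_n-\delta_n'$ but $\gamma^{\rmT}$ does not declare $\rmH_\rmr$ on it; then $\gamma^{\rmT}$ declares some $\rmH_j$ with $j\in[m]$. By definition of $i_1$ and $i_2$ as the indices of the smallest and second smallest values of $\mLD_i$, the condition $\mLD_{i_2}\le\lambda-\delta_n-\delta_n'$ forces $\mLD_i\le\lambda-\delta_n-\delta_n'$ for at least two distinct indices $i$; in particular there exists an index $l\neq j$ (namely $l=i_1$ if $j\neq i_1$, or $l=i_2$ if $j=i_1$) with $\mLD_l\le\lambda-\delta_n-\delta_n'$. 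Now I use the standard type-counting lower bound on a probability assigned to a product of type classes, exactly as in \eqref{Eq:beta1 gammaT lower begin}--\eqref{Eq:beta1 gammaT lower end}: since $\gamma^{\rmT}(\bQ_{\bz^{n\ba}},\{\bQ_{\tilde{\by}_i^{N\bb}}\}_{i\in[m]})=\rmH_j$, the event $\{\gamma^{\rmT}(Z^n,\tilde{\bY}^N)=\rmH_j\}$ contains the type class of this tuple, and therefore for the specific choice of $\tilde{\bP}$ realizing the minimum in $\mLD_l(\bQ_{\bz^{n\ba}},\{\bQ_{\tilde{\by}_i^{N\bb}}\}_{i\in[m]})=\min_{\tilde{\bP}}\mathrm{LD}_l^{[m]}(\cdots)$, we get
\begin{equation*}
\beta_l(\gamma^{\rmT},\tilde{\bP})\ge \exp\big(-n(\mLD_l(\bQ_{\bz^{n\ba}},\{\bQ_{\tilde{\by}_i^{N\bb}}\}_{i\in[m]})+\delta_n)\big)\ge\exp\big(-n(\lambda-\delta_n')\big)>\exp(-n\lambda),
\end{equation*}
where the $\delta_n$ absorbs the sum-of-log type-class cardinality corrections $\sum_{k}L(\log(na_k+1)+m\log(Nb_k+1))$ and the last strict inequality uses $\delta_n'>0$. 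Since $l\in[m]$, this contradicts \eqref{Eq:type-based beta_j < exp(-n lambda) } applied with index $l$. The only subtlety here is that one must pass the test $\gamma^{\rmT}$ the \emph{true} type class (not the one declared), and note that under hypothesis $\rmH_l$ with $\tilde{\bP}$ chosen so that $Z^n\sim \tilP_lW$-marginals and $\tilde{\bY}_i^N\sim\tilP_iW$-marginals, the mass of the type class is lower-bounded by $\exp(-n\,\mathrm{LD}_l^{[m]}(\cdot,\tilde{\bP})-n\delta_n)$; choosing $\tilde{\bP}=\argmin$ gives the $\mLD_l$ bound. I should also record the elementary fact, as the excerpt does right after \eqref{def:mary calQ}, that the relevant constraint set is non-empty, so the minimum defining $\mLD_l$ is attained on $\calP(\calX)^m$ (compactness), making this argmin legitimate.

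The main obstacle is essentially bookkeeping: correctly identifying, from $\mLD_{i_2}\le\lambda-\delta_n-\delta_n'$, an index $l\neq j$ with $\mLD_l\le\lambda-\delta_n-\delta_n'$ regardless of whether the declared hypothesis $j$ equals $i_1$ or not — this is where the "second smallest" structure of $i_2$ is used, and it is the one genuinely new ingredient relative to Lemma \ref{Lem:type-based test optimal}. After that, the type-counting estimate, the choice of minimizing $\tilde{\bP}$, and the contradiction with \eqref{Eq:type-based beta_j < exp(-n lambda) } are entirely parallel to the binary case, so I would state them briefly and refer back to \eqref{Eq:beta1 gammaT lower begin}--\eqref{Eq: strictly larger}.
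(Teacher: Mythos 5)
Your proposal is correct and follows essentially the same route as the paper's proof: reduce to the pointwise claim that any type tuple with $\mLD_{i_2}\le\lambda-\delta_n-\delta_n'$ must be rejected, then argue by contradiction by picking an index $l$ (with small $\mLD_l$) different from the declared hypothesis, lower-bounding the type-$l$ error probability via the type-class cardinality bound with the minimizing $\tilde{\bP}$, and contradicting the uniform exponential bound \eqref{Eq:type-based beta_j < exp(-n lambda) }. The only differences are cosmetic (index labelling, and the paper writes the minimizing $\tilde{\bP}$ componentwise, which is equivalent to your joint argmin).
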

\begin{proof}
	The lemma is proved by showing that for any type-based test $\gamma^{\rmT}$ satisfying \eqref{Eq:type-based beta_j < exp(-n lambda) }, if any  $(\bT_{\bZ^{n\ba}},\{\bT_{\tilde{\bY}_i^{N\bb}}\}_{i\in[m]})\in \calP_{n\ba+mN\bb}([L])$ satisfies
	\begin{equation}
	\mLD_{i_2}(\bT_{\bz^{n\ba}},\{\bT_{\tilde{\by}_i^{N\bb}}\}_{i\in[m]})+\delta_n\leq \lambda-\delta_n',
	\end{equation}
	then we have $\gamma^{\rmT}(\bT_{\bz^{n\ba}},\{\bT_{\tilde{\by}_i^{N\bb}}\}_{i\in[m]})=\rmH_\rmr$.
	
	To prove this claim, it suffices to show by contradiction that there exists $(\bQ_{\bZ^{n\ba}},\{\bQ_{\tilde{\bY}_i^{N\bb}}\}_{i\in[m]})\in \calP_{n\ba+mN\bb}([L])$ such that i) $\gamma^{\rmT}(\bQ_{\bz^{n\ba}},\{\bQ_{\tilde{\by}_i^{N\bb}}\}_{i\in[m]})=\rmH_k$ for some $k\in[m]$, and ii) there exists $(l,j)\in[m]^2$ such that $l\neq j$ and
	\begin{align}
\mLD_l(\bQ_{\bz^{n\ba}},\{\bQ_{\tilde{\by}_i^{N\bb}}\}_{i\in[m]})+\delta_n &\leq \lambda-\delta_n', \\
	\mLD_j(\bQ_{\bz^{n\ba}},\{\bQ_{\tilde{\by}_i^{N\bb}}\}_{i\in[m]})+\delta_n&\leq \lambda-\delta_n'.
	\end{align}
In the following analysis, fix $j\in[m]$ such that $j\neq k$. We can then lower bound the type-$j$ error probability as follows:
\begin{align}
&\beta_{j}(\gamma^{\rmT},\bP) \nn\\*
&=\bbP_j\{\gamma^{\rmT}(\bT_{\bZ^{n\ba}},\{\bT_{\tilde{\bY}_i^{N\bb}}\}_{i\in[m]})\notin\{\rmH_j,\rmH_\rmr\} \}\\
&\geq \bbP_j\{\gamma^{\rmT}(\bT_{\bZ^{n\ba}},\{\bT_{\tilde{\bY}_i^{N\bb}}\}_{i\in[m]})=\rmH_k \}\\
&= \sum_{\substack{(\bz^{n\ba},\{\tilde{\by}_i^{N\bb}\}_{i\in[m]}):\\
		\gamma^{\rmT}(\bT_{\bz^{n\ba}},\{\bT_{\tilde{\by}_i^{N\bb}}\}_{i\in[m]})=\rmH_k}}\!\!\!\!\!\! \exp\Big\{\sum_{k\in[K]}\Big(\!\!\sum_{i:h(i)=k}\!\!\log(\tilP_jW_k)(z_i) \nn\\*
		&\qquad+\sum_{l\in[m]}\sum_{i:g(i)=k}\!\!\log(\tilP_lW_k)(\tily_{l,i}) \Big) \Big\}\\
&\geq \exp\Big\{-n\Big(\sum_{k\in[K]}a_kD(\bQ_{z^{na_k}} \|\tilP_jW_k) \nn\\*
&\qquad+\sum_{l\in[m]}\sum_{k\in[K]}\alpha b_k D(\bQ_{\tily_l^{Nb_k}}\|\tilP_lW_k)+\delta_n \Big) \Big\}.
\end{align}
If we set
\begin{align}
\tilP_j &: =\argmin_{\barP_j\in\calP(\calX)}\mathrm{LD}(\bQ_{\bz^{n\ba}},\bQ_{\tilde{\by}_j^{N\bb}}, \barP_j,\barP_j|\alpha,\ba,\bb,\calW ),\\
\tilP_l &:=\argmin_{\barP_l\in\calP(\calX)} \sum_{k\in[K]}\alpha b_k D(\bQ_{\tily_l^{Nb_k}}\|\barP_lW_k),\quad \forall\,  l\neq j,
\end{align}
  then we have
\begin{align}
\label{Eq:mary choose P}\beta_{j}(\gamma^{\rmT},\bP)&\geq \exp\big\{\!-n\big(\mLD_{j}(\bQ_{\bz^{n\ba}},\{\bQ_{\tilde{\by}_i^{N\bb}}\}_{i\in[m]})\!+\!\delta_n \big) \big\}\\*
&\geq \exp\{-n(\lambda-\delta_n') \}\\*
\label{Eq:mary contradict exp(-n lambda)}&>\exp\{-n\lambda \},
\end{align}
where \eqref{Eq:mary choose P} follows from the definition of $\mLD_j$ in \eqref{Eq:mary LD}. Thus, the inequality in \eqref{Eq:mary contradict exp(-n lambda)} contradicts the conditions in \eqref{Eq:type-based beta_j < exp(-n lambda) } and the proof of Lemma \ref{Lem:m-ary contradiction} is completed.
\end{proof}


Using Lemmas \ref{Lem:m-ary type based optimal} and \ref{Lem:m-ary contradiction}, we obtain the following corollary, which provides a lower bound on the rejection probability for any test whose error probabilities decay exponentially fast under all hypotheses for all tuples of distributions.

\begin{corollary}\label{Coro:m-ary beta_R lower bound}
	Given any $(\lambda,\alpha)\in\bbR_+^2$, any $(\ba,\bb)$, for any test $\gamma$ satisfying that for all tuples of distributions $\tilde{\bP}\in\calP(\calX)^m$,
	\begin{equation}\label{Eq:any test beta_j < exp(-n lambda) }
	\beta_j(\gamma,\tilde{\bP})\leq \exp\bigg(-n\Big(\lambda+\frac{\log (2m)}{n}\Big)\bigg), \quad\forall j\in[m],
	\end{equation}
	we have that for any $\bP\in\calP(\calX)^m$, 
	\begin{align}
	&\rej(\gamma,\bP) \nn\\*
	&\geq \frac{1}{2} \bbP_j\Big\{\mLD_{i_2}(\bT_{\bZ^{n\ba}},\{\bT_{\tilde{\bY}_i^{N\bb}}\}_{i\in[m]})\!\leq\! \lambda\!-\!\delta_n\!-\!\delta_n' \Big\}.
	\end{align}
\end{corollary}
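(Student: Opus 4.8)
The plan is to deduce Corollary~\ref{Coro:m-ary beta_R lower bound} directly from Lemma~\ref{Lem:m-ary type based optimal} and Lemma~\ref{Lem:m-ary contradiction}, in the same spirit that Corollary~\ref{Coro:beta_2>beta_2(GJS)} is obtained from Lemma~\ref{Lem:extended from Lin} and Lemma~\ref{Lem:type-based test optimal} in the binary case. First I would apply Lemma~\ref{Lem:m-ary type based optimal} with the uniform weights $\eta_1=\cdots=\eta_m=\frac{1}{2m}$, so that $\eta_{\min}=\frac{1}{2m}$ and $\eta_{\mathrm{sum}}=m\cdot\frac{1}{2m}=\frac12$. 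As in the construction used for Lemma~\ref{Lem:extended from Lin}, the resulting type-based test $\gamma^\rmT$ depends only on $\gamma$, the weights, and $(\ba,\bb)$, and not on the underlying distributions; it satisfies, for every $j\in[m]$ and every tuple $\bP\in\calP(\calX)^m$,
\[
\beta_j(\gamma,\bP)\ \ge\ \tfrac{1}{2m}\,\beta_j(\gamma^\rmT,\bP),
\qquad
\rej(\gamma,\bP)\ \ge\ \tfrac12\,\rej(\gamma^\rmT,\bP).
\]

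Next I would verify that $\gamma^\rmT$ meets the hypothesis of Lemma~\ref{Lem:m-ary contradiction}. By the assumption of the corollary, $\beta_j(\gamma,\tilde{\bP})\le \exp\!\big(-n(\lambda+\tfrac{\log(2m)}{n})\big)=\tfrac{1}{2m}\exp(-n\lambda)$ for all $j\in[m]$ and all $\tilde{\bP}\in\calP(\calX)^m$, so the first inequality above yields
\[
\beta_j(\gamma^\rmT,\tilde{\bP})\ \le\ 2m\,\beta_j(\gamma,\tilde{\bP})\ \le\ \exp(-n\lambda),
\qquad \forall\, j\in[m],\ \forall\, \tilde{\bP}\in\calP(\calX)^m .
\]
This is exactly the condition required by Lemma~\ref{Lem:m-ary contradiction}, which therefore gives, for every $\bP\in\calP(\calX)^m$,
\[
\rej(\gamma^\rmT,\bP)\ \ge\ \bbP_j\Big\{\mLD_{i_2}\big(\bT_{\bZ^{n\ba}},\{\bT_{\tilde{\bY}_i^{N\bb}}\}_{i\in[m]}\big)\le \lambda-\delta_n-\delta_n'\Big\}.
\]
Combining this with $\rej(\gamma,\bP)\ge\tfrac12\,\rej(\gamma^\rmT,\bP)$ from the second inequality above produces the claimed bound, completing the argument.

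The derivation is essentially bookkeeping, so there is no deep obstacle. The two points requiring care are: (i) matching the constant $\frac{1}{2m}$ in the hypothesis with $\eta_{\min}$ for the uniform choice of weights (this is precisely why the exponential threshold here is shifted by $\frac{\log(2m)}{n}$, rather than by $\frac{\log 2}{n}$ as in the binary Corollary~\ref{Coro:beta_2>beta_2(GJS)}, since $m$ distinct error events must be absorbed into the rejection region via $\eta_{\mathrm{sum}}=m\eta$); and (ii) ensuring that Lemma~\ref{Lem:m-ary type based optimal} delivers a single type-based test $\gamma^\rmT$ that is valid simultaneously for all $\tilde{\bP}\in\calP(\calX)^m$, so that the $\exp(-n\lambda)$ error bound transfers uniformly and Lemma~\ref{Lem:m-ary contradiction} is applicable.
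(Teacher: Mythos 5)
Your proposal is correct and follows exactly the route the paper takes: Corollary~\ref{Coro:m-ary beta_R lower bound} is obtained by combining Lemma~\ref{Lem:m-ary type based optimal} (with the uniform choice $\eta_1=\cdots=\eta_m=\frac{1}{2m}$, so that $\eta_{\min}=\frac{1}{2m}$ and $1-\eta_{\mathrm{sum}}=\frac12$) with Lemma~\ref{Lem:m-ary contradiction}, mirroring how Corollary~\ref{Coro:beta_2>beta_2(GJS)} follows from Lemmas~\ref{Lem:extended from Lin} and~\ref{Lem:type-based test optimal}. Your observation that the type-based test $\gamma^\rmT$ is constructed independently of the underlying distributions, so the $\exp(-n\lambda)$ bound transfers uniformly over all $\tilde{\bP}$, is precisely the point that makes Lemma~\ref{Lem:m-ary contradiction} applicable, as in the paper.
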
 

Since 
\begin{align}
&\{\mLD_{i_2}\leq \lambda-\delta_n-\delta_n'\} \nn\\*
&=\bigcup_{l\neq i}\{\mLD_i\leq \lambda-\delta_n-\delta_n',\mLD_l\leq \lambda-\delta_n-\delta_n'\},
\end{align}
using Corollary \ref{Coro:m-ary beta_R lower bound}, we have
\begin{align}
&\rej(\gamma,\bP) \nn\\*
&\geq \frac{1}{2}\bbP_j\Big\{\bigcup_{l\neq i}\{\mLD_i\!\leq\! \lambda\!-\!\delta_n\!-\!\delta_n',\mLD_l\!\leq\! \lambda\!-\!\delta_n\!-\!\delta_n'\} \Big\}\\
&\geq \frac{1}{2}\max_{\substack{(i,l)\in[m]^2:\\ l\neq i}}\bbP_j\Big\{\mLD_i\leq \lambda\!-\!\delta_n\!-\!\delta_n',\mLD_l\leq \lambda\!-\!\delta_n\!-\!\delta_n'\Big\}\\
\nn&\geq \exp\bigg\{-n\min_{ \substack{(i,l)\in[m]^2:\\ l\neq i}}\min_{\substack{(\bT_{\bz^{n\ba}},\{\bT_{\tilde{\by}_i^{N\bb}}\}_{i\in[m]}):\\
		\mLD_i\leq \lambda-\delta_n-\delta_n',\\
		\mLD_l\leq \lambda-\delta_n-\delta_n'}}  \nn\\*
		&\quad\Big(\sum_{k\in[K]}a_kD(T_{z^{na_k}} \|\tilP_jW_k) \nn\\*
		&\quad+\sum_{i\in[m]}\sum_{k\in[K]}\alpha b_k D(T_{\tily_i^{Nb_k}}\|\tilP_iW_k)\!+\!\delta_n\!+\!\frac{\log 2}{n} \Big) \bigg\}\label{converse2use}.
\end{align}

Using \eqref{converse2use}, for each $j\in[m]$, given any tuple of distributions $\bP$, the type-$j$ rejection exponent can be upper bounded as follows
\begin{align}\label{Eq:m-ary converse}
&\limsup\limits_{n\to\infty}\frac{1}{n}\log\frac{1}{\rej(\gamma,\bP)} \nn\\*
&\leq\min_{(i,l)\in[m]^2:l\neq i}\min_{\substack{(\bQ,\{\tilde{\bQ}_i\}_{i\in[m]})\\
		\in\calQ_{\lambda,i,l}(\alpha,\ba,\bb,\calW)}}\mathrm{LD}_j^{[m]}(\bQ,\{\tilde{\bQ}_i\}_{i\in[m]},\bP).
\end{align}

\paragraph*{Acknowledgments}
The authors would like to thank Nicolas Gillis  and I-Hsiang Wang for helpful discussions. 

\bibliography{ref}
\bibliographystyle{IEEEtran}

\begin{IEEEbiographynophoto}{Haiyun He} (S'18) is currently a Ph.D.\ student in the Department of Electrical and Computer Engineering (ECE) at the National University of Singapore (NUS). From Sep  2017 to Jul  2018, she was first a Research Assistant in ECE at NUS. She received the B.E.\ degree in Beihang University (BUAA) in 2016 and the M.Sc.\ (Electrical Engineering)\ degree in ECE from NUS in 2017. Her research interests include information theory, statistical learning and their applications.
\end{IEEEbiographynophoto}

\begin{IEEEbiographynophoto}{Lin Zhou} (S'15-M'18) is currently a Research Fellow in the Department of Electrical Engineering and Computer Science at the University of Michigan, Ann Arbor. From Mar  2018 to Dec  2018, he was first a Research Engineer and then a Research Fellow in the Department of Electrical and Computer Engineering (ECE) at the National University of Singapore (NUS). He received the Ph.D.\ degree in ECE from NUS in 2018 and the B.E.\ degree in Information Engineering from Shanghai Jiao Tong University (SJTU) in 2014. His research interests include information theory, statistical inference, machine learning, physical layer security and their applications. 
\end{IEEEbiographynophoto}

\begin{IEEEbiographynophoto}{Vincent Y.\ F.\ Tan} (S'07-M'11-SM'15)  was born in Singapore in 1981. He is currently a Dean's Chair Associate Professor in the Department of Electrical and Computer Engineering  and the Department of Mathematics at the National University of Singapore (NUS). He received the B.A.\ and M.Eng.\ degrees in Electrical and Information Sciences from Cambridge University in 2005 and the Ph.D.\ degree in Electrical Engineering and Computer Science (EECS) from the Massachusetts Institute of Technology (MIT)  in 2011.  His research interests include information theory, machine learning, and statistical signal processing.

Dr.\ Tan received the MIT EECS Jin-Au Kong outstanding doctoral thesis prize in 2011, the NUS Young Investigator Award in 2014,  the Singapore National Research Foundation (NRF) Fellowship (Class of 2018) and the NUS Young Researcher Award in 2019. He was also an IEEE Information Theory Society Distinguished Lecturer during 2018/9. He is currently serving as an Associate Editor of the {\em IEEE Transactions on Signal Processing} and an Associate Editor of Machine Learning for the {\em IEEE Transactions on Information Theory}.
\end{IEEEbiographynophoto}
\end{document}